\newtheorem{theorem}{Theorem}
\newtheorem{acknowledgement}{Acknowledgement}
\newtheorem{claim}{Claim}
\newtheorem{conclusion}{Conclusion}
\newtheorem{condition}{Condition}
\newtheorem{definition}{Definition}
\newtheorem{lemma}{Lemma}
\newenvironment{proof}[1][Proof]{\textbf{#1.} }{\ \rule{0.5em}{0.5em}}
\begin{document}

\title{The Periodic Joint Replenishment Problem is Strongly $\mathcal{NP}$%
-Hard}
\date{}
\author{Cohen Tamar$^{1}$, Yedidsion Liron$^{2}$\thanks{%
Corresponding author. Email - lirony@ie.technion.ac.il} \\
%EndAName
$^{1}$Operation Research Center,\\
MIT- Massachusetts Institute of Technology,\\
MA, USA.\\
$^{2}$Faculty of Industrial Engineering and Management,\\
Technion - Israel Institute of Technology,\\
Haifa, Israel.}
\maketitle

\begin{abstract}
In this paper we study the long-standing open question regarding the
computational complexity of one of the core problems in supply chains
management, the periodic joint replenishment problem. This problem has
received a lot of attention over the years and many heuristic and
approximation algorithms were suggested. However, in spite of the vast
effort, the complexity of the problem remained unresolved. In this paper, we
provide a proof that the problem is indeed strongly $\mathcal{NP}$-hard.
\end{abstract}

\section{Introduction}

Many inventory models are aimed at minimizing ordering and holding costs
while satisfying demand. The Joint Replenishment Problem (JRP) deals with
the prospect of saving resources through coordinated replenishments in order
to achieve substantial cost savings. In this research we study the
complexity of JRP. In the JRP one is required to schedule the replenishment
times of numerous commodities (sometimes called items or products) in order
to supply an external demand per commodity. We refer to the schedule of the
replenishment times as the ordering policy. Each commodity incurs fixed
ordering costs every time it is replenished as well as linear holding costs
that are proportional to the quantity of the commodity held in storage.
Linking all commodities, a joint ordering cost is incurred whenever one or
more commodities are ordered. The objective of JRP is to minimize the sum of
ordering and holding costs. It is a natural extension of the classical
economic lot-sizing model that considers the optimal trade-off between
ordering costs and holding costs for a single commodity. With multiple
commodities, JRP adds the possibility of saving resources via coordinated
replenishments, a common phenomenon in supply chain management. JRP is a
special case of One-Warehouse-N-Retailers problem (OWNR), which deals with a
single warehouse receiving goods from an external supplier and distributing
to multiple retailers. The warehouse could also serve as a storage point.
JRP in particular is a special case of the OWNR with a very high warehouse
holding cost.

There are some distinctions between variations of JRP.

\begin{itemize}
\item Commodity order policy constraints: There are 3 types of order policy
constraints for the JRP. The first model requires a periodic ordering
policy. A periodic ordering policy is one in which for each commodity we
must determine a cycle time. An order will occur at each\ multiple of that
cycle time. We refer to this model as the periodic JRP (PJRP). The second
model does not require a cycle time for each commodity; however it requires
a cyclic ordering policy. We refer to this model as the cyclic JRP (CJRP).
The last model has no limits on the ordering policy. Note that PJRP is a
constrained version of CJRP, which in turn is a constrained version of the
ordering policy JRP. In this research we focus on PJRP.

\item Joint order policy constraints: The joint ordering cost in the PJRP
model is a complicated function of the inter-replenishment times, so it is
often assumed that joint orders are placed periodically, even if some joint
orders are empty, and that the cycle times of the commodities are always a\
multiple of the joint order cycle time. We denote this type of policy as\
the General Integer model (GI), while policies with no joint order
constraints are referred to as\ the General Integer model with Correction
Factor (GICF). The PJRP with GI policy constraint is also referred to as
Strict PJRP. We refer to PJRP with GICF policy constraint as the General
PJRP or simply as PJRP. Note that Strict PJRP is a constrained version of
General PJRP. In this research we focus on the General PJRP.

\item Demand type: Another important distinction is between problems with
stationary demand for each commodity and problems with fluctuating demand.
Note that\ the problem with stationary demand is a special case of the
problem with fluctuating demand. In this research we focus on the set of
problems with stationary demand.

\item Time horizon: The time horizon defines the horizon for which one must
plan an order policy. We distinguish between the problem with infinite
horizon and the problem with finite horizon. Most of the research over the
years focused on JRP with an infinite horizon, specifically when considering
stationary demand. The motivation for considering a finite horizon with
non-stationary demand comes from knowing the demand only for a finite
horizon. This motivation does not apply for stationary demand.

\item Solution integrality: The integrality of the solution determines
whether the ordering policy will be integral or not. Note that the integral
problem is a constrained version of the continuance problem. In this
research we focus on the integral problem.
\end{itemize}

\subsection{Literature review}

As far as we know, the complexity of JRP with stationary demands remained
open for all models. Some papers addressing JRP with constant demands (e.g., 
\cite{LY2003}, \cite{MC2006}) mistakenly cite a result by Arkin et al. \cite%
{ARJ1989}, which proves that JRP with non-stationary demands is $\mathcal{NP}
$-hard. Arkin et al. \cite{ARJ1989} stated that since JRP is a special case
of OWNR, proving JRP hardness also proves the hardness of OWNR. Lately,
Schulz and Telha \cite{ST2011} have proved that finding an optimal
replenishment policy for the stationary PJRP is at least as hard as the
integer factorization problem. When referring to the existence of a
polynomial-time optimal algorithm for the stationary PJRP, Schulz and Telha
also stated that this case remains open. In this paper we show that the PJRP
with stationary demands is strongly $NP$-hard.

\textbf{Strict PJRP.} The problem of Strict PJRP was well covered in the
reviews by Goyal and Satir \cite{GS1989} and Khouja and Goyal \cite{KG2008}.
Many research attempts have been made to find efficient solutions to Strict
PJRP. In the early 1970s, two pioneer studies suggested\ a graphical
heuristic approach \cite{S1971}, \cite{N1973}. At the same time, Goyal \cite%
{G1973} had suggested a non-polynomial lower bound based heuristics to find
the optimal strict cyclic policy, in which the cycle time for each commodity
is the joint replenishment cycle time. Van Eijs \cite{V1993} suggested a
modified version of Goyal's algorithm that involved using a non-strict
cyclic policy.

Based on these studies, many heuristics have been developed to solve the
Strict PJRP. Silver \cite{S1976} developed a heuristic algorithm to find the
joint period cycle time. Following this algorithm many iterative search
heuristics were suggested with different search bounds (Kaspi and Rosenblatt 
\cite{KR1983}, Goyal and Belton \cite{GB1979}, Kaspi and Rosenblatt \cite%
{KR1991}, Goyal and Deshmukh \cite{GD1993}, Viswanathan \cite{V1996}, Fung
and Ma \cite{FM2001} that was later modified by \cite{v2002}, \cite{PD2004}%
). Wildeman et al. \cite{WFD1997} used the idea of the iterative search and
implemented it in a heuristic that converges to an optimal solution. For
certain values of the joint period cycle time they solved Strict PJRP
optimally using a Lipschitz optimization procedure. Another heuristic
approach for the problem was developed by Olsen \cite{O2005}, called
evolutionary algorithm.

Since JRP is a special case of OWNR, results regarding the OWNR hold for JRP
as well. Hence, the following results are applicable for JRP. A prominent
advancement in the study of OWNR, the optimal Power-of-Two policy, was
achieved by Roundy \cite{R1985}. This policy could be computed in $O(n\log
n) $ time. Roundy proved that the cost of the best power-of-two policy can
achieve $98\%$ of an optimal policy ($94\%$ if the base planning period is
fixed). In other words, he suggested a 1.02-approximation (1.064 for the
fixed based planning period) for JRP, where a $\rho $-approximation
algorithm is an algorithm that is polynomial with respect to the number of
elements, and the ratio between the worst case scenario solution and the
optimal solution is bounded by a constant, $\rho $. Note that fixed based
planning period implies an integral model, while the general power-of-two
policy allows a non-integral solution. Based on Roundy's findings, Jackson
et al. \cite{J1985} proposed an efficient algorithm that offers a
replenishment policy in which the cost is within\ a factor of $\sqrt{\frac{9%
}{8}}\approx 1.06$ of the optimal solution. This approximation was later
improved to $\frac{1}{\sqrt{2}\log 2}$ for a non-fixed based planning period 
\cite{mr1993}.

Several studies have been made based on the Power-of-Two policy, including
Lee and Yao \cite{LY2003}, Muckstadt and Roundy \cite{MR1987}, Teo and
Bertsimas \cite{TB2001}. Teo and Bertsimas have also noted in their paper
that finding the optimal lot sizing policies for stationary demand lot
sizing problems is still an open issue.

Lu and Posner \cite{LM1994} presented a fully polynomial time approximation
scheme (FPTAS) for the Strict PJRP model with fixed base. Later, Segev \cite%
{S20012} presented a quasi-polynomial-time approximation scheme (QPTAS),
which shows that the problem is most likely not $\mathcal{APX}$-hard. In
addition, efficient polynomial time approximation scheme (EPTAS) for JRP
with finite time horizon and stationary demand was presented by Nonner and
Sviridenco \cite{NS2013}.

This problem was researched in many other different setups, such as JRP
under resource constraints (Goyal \cite{G1975}, Khouja et al \cite{K2000}
and Moon and Cha \cite{MC2006}), minimum order quantities (Porras and Dekker 
\cite{PD2006a}) and non-stationary holding cost (Levi et al. \cite{LRS2006},
Nonner and Souza \cite{NS2009}, Levi et al. \cite{L2008}).

\textbf{General PJRP.} Porras and Dekker \cite{PD2005} pointed out that
adding the correction factor leads to a completely different problem, at
least in terms of exact solvability. Porras and Dekker \cite{PD2004} show
that changing the model from Strict PJRP to PJRP significantly changes the
joint replenishment cycles and the commodities replenishment cycles. The
difference in solvability is evidenced by the sheer number of decision
variables. In the Strict PJRP all commodities cycle times are simple
functions of the joint replenishment cycle time. Thus there is actually only
a single decision variable. However, this is not the case with the PJRP
where we have $n$ decision variables, one for each commodity. We believe
this to be the main reason for the difference in the amount of research
conducted on Strict PJRP with respect to the PJRP despite the PJRP\ being
more practical. In practice, Strict PJRP is much less common than PJRP as it
involves paying for empty deliveries. Strict PJRP may occur only if there is
a binding contract with a delivery company. Although such a binding contract
may decrease the cost of the joint replenishment significantly, it usually
limits the flexibility of choosing the joint replenishment cycles. Lately,
Schulz and Telha \cite{ST2011} presented a polynomial time approximation
scheme (PTAS) for the PJRP case.

\textbf{Finite horizon.} Several heuristics were designed to deal with the
finite horizon model. Most of the finite time heuristics assume variable
demands and run-in time $\Omega \left( T\right) $\textbf{\ }\cite{LRS2006},%
\cite{J90}. Schulz and Telha \cite{ST2011} presented a polynomial-time $%
\sqrt{9/8}$-approximation algorithm for the JRP with dynamic policies and
finite horizon. As the time horizon $T$ increases, the ratio converges to $%
\sqrt{9/8}$. Schulz and Telha \cite{ST2011} also presented an FPTAS for the
Strict PJRP case with no fixed base and a finite time horizon.

Our paper proceeds as follows: In Section \ref{Model Formulation} we
formulate the problem. In Section \ref{section - infinite} we prove that the
infinite horizon PJRP is strongly $\mathcal{NP}$-hard. In Section \ref%
{section - finite} we show why the finite horizon PJRP is $\mathcal{NP}$%
-hard (not necessarily in the strong sense). Section \ref{section - Summary}
Summarizes the paper and discusses other related open problems.

\section{\label{Model Formulation} Model Formulation}

In this research, we consider the case of an infinite time horizon, and a
system composed of several commodities, for each of which there is an
external stationary demand. The demand has to be satisfied in each period.
Backlogging and lost sales are not allowed. Each commodity incurs a fixed
ordering cost for each period in which an order of the commodity is placed,
as well as a linear inventory holding cost for each period a unit of
commodity remains in storage. In addition, a joint ordering cost is incurred
for each time period where one or more orders are placed. We use the
following notations, where the units are given in square brackets:%
\begin{eqnarray*}
&&N-\text{ Number of commodities in the system }\left[ units\right] \text{,}
\\
&&\lambda _{c}-\text{Demand rate for commodity }c\text{ per period. }\left[ 
\frac{units}{period}\right] \text{,} \\
&&h_{c}-\text{Holding cost for commodity }c\text{ per period. }\left[ \frac{%
\$}{units\cdot period}\right] \text{,} \\
&&K_{c}-\text{Fixed ordering cost for commodity }c\text{ }\left[ \$\right] 
\text{,} \\
&&q_{c}-\text{ Order quantity for commodity }c\text{ }\left[ units\right] 
\text{,} \\
&&K_{0}-\text{Fixed joint ordering cost }\left[ \$\right] \text{.}
\end{eqnarray*}%
The objective is to find an integer ordering cycle time, $t_{c}$, for each
commodity $c$ so as to minimize the periodic sum of ordering and holding
costs of all commodities.

The simple model, in which there is only 1 commodity, is known as the
Economic Order Quantity (EOQ). While examining commodity $c,$ we define its 
\emph{standalone} \emph{problem} as the optimal ordering quantity problem
for a single commodity $c$ with no joint setup cost and an infinite horizon.
The standalone problem is a simple EOQ problem.

The EOQ model assumes without loss of generality that the on hand inventory
at time zero is zero. Shortage is not allowed, so we must place an order at
time zero. The average periodic cost, as a function of the cycle time $%
t_{c}, $ denoted by $g\left( t_{c}\right) $ is given by 
\begin{equation}
g\left( t_{c}\right) =\frac{K_{c}}{t_{c}}+\lambda _{c}h_{c}\frac{t_{c}}{2},
\label{EOQ formula}
\end{equation}%
and the optimal cycle time for $g\left( t_{c}\right) $, denoted by $%
t_{c}^{\ast }$ is%
\begin{equation}
t_{c}^{\ast }=\sqrt{\frac{2K_{c}}{h_{c}\lambda _{c}}}.
\label{EOQ optima solution}
\end{equation}%
In addition, $g\left( t_{c}\right) $ where $t_{c}=\beta t_{c}^{\ast }$\ (for
an arbitrary constant $\beta $) could also be calculated using:%
\begin{equation*}
g\left( \beta t_{c}^{\ast }\right) =\frac{1}{2}\left( \frac{1}{\beta }+\beta
\right) g\left( t_{c}^{\ast }\right) .
\end{equation*}

Accordingly, when debating between two options for a cycle time $\underline{t%
}_{c}$ and $\overline{t}_{c},$ such that $\underline{t}_{c}<t_{c}^{\ast }<%
\overline{t}_{c},$ then based on the standalone total cost our choice would
be:%
\begin{equation}
\begin{array}{cc}
\overline{t}_{c} & \text{if }\sqrt{\underline{t}_{c}\overline{t}_{c}}%
>t^{\ast } \\ 
\underline{t}_{c} & \text{if }\sqrt{\underline{t}_{c}\overline{t}_{c}}\leq
t^{\ast }.%
\end{array}
\label{EOQ rounding ruls}
\end{equation}%
Without loss of generality, throughout this research, we assume that $%
\lambda _{c}=2$ for all commodities.

See full elaboration and additional analysis in \cite{N2001} and \cite{Z2000}%
.

\section{\label{section - infinite}$\mathcal{NP}$-Hardness of the PJRP.}

\subsection{\label{section- reduction}A reduction from \emph{$3SAT$ }to\emph{%
\ PJRP }}

In this section we present a reduction from \emph{$3SAT$} to\ the \emph{PJRP 
}with infinite horizon.\emph{\ }The \emph{$3SAT$} is defined as follows:

\begin{definition}
Given a logical expression, $\varphi ,$ in a CNF form with $m$ clauses and $%
n $ variables, $x_{1},...,x_{n},$ where each clause, $\left( z_{i}\cup
z_{j}\cup z_{s}\right) $ where $z_{i}\in \left\{ x_{i},\overline{x}%
_{i}\right\} ,$ contains exactly 3 literals, is there a feasible assignment
to the variables such that each clause contains at least one true literal?
\end{definition}

The \emph{$3SAT$ }is strongly $\mathcal{NP}$-hard \cite{PY1988}.

In this reduction we use pairs of prime numbers with a difference of at most 
$b$ between them where $b$ is a constant. To even consider such a reduction
we have to make sure that such a set exists for any input size $n$ and that
it can be found in polynomial time. To do so we use the breakthrough proof
by Zhang \cite{Z2013}. Zhang proved that there is an infinite number of $2-$%
tuple$\left( b\right) $, prime pairs with $b\leq 7\cdot 10^{7}$, where $k-$%
tuple$\left( J\right) $ primes are a finite collection of $k=\left\vert
J\right\vert $ values representing a repeatable pattern $J$ of differences
between prime numbers. In addition, Zhang's proof could be used to show that 
$b$ is associated with another constant $\widetilde{b}$ such that there are
at least 
\begin{equation*}
\frac{x}{\log ^{\widetilde{b}}x}
\end{equation*}

$2-$tuple$\left( b\right) $ prime pairs smaller than $x.$ Since Zhang's
proof, attempts were made to decrease the bounds on both $b$ and $\widetilde{%
b}.$ The latest result, attained by the \emph{Polymath8} project \cite%
{PM2015}, sets the bounds $b\leq 256$ and $\widetilde{b}\leq 50.$

An important special case of $2-$tuple$\left( b\right) $ where $b=2$ is twin
primes. That is, twin primes are pairs of consecutive prime numbers with a
difference of exactly $2$ between them. The twin prime conjecture \cite%
{Gu1994} and the first Hardy-Littlewood conjecture \cite{HL1923} maintain
that $b=2$ and $\widetilde{b}=2.$ These values would surely make our proof
simpler. However, for sake of comprehensiveness we use general constants $b$
and $\widetilde{b}.$

In our proof we require a set of $n$ pairs of primes, denoted by $\left( 
\underline{p}_{1},\overline{p}_{1}\right), \left( \underline{p}_{2},%
\overline{p}_{2}\right) ,\ldots ,\left( \underline{p}_{n},\overline{p}%
_{n}\right) $ such that $\underline{p}_{1}<\overline{p}_{1}<\ldots <%
\underline{p}_{n}<\overline{p}_{n}$. We denote the set of primes $\{%
\underline{p}_{1},\overline{p}_{1},\ldots, \underline{p}_{n},\overline{p}%
_{n}\}$ by $\emph{VP}$ and the set of pairs $\left\{ \left( \underline{p}%
_{1},\overline{p}_{1}\right) ,\ldots ,\left( \underline{p}_{n},\overline{p}%
_{n}\right) \right\}$ by $\emph{VP}_{2}$. The primes of $\emph{VP}$ and $%
\emph{VP}_{2}$ have to satisfy the following conditions:

\begin{condition}
\label{Cond 1}The difference between the elements of a pair of consecutive
primes, denoted $b_{i}=\overline{p}_{i}-\underline{p}_{i}$ for $j=1,...,n$,
is not greater than $b.$
\end{condition}

\begin{condition}
\label{Cond 2}$\overline{p}_{n}<B\cdot \underline{p}_{1}$ where $B\geq
\left( 6\widetilde{b}\log n\right) ^{\widetilde{b}}$.
\end{condition}

\begin{condition}
\label{Cond 2.5}$\underline{p}_{1}>n^{6\widetilde{b}}.$
\end{condition}

\begin{condition}
\label{Cond 3}Any multiplication of some prime $p\in \emph{VP}$ does not
fall\ in-between any pair $\left( \underline{p}_{i},\overline{p}_{i}\right)
\in \emph{VP}_{2}.$ That is, 
\begin{equation*}
\nexists p\in \emph{VP,}\left( \underline{p}_{i},\overline{p}_{i}\right) \in 
\emph{VP}_{2},\xi \in 
%TCIMACRO{\U{2115} }%
%BeginExpansion
\mathbb{N}
%EndExpansion
:\underline{p}_{i}<\xi \cdot p<\overline{p}_{i}.
\end{equation*}
\end{condition}

\begin{lemma}
\label{Lemma Polynomial VP}The set $\emph{VP}$ that satisfies Conditions \ref%
{Cond 1}-\ref{Cond 3} could be found in $O\left( n^{6\widetilde{b}+1}\log ^{%
\widetilde{b}}n\right) $ time.\footnote{%
See proof in the Appendix.}
\end{lemma}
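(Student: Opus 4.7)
The plan is to choose the search window for the primes to be $[L,U]$ with $L=n^{6\widetilde{b}}$ and $U=B\cdot L$, where $B=(6\widetilde{b}\log n)^{\widetilde{b}}$. This choice automatically encodes Conditions \ref{Cond 2} and \ref{Cond 2.5}, so the work reduces to producing $n$ prime pairs from this window that meet Conditions \ref{Cond 1} and \ref{Cond 3}. The constructive part is to run a standard sieve of Eratosthenes up to $U$, recording for every integer $m\le U$ its smallest prime factor. Since $U=O(n^{6\widetilde{b}}\log^{\widetilde{b}} n)$, this step costs $O(U\log\log U)$, which is comfortably within the target bound.

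From the sieve I extract the candidate set $S$ of all prime pairs $(p,q)$ with $L\le p<q\le U$ and $q-p\le b$. By the Polymath8 refinement of Zhang's theorem, the number of $2$-tuple$(b)$ prime pairs below $U$ is at least $U/\log^{\widetilde{b}} U$, while the number below $L$ is trivially $O(L)$. Using $\log U=(1+o(1))\log L=(1+o(1))\cdot 6\widetilde{b}\log n$ gives $\log^{\widetilde{b}} U\le (1+o(1))B$, so $|S|\ge (1-o(1))L=\Omega(n^{6\widetilde{b}})$.

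To enforce Condition \ref{Cond 3}, I build an undirected conflict graph $G$ on $S$: two pairs are adjacent when the open interior of one contains a multiple of some endpoint of the other. The key degree estimate is this. Fix $A=(\underline{p},\overline{p})\in S$. The interior of $A$ contains at most $b-1$ integers, and any integer in $[L,U]$ has at most one prime factor that is $\ge L$ (two such factors would exceed $L^{2}>U$). Hence at most $b-1$ primes $q\ge L$ can have a multiple in $A$'s interior, and each such $q$ appears in at most $2b$ candidate pairs, contributing $O(b^{2})$ neighbors. In the other direction, each endpoint of $A$ has at most $B$ multiples in $[L,U]$, and each multiple lies in the interior of at most $b^{2}$ candidate pairs, contributing $O(Bb^{2})$ neighbors. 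Thus $\Delta(G)=O(Bb^{2})$, and a greedy independent-set algorithm returns a subset of $S$ of size at least $|S|/(\Delta+1)=\Omega(L/(Bb^{2}))=\Omega(n^{6\widetilde{b}}/\log^{\widetilde{b}} n)$, which exceeds $n$ for all sufficiently large $n$. Outputting any $n$ pairs from this independent set yields the desired $\emph{VP}_{2}$.

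For complexity, the sieve costs $O(U\log\log U)$, enumerating pair candidates from the list of primes costs $O(U/\log U)$, computing the offending prime of each interior integer via the smallest-prime-factor table costs $O(|S|\cdot b\log U)$, and the greedy phase costs $O(|S|\cdot\Delta)=O(L\cdot Bb^{2})$. Substituting the values of $L$ and $B$ shows every step fits inside $O(n^{6\widetilde{b}+1}\log^{\widetilde{b}} n)$. The main obstacle is the degree analysis for $G$: one needs to exploit the fact that a number in $[L,U]$ admits at most a single prime factor $\ge L$, and that a prime can participate in only $O(b)$ close pairs, to keep $\Delta$ polylogarithmic in $n$; without this the greedy bound would fall short of $n$ and Conditions \ref{Cond 2}--\ref{Cond 2.5} would need to be strengthened further.
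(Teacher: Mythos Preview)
Your approach is essentially the paper's: the same search window $[n^{6\widetilde b},\,B\cdot n^{6\widetilde b}]$ with $B=(6\widetilde b\log n)^{\widetilde b}$, a sieve to list primes, and a greedy selection to enforce Condition~\ref{Cond 3}. The paper argues the greedy step directly---each of the $2n$ chosen primes has at most $B-1$ non-trivial multiples in the window, so at most $2n(B-1)$ of the $\ge 2Bn$ candidate consecutive-prime pairs are blocked---whereas you recast it as an independent-set computation in a conflict graph with maximum degree $O(Bb^{2})$. Your formulation is actually a little more careful, since it handles both directions of the conflict (your Type~1 and Type~2 neighbours) explicitly; the paper's count tacitly addresses only one direction, though the slack in $N_{VP}$ easily absorbs the other.

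Two small imprecisions worth fixing. First, to conclude $|S|\ge (1-o(1))L$ you need the number of close pairs below $L$ to be $o(L)$, not merely ``trivially $O(L)$''; use $\pi(L)=O(L/\log L)$. Second, your adjacency rule does not exclude two candidate pairs that share an endpoint, e.g.\ $(p,q)$ and $(q,r)$, yet such a choice would violate the required strict ordering $\underline{p}_{1}<\overline{p}_{1}<\cdots<\overline{p}_{n}$; declaring endpoint-sharing pairs adjacent raises $\Delta$ by only $O(b)$ and leaves the rest of your analysis intact. Also, in the complexity tally you use $|S|=O(L)$, but the crude bound is $|S|=O(bU/\log U)=O(L\log^{\widetilde b-1}n)$; this still fits comfortably inside $O(n^{6\widetilde b+1}\log^{\widetilde b}n)$.
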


Given an input of \emph{$3SAT$} problem, denoted by $\varphi $, with $n$
variables. We find a set of $n$ pairs of prime numbers that satisfy
Conditions \ref{Cond 1}-\ref{Cond 3}. We associate each pair with a variable
of $\varphi .$ The function $P\left( \cdot \right) $ is defined for each of
the variables and their negations in the CNF expression $\varphi $ as
follows: 
\begin{eqnarray*}
P\left( \overline{x}_{i}\right) &=&\underline{p}_{i} \\
P\left( x_{i}\right) &=&\overline{p}_{i}
\end{eqnarray*}

where $\left( \underline{p}_{i},\overline{p}_{i}\right) $ is the $i-th$
prime pair in $\emph{VP}_{2}$. We also define the set $\emph{PP}$ of all the
prime numbers that are smaller than $\underline{p}_{1}$. In other words, $%
\emph{PP=}\left\{ p:p<\underline{p}_{1},p\text{ is prime}\right\} $. Let us
segment the time horizon into $3$ intervals as showed in Figure \ref{F1}.
The first segment, denoted by \emph{P,} covers the interval $P=\left[ 0,%
\underline{p}_{1}\right) $. The second segment, denoted by \emph{V}, covers
the interval $V=\left[ \underline{p}_{1},\overline{p}_{n}\right] $. The last
segment, denoted by $\emph{R}$, covers the interval $R=\left( \overline{p}%
_{n},\infty \right) $. Note that \emph{PP}$\in $\emph{P} and \emph{VP}$\in $%
\emph{V}$\emph{.}$\emph{\FRAME{ftbpFU}{4.5273in}{1.2081in}{0pt}{\Qcb{Time
horizon segmentation}}{\Qlb{F1}}{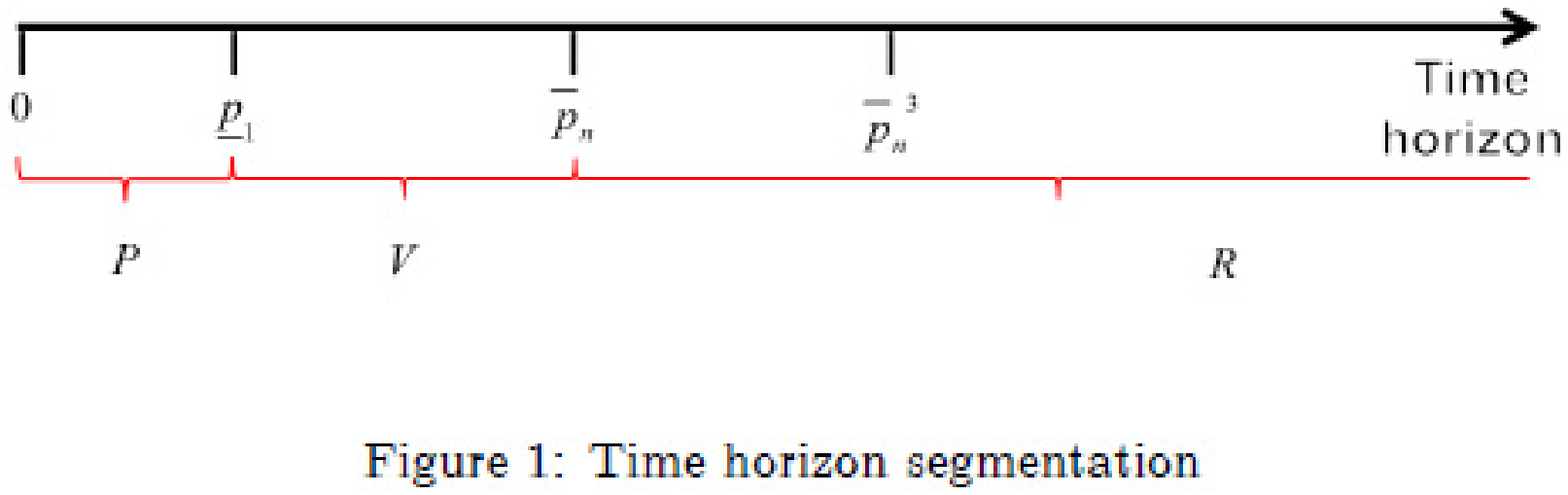}{\special{language
"Scientific Word";type "GRAPHIC";maintain-aspect-ratio TRUE;display
"USEDEF";valid_file "F";width 4.5273in;height 1.2081in;depth
0pt;original-width 17.9656in;original-height 4.7625in;cropleft "0";croptop
"1";cropright "1";cropbottom "0";filename 'time_horizon.eps';file-properties
"XNPEU";}}}

For convenience reasons we define the following quantities:%
\begin{eqnarray}
\alpha _{c} &=&\dprod\nolimits_{p\in \text{\emph{PP}}}\left( \frac{p-1}{p}%
\right)  \label{ac} \\
\alpha _{v} &=&\dprod\nolimits_{p\in \emph{VP}}\left( \frac{p-1}{p}\right)
\label{av} \\
\overline{\alpha}_{v} &=&\dprod\nolimits_{c_{i}^{x}\in \emph{Variable}%
}\left( \frac{\overline{p}_{i}-1}{\overline{p}_{i}}\right)  \label{a-v} \\
\underline{\alpha }_{v} &=&\dprod\nolimits_{c_{i}^{x}\in \emph{Variable}%
}\left( \frac{\underline{p}_{i}-1}{\underline{p}_{i}}\right)  \label{a_v} \\
a_{n} &=&\dprod {}_{\substack{ p_{\left[ j\right] }\in \emph{PP}  \\ j<n}}%
\left( 1-\frac{1}{p_{\left[ j\right] }}\right),  \label{an}
\end{eqnarray}%
where $p_{\left[ j\right]}$ is the $j^{th}$ largest prime number.

According to the time horizon segmentation we define\ the \emph{PJRP }%
instance, denoted by $\Gamma $ with $3$ sets of commodities:

\begin{itemize}
\item The first set, denoted by $\emph{Constants}$, contains commodities
with costs constructed such that their optimal cycle time is identical to
their standalone optimal cycle time, regardless of the cycle time of the
other commodities. The set $\emph{Constants}$\emph{\ }contains commodities
of the form $c_{lm}^{pv}$ for each combination of $p_{l}\in \emph{PP}$ and $%
v_{m}\in \emph{VP}$. The standalone optimal cycle time for a commodity $%
c_{lm}^{pv}$ is 
\begin{equation*}
t_{c_{lm}^{pv}}^{\ast }=p_{l}\cdot v_{m}.
\end{equation*}

The holding cost ($h_{c_{lm}^{pv}}$) and ordering cost ($K_{c_{lm}^{pv}}$)
for each commodity $c_{lm}^{pv}\in \emph{Constants}$ are as follows:%
\begin{eqnarray}
h_{c_{lm}^{pv}} &=&1  \label{DEF const h} \\
K_{c_{lm}^{pv}} &=&\left( t_{c_{lm}^{pv}}^{\ast }\right) ^{2}-\frac{1}{2}.
\label{DEF const K}
\end{eqnarray}

\item The second set, denoted by $\emph{Variables}$\emph{, }contains a
commodity $c_{i}^{x}$ for each variable $x_{i}$. We set the costs so that in
any optimal solution the cycle time of each commodity corresponding to
variable $x_{i}$ is either $\underline{p}_{i}$ or $\overline{p}_{i}$.

The holding cost ($h_{c_{i}^{x}}$) and ordering cost ($K_{c_{i}^{x}}$) for
each commodity $c_{i}^{x}\in $ $\emph{Variables}$ are as follows:%
\begin{eqnarray}
h_{c_{i}^{x}} &=&\alpha _{c}\frac{\underline{p}_{i}^{2}-b_{i}^{2}}{%
\underline{p}_{i}\left( \underline{p}_{i}+\frac{b_{i}}{2}\right) \frac{b_{i}%
}{2}}  \label{DEF vars h} \\
K_{c_{i}^{x}} &=&h_{c_{i}^{x}}\cdot \underline{p}_{i}\left( \underline{p}%
_{i}+b_{i}\right) -\frac{\underline{p}_{i}+b_{i}}{\underline{p}_{i}+b_{i}-1}%
\alpha _{c}\overline{\alpha }_{v}.  \label{DEF vars K}
\end{eqnarray}

\item The third set, denoted by \emph{Clauses}, contains a commodity $%
c_{r}^{\omega }$ for each clause $\omega _{r}=\left( z_{i}\cup z_{j}\cup
z_{s}\right) $. The standalone optimal cycle time for a commodity $%
c_{r}^{\omega }$ is 
\begin{equation}
t_{c_{r}^{\omega }}^{\ast }=P\left( z_{i}\right) \cdot P\left( z_{j}\right)
\cdot P\left( z_{s}\right) .  \label{DEF clauses t*}
\end{equation}

The holding cost ($h_{c_{r}^{\omega }}$) and ordering cost ($%
K_{c_{r}^{\omega }}$) for each commodity $c_{r}^{\omega }\in $\emph{Clauses}
are as follows:%
\begin{eqnarray}
h_{c_{r}^{\omega }} &=&1  \label{DEF clauses h} \\
K_{c_{r}^{\omega }} &=&\left( t_{c_{r}^{\omega }}^{\ast }\right) ^{2}-\frac{1%
}{2}.  \label{DEF clauses K}
\end{eqnarray}
\end{itemize}

We set the joint ordering cost to be: 
\begin{equation}
K_{0}=1.  \label{DEF K_0}
\end{equation}

\subsection{\label{section - optimality}Optimality analysis}

In this section we analyze the characteristics of the optimal solution to $%
\Gamma $. Throughout the remainder of the manuscript we use sensitivity
analysis to determine the optimality of certain cycle times. Due to the
convex nature of the cost function in Eq. (\ref{EOQ formula}) and the
discrete nature of our model, in many of our proofs it is sufficient to use
sensitivity analysis on cycle times that are within $\pm 1$ of the optimal
standalone solution. To simplify our analysis, we define the function $%
\Delta _{c}\left( t_{c},S\right) $ that describes the marginal average
periodic cost associated with commodity $c$'s cycle time, $t_{c}$, and a
solution $S$ to the other commodities in the system. We denote the lower and
upper bounds on $\Delta _{c}\left( t_{c},S\right) $ as $LB\left( \Delta
_{c}\left( t_{c},S\right) \right) $ and $UB\left( \Delta _{c}\left(
t_{c},S\right) \right) $, respectively. We also define $LB\left( \Delta
_{c}\left( t_{c}\right) \right) $ and $UB\left( \Delta _{c}\left(
t_{c}\right) \right) $ as the lower and upper bounds on the marginal average
periodic cost associated with any solution $S$ to the other commodities in
the system and with commodity $c$'s cycle time, $t_{c}$.

In the next subsections we prove that solving $\Gamma $ optimally is
equivalent to solving $\varphi $. In Section \ref{SEC cycle time constants}
we show that the cycle times of the commodities in \emph{Constants} and 
\emph{Clauses}\ are independent of the cycle times of any other commodity in
the problem. In Section \ref{SEC cycle time variables} we show that in any
optimal solution, the cycle time of each commodity $c_{i}^{x}$ $\in $ \emph{%
Variables} is either $\underline{p}_{i}$ or $\overline{p}_{i}.$ A selection
of a cycle time $\underline{p}_{i}$ or $\overline{p}_{i}$ for commodity $%
c_{i}^{x}$ $\in $\emph{Variables} is associated with assigning variable $%
x_{i}$ to either be $false$ or $true$ in $\varphi $, respectively. In
Section \ref{SEC cycle time clauses} we finalize the proof that solving $%
\Gamma $ optimally is equivalent to solving $\varphi $ by showing that in an
optimal solution the cycle times of the commodities in \emph{Variables }%
defines a solution to $\varphi $ if there is one .

\subsubsection{\label{SEC cycle time constants}Cycle time of commodities of
types \emph{Constants }and \emph{Clauses}}

In this section we show that for each commodity $c_{lm}^{pv}$ $\in $ \emph{%
Constants }and for each commodity $c_{r}^{\omega }\in $\emph{Clauses }the
cycle time in an optimal solution is $t_{c_{lm}^{pv}}^{\ast }$ and $%
t_{c_{r}^{\omega }}^{\ast }$, respectively, regardless of the cycle times of
any other commodity in the problem.

For each commodity $c_{lm}^{pv}$ $\in $ \emph{Constants }we define 2 EOQ
problems. In the first EOQ problem, denoted $\theta _{1},$ we define: $%
h_{1}=h_{c_{lm}^{pv}}$ and $K_{1}=K_{c_{lm}^{pv}}$. The solution for this
problem defines a lower bound on the marginal average periodic cost of
commodity $c_{lm}^{pv}$ assuming no joint order costs are necessary.

\begin{lemma}
\label{Lemma consts- LB}The integer optimal solution to $\theta _{1}$ is $%
t_{c_{lm}^{pv}}^{\ast }$.
\end{lemma}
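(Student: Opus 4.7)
Substituting $h_1 = h_{c_{lm}^{pv}} = 1$, $\lambda = 2$, and $K_1 = K_{c_{lm}^{pv}} = (p_l v_m)^2 - \tfrac{1}{2}$ into \eqref{EOQ formula}, the standalone cost function of $\theta_1$ becomes
\begin{equation*}
g(t) \;=\; \frac{(p_l v_m)^2 - \tfrac{1}{2}}{t} \;+\; t.
\end{equation*}
Its unconstrained (continuous) minimizer is $\sqrt{(p_l v_m)^2 - \tfrac{1}{2}}$, which sits strictly in the open interval $(p_l v_m - 1,\, p_l v_m)$: the upper bound is immediate, and the lower bound reduces to $2 p_l v_m > \tfrac{3}{2}$, which is trivial. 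Because $g$ is strictly convex on the positive reals, its integer minimizer must be one of the two bracketing integers $p_l v_m - 1$ or $p_l v_m$.

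To pick between these two candidates, the plan is to compute the cost gap directly:
\begin{equation*}
g(p_l v_m - 1) - g(p_l v_m) \;=\; \frac{(p_l v_m)^2 - \tfrac{1}{2}}{(p_l v_m - 1)\, p_l v_m} - 1 \;=\; \frac{p_l v_m - \tfrac{1}{2}}{(p_l v_m - 1)\, p_l v_m},
\end{equation*}
which is strictly positive since $p_l v_m \geq 2$. Hence $g(p_l v_m) < g(p_l v_m - 1)$, and the integer optimum of $\theta_1$ is exactly $p_l v_m = t_{c_{lm}^{pv}}^{\ast}$, as claimed. Equivalently, one checks that $(p_l v_m - 1) p_l v_m = (p_l v_m)^2 - p_l v_m < (p_l v_m)^2 - \tfrac{1}{2} = (t^{\ast})^2$ and invokes the rounding rule \eqref{EOQ rounding ruls} with $\underline{t}_c = p_l v_m - 1$ and $\overline{t}_c = p_l v_m$.

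There is really no obstacle here: the additive $-\tfrac{1}{2}$ in \eqref{DEF const K} is calibrated precisely so that the continuous EOQ minimizer is pulled just below the target integer $p_l v_m$ while still leaving $p_l v_m$ itself as the best integer choice. I expect the same two-line template to dispatch the analogous claim for the \emph{Clauses} commodities $c_r^{\omega}$, whose costs \eqref{DEF clauses h}--\eqref{DEF clauses K} have exactly the same structural form with $p_l v_m$ replaced by $t_{c_r^{\omega}}^{\ast}$, so the present lemma is essentially a reusable calculation for all of Section \ref{SEC cycle time constants}.
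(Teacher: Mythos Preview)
Your proof is correct and follows essentially the same route as the paper: locate the continuous EOQ minimizer $\sqrt{(t_{c_{lm}^{pv}}^{\ast})^{2}-\tfrac{1}{2}}$ strictly between the integers $t_{c_{lm}^{pv}}^{\ast}-1$ and $t_{c_{lm}^{pv}}^{\ast}$, then compare those two candidates. The paper does the comparison via the rounding rule \eqref{EOQ rounding ruls} (checking $\sqrt{(t_{c_{lm}^{pv}}^{\ast}-1)t_{c_{lm}^{pv}}^{\ast}}<t_{1}^{\ast}$), which is exactly the ``equivalently'' alternative you give at the end; your direct cost-gap computation is simply the same inequality unpacked.
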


\begin{proof}
According to Eq. $\left( \ref{EOQ optima solution}\right) $, the optimal
solution to the continuous $\theta _{1}$ problem, denoted $t_{1}^{\ast },$
is:%
\begin{equation}
t_{1}^{\ast }=\sqrt{\frac{2K_{^{1}}}{2h_{^{1}}}}=\sqrt{\frac{2K_{c_{lm}^{pv}}%
}{2h_{c_{lm}^{pv}}}}=\sqrt{\frac{K_{c_{lm}^{pv}}}{h_{c_{lm}^{pv}}}}.
\label{EQ const- lower bound formula}
\end{equation}%
Substituting for $h_{c_{lm}^{pv}}$ and $K_{c_{lm}^{pv}}$ using Eqs. $\left( %
\ref{DEF const h}\right) $ and $\left( \ref{DEF const K}\right) $ into Eq. $%
\left( \ref{EQ const- lower bound formula}\right) $, we get:%
\begin{equation*}
t_{1}^{\ast }=\sqrt{\frac{K_{c_{lm}^{pv}}}{h_{c_{lm}^{pv}}}}=\sqrt{\left(
t_{c_{lm}^{pv}}^{\ast }\right) ^{2}-\frac{1}{2}}
\end{equation*}%
since $\sqrt{\left( t_{c_{lm}^{pv}}^{\ast }\right) ^{2}-\frac{1}{2}}$ is not
an integer for any $t_{c_{lm}^{pv}}^{\ast }$ that is an integer, the optimal
solution will be defined according to the rounding rules in Eq. $\left( \ref%
{EOQ rounding ruls}\right) .$%
\begin{equation*}
t_{c_{lm}^{pv}}^{\ast }-1<\sqrt{\left( t_{c_{lm}^{pv}}^{\ast }\right) ^{2}-%
\frac{1}{2}}<t_{c_{lm}^{pv}}^{\ast }.
\end{equation*}%
Since%
\begin{equation*}
\sqrt{\left( t_{c_{lm}^{pv}}^{\ast }-1\right) t_{c_{lm}^{pv}}^{\ast }}=\sqrt{%
\left( t_{c_{lm}^{pv}}^{\ast }\right) ^{2}-t_{c_{lm}^{pv}}^{\ast }}<\sqrt{%
\left( t_{c_{lm}^{pv}}^{\ast }\right) ^{2}-\frac{1}{2}}
\end{equation*}%
the rounding rule states that the optimal solution will be $\left\lceil 
\sqrt{\left( t_{c_{lm}^{pv}}^{\ast }\right) ^{2}-\frac{1}{2}}\right\rceil
=t_{c_{lm}^{pv}}^{\ast }.$
\end{proof}

In the second EOQ problem, denoted $\theta_{2},$ we define: $%
h_{2}=h_{c_{lm}^{pv}}$ and $K_{2}=K_{c_{lm}^{pv}}+K_{0}$. That is, we pay $%
K_{0}$ for each order of commodity $c_{lm}^{pv}$. The solution for this
problem defines an upper bound on the marginal average periodic cost of
commodity $c_{lm}^{pv}$.

\begin{lemma}
\label{Lemma consts- UB}The integer optimal solution to $\theta_{2}$ is $%
t_{c_{lm}^{pv}}^{\ast }$.
\end{lemma}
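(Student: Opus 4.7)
The plan is to mirror the argument of Lemma \ref{Lemma consts- LB}, only now with the perturbed ordering cost $K_2 = K_{c_{lm}^{pv}} + K_0 = (t_{c_{lm}^{pv}}^{\ast})^2 + \tfrac{1}{2}$. By Eq.~(\ref{EOQ optima solution}), the continuous optimum of $\theta_2$ is
\begin{equation*}
t_2^{\ast} = \sqrt{\frac{K_2}{h_2}} = \sqrt{(t_{c_{lm}^{pv}}^{\ast})^2 + \tfrac{1}{2}},
\end{equation*}
which is clearly strictly between the consecutive integers $t_{c_{lm}^{pv}}^{\ast}$ and $t_{c_{lm}^{pv}}^{\ast}+1$ and, since $(t_{c_{lm}^{pv}}^{\ast})^2 + \tfrac{1}{2}$ is never a perfect square, it is not an integer.

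The next step is to invoke the rounding rule~(\ref{EOQ rounding ruls}) on the two integer candidates $\underline{t}=t_{c_{lm}^{pv}}^{\ast}$ and $\overline{t}=t_{c_{lm}^{pv}}^{\ast}+1$. Compute the geometric mean
\begin{equation*}
\sqrt{\underline{t}\,\overline{t}} = \sqrt{(t_{c_{lm}^{pv}}^{\ast})^2 + t_{c_{lm}^{pv}}^{\ast}},
\end{equation*}
and observe that because $t_{c_{lm}^{pv}}^{\ast}=p_l\cdot v_m \geq 2$, we have $t_{c_{lm}^{pv}}^{\ast} > \tfrac{1}{2}$ and hence $\sqrt{\underline{t}\,\overline{t}} > t_2^{\ast}$. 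By Eq.~(\ref{EOQ rounding ruls}) this forces the integer optimum to be the smaller candidate, namely $t_{c_{lm}^{pv}}^{\ast}$, completing the proof.

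There is no real obstacle here: the statement is the natural twin of the previous lemma, with the only difference being that the continuous optimum now lies just above $t_{c_{lm}^{pv}}^{\ast}$ rather than just below it, so one must round down instead of up. The only thing to verify carefully is that the shift by $K_0=1$ in the numerator does not push the continuous optimum past the midpoint (in the geometric-mean sense) between $t_{c_{lm}^{pv}}^{\ast}$ and $t_{c_{lm}^{pv}}^{\ast}+1$, which reduces to the trivial inequality $t_{c_{lm}^{pv}}^{\ast} > \tfrac{1}{2}$.
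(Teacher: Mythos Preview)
Your proof is correct and follows essentially the same route as the paper's: compute the continuous optimum $t_2^{\ast}=\sqrt{(t_{c_{lm}^{pv}}^{\ast})^2+\tfrac12}$, observe it lies strictly between $t_{c_{lm}^{pv}}^{\ast}$ and $t_{c_{lm}^{pv}}^{\ast}+1$, compare $t_2^{\ast}$ with the geometric mean $\sqrt{t_{c_{lm}^{pv}}^{\ast}(t_{c_{lm}^{pv}}^{\ast}+1)}$, and round down. The only addition is that you make explicit the trivial inequality $t_{c_{lm}^{pv}}^{\ast}=p_l v_m\geq 2>\tfrac12$ that the paper leaves implicit.
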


\begin{proof}
According to Eq. $\left( \ref{EOQ optima solution}\right) $ the optimal
solution to the continuous $\theta _{2}$ problem, denoted $t_{2}^{\ast },$
is: 
\begin{equation}
t_{2}^{\ast }=\sqrt{\frac{2K_{2}}{2h_{2}}}=\sqrt{\frac{%
2K_{c_{lm}^{pv}}+2K_{0}}{2h_{c_{lm}^{pv}}}}=\sqrt{\frac{K_{c_{lm}^{pv}}+K_{0}%
}{h_{c_{lm}^{pv}}}}.  \label{EQ const- upper bound formula}
\end{equation}%
Substituting for $h_{c_{lm}^{pv}}$, $K_{c_{lm}^{pv}}$ and $K_{0}$ using Eqs. 
$\left( \ref{DEF const h}\right) ,\left( \ref{DEF const K}\right) $, and $%
\left( \ref{DEF K_0}\right) $ into Eq. $\left( \ref{EQ const- upper bound
formula}\right) $ we get:%
\begin{equation*}
t_{2}^{\ast }=\sqrt{\frac{K_{c_{lm}^{pv}}+K_{0}}{h_{c_{lm}^{pv}}}}=\sqrt{%
\left( t_{pv_{l,m}}^{\ast }\right) ^{2}+\frac{1}{2}}
\end{equation*}%
since $\sqrt{\left( t_{pv_{l,m}}^{\ast }\right) ^{2}+\frac{1}{2}}$ is not an
integer for any $t_{pv_{l,m}}^{\ast }$ that is an integer, the optimal
solution will be defined according to the rounding rules in Eq. $\left( \ref%
{EOQ rounding ruls}\right) .$%
\begin{eqnarray*}
t_{c_{lm}^{pv}}^{\ast } &<&\sqrt{\left( t_{c_{lm}^{pv}}^{\ast }\right) ^{2}+%
\frac{1}{2}}<t_{c_{lm}^{pv}}^{\ast }+1 \\
\sqrt{\left( t_{c_{lm}^{pv}}^{\ast }+1\right) t_{c_{lm}^{pv}}^{\ast }} &=&%
\sqrt{\left( t_{c_{lm}^{pv}}^{\ast }\right) ^{2}+t_{c_{lm}^{pv}}^{\ast }}>%
\sqrt{\left( t_{c_{lm}^{pv}}^{\ast }\right) ^{2}+\frac{1}{2}}
\end{eqnarray*}%
therefore according to the rounding rule, the optimal solution will be $%
\left\lfloor \sqrt{\left( t_{c_{lm}^{pv}}^{\ast }\right) ^{2}+\frac{1}{2}}%
\right\rfloor =t_{c_{lm}^{pv}}^{\ast }.$
\end{proof}

\begin{theorem}
\label{Theorem constants cycle time}In any optimal solution to $\Gamma ,$ we
have\emph{\ }$t_{c_{lm}^{pv}}=t_{c_{lm}^{pv}}^{\ast }$ for any $%
c_{lm}^{pv}\in $\emph{Constants.}
\end{theorem}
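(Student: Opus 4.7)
The plan is to combine Lemmas \ref{Lemma consts- LB} and \ref{Lemma consts- UB} via a sandwich on the marginal periodic cost of $c := c_{lm}^{pv}$. For every integer $t$ and every schedule $S_{-c}$ of the other commodities, the joint-ordering contribution that $c$ adds to the total cost lies in $[0,K_0/t]$, so
\[\theta_1(t)\;\le\;\Delta_c(t,S_{-c})\;\le\;\theta_2(t).\]
If $t_c$ is the cycle time chosen for $c$ in an optimal solution, then $\Delta_c(t_c,S_{-c})\le\Delta_c(t_c^*,S_{-c})$ by optimality, and the sandwich immediately yields $\theta_1(t_c)\le\theta_2(t_c^*)$.

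The second step is an algebraic check. Substituting $\theta_2(t_c^*)=2t_c^*+\frac{1}{2t_c^*}$ and $\theta_1(t)=\left((t_c^*)^2-\frac{1}{2}\right)/t+t$, and using the convex EOQ structure exactly as in the two preceding lemma proofs, I would verify the identity
\[\theta_1(t_c^*+k)-\theta_1(t_c^*)\;=\;\frac{k^2-\frac{1}{2}}{t_c^*+k}+\frac{1}{2t_c^*},\]
valid for every integer $k$ with $t_c^*+k>0$. For $k\notin\{0,1\}$ this immediately gives $\theta_1(t_c^*+k)>\theta_2(t_c^*)$, so those values of $t_c$ contradict the sandwich and are ruled out.

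The single remaining case $t_c=t_c^*+1$ is not eliminated by this routine computation: the identity gives $\theta_2(t_c^*)-\theta_1(t_c^*+1)=\frac{1}{2t_c^*(t_c^*+1)}>0$, and closing this tiny gap is the main obstacle. I would handle it by sharpening the upper bound on $\Delta_c(t_c^*,S_{-c})$: because $t_c^*=p_l v_m$ is divisible by primes from both $PP$ and $VP$, and $\Gamma$ contains many other commodities whose cycle times must overlap the multiples of $t_c^*$ (the other Constants and Clauses commodities have EOQ optima that are themselves products of primes in $PP\cup VP$, and the Variables are pinned near the pairs in $VP_2$), any feasible $S_{-c}$ forces a positive density of coincidences with $c$'s orders at multiples of $t_c^*$. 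A quantitative bound on this overlap, leveraging Conditions \ref{Cond 2}--\ref{Cond 3} on the prime pairs, should exceed the $\frac{1}{2t_c^*(t_c^*+1)}$ gap and rule out $t_c=t_c^*+1$, leaving $t_c=t_c^*$ as the only optimum. I expect this number-theoretic tightening of the upper bound at $t_c^*$ — rather than the bare EOQ sandwich — to be the technical core of the argument.
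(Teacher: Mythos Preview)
Your opening sandwich $\theta_1(t)\le\Delta_c(t,S_{-c})\le\theta_2(t)$ is precisely the paper's argument, and in fact the paper's proof stops right there: it simply cites Lemmas~\ref{Lemma consts- LB} and~\ref{Lemma consts- UB}, observes that the two bounding EOQ problems have the \emph{same} integer optimizer $t_{c}^{\ast}$, and concludes. There is no case analysis on $k$, and no ``number-theoretic tightening'' of the upper bound at $t_c^{\ast}$ of the kind you anticipate in your final paragraph --- the paper treats the coincidence of the two optimizers as already decisive.

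Your computation showing that the bare sandwich fails to exclude $t_c=t_c^{\ast}+1$ is correct: indeed $\theta_2(t_c^{\ast})-\theta_1(t_c^{\ast}+1)=\tfrac{1}{2t_c^{\ast}(t_c^{\ast}+1)}>0$, so from $\theta_1(t_c)\le\theta_2(t_c^{\ast})$ alone one cannot rule out $t_c=t_c^{\ast}+1$. The inference ``both bounding problems share an integer minimizer, hence so does every function sandwiched between them'' is not valid in general, and the paper supplies no patch. You have, in effect, located a gap in the paper's own proof.

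That said, your proposed remedy has its own difficulty: it is circular. To sharpen $\Delta_c(t_c^{\ast},S_{-c})$ below $\theta_1(t_c^{\ast}+1)$ via overlap, you need a quantitative lower bound on the density of coincidences between multiples of $t_c^{\ast}$ and the order times of the \emph{other} commodities in $S_{-c}$. But at this point in the argument nothing has been established about those cycle times --- in particular, the other \emph{Constants} commodities are exactly what Theorem~\ref{Theorem constants cycle time} is meant to pin down, and the \emph{Variables} and \emph{Clauses} are handled only in the later Theorems~\ref{Theorem clauses cycle time} and~\ref{Theorem vars}, each of which already invokes Theorem~\ref{Theorem constants cycle time}. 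Without an a~priori constraint on $S_{-c}$, there is no floor on the coincidence density, and the $\tfrac{1}{2t_c^{\ast}(t_c^{\ast}+1)}$ shortfall cannot be recovered along the line you sketch.
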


\begin{proof}
According to Lemmas \ref{Lemma consts- LB} and \ref{Lemma consts- UB} the
solutions of $P_{1}$ and $P_{2}$ that define lower and upper bounds on $%
\Delta _{c_{lm}^{pv}}\left( t_{c_{lm}^{pv}},S\right) $, respectively, are
identical. Therefore, in any optimal solution the cycle time of commodity $%
c_{lm}^{pv}$ is $t_{c_{lm}^{pv}}^{\ast }.$
\end{proof}

Using Theorem \ref{Theorem constants cycle time} we can also learn about the
cycle time of commodity $c_{r}^{\omega }\in $\emph{Clauses }in an optimal
solution to $\Gamma.$

\begin{theorem}
\label{Theorem clauses cycle time}In any optimal solution to $\Gamma ,$ we
have\emph{\ }$t_{c_{r}^{\omega }}=t_{c_{r}^{\omega }}^{\ast }$ for any $%
c_{r}^{\omega }\in $\emph{Clauses.}
\end{theorem}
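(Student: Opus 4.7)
The plan is to mirror the argument used for Theorem~\ref{Theorem constants cycle time} essentially verbatim, exploiting the fact that the cost parameters assigned to clause commodities in Eqs.~(\ref{DEF clauses h})--(\ref{DEF clauses K}) have exactly the same algebraic form as those assigned to constant commodities in Eqs.~(\ref{DEF const h})--(\ref{DEF const K}): unit holding cost together with an ordering cost of the form $(t^{\ast})^{2}-\tfrac{1}{2}$, with only the value of $t^{\ast}$ changing. Since Lemmas~\ref{Lemma consts- LB} and~\ref{Lemma consts- UB} only used these structural features, the same derivation transplants to $c_{r}^{\omega}$.

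Concretely, I would define two EOQ auxiliaries for $c_{r}^{\omega}$: a problem $\theta_{1}'$ with $h = h_{c_{r}^{\omega}} = 1$ and $K = K_{c_{r}^{\omega}}$, and a problem $\theta_{2}'$ with the same holding cost and $K = K_{c_{r}^{\omega}} + K_{0}$. Their continuous optima are $\sqrt{(t_{c_{r}^{\omega}}^{\ast})^{2} - \tfrac{1}{2}}$ and $\sqrt{(t_{c_{r}^{\omega}}^{\ast})^{2} + \tfrac{1}{2}}$, which lie strictly in the open intervals $(t_{c_{r}^{\omega}}^{\ast} - 1,\, t_{c_{r}^{\omega}}^{\ast})$ and $(t_{c_{r}^{\omega}}^{\ast},\, t_{c_{r}^{\omega}}^{\ast} + 1)$, respectively. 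Applying the rounding rule~(\ref{EOQ rounding ruls}) exactly as in Lemmas~\ref{Lemma consts- LB} and~\ref{Lemma consts- UB} rounds both to the integer $t_{c_{r}^{\omega}}^{\ast}$. Thus $\theta_{1}'$ supplies a lower bound on the marginal periodic cost $\Delta_{c_{r}^{\omega}}(t_{c_{r}^{\omega}}, S)$ (no joint cost credited) and $\theta_{2}'$ an upper bound (full $K_{0}$ charged per order); since both bounds are attained at $t_{c_{r}^{\omega}} = t_{c_{r}^{\omega}}^{\ast}$, any optimal cycle time is forced to coincide with $t_{c_{r}^{\omega}}^{\ast}$ regardless of $S$.

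Because the argument is a direct replication, I do not expect any substantive obstacle. The only items worth checking explicitly are that $t_{c_{r}^{\omega}}^{\ast} = P(z_{i}) P(z_{j}) P(z_{s})$ is a positive integer (it is a product of three primes drawn from $\emph{VP}$), so that the rounding rule is applied to well-defined integer neighbors, and that the comparison $\sqrt{(t_{c_{r}^{\omega}}^{\ast})^{2} - \tfrac{1}{2}} > \sqrt{(t_{c_{r}^{\omega}}^{\ast} - 1)\, t_{c_{r}^{\omega}}^{\ast}}$ together with its upper-bound analogue holds; both reduce to $t_{c_{r}^{\omega}}^{\ast} > \tfrac{1}{2}$, which is trivially satisfied. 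The reference to Theorem~\ref{Theorem constants cycle time} in the lead-in sentence is cosmetic: what really powers the argument is the identical LB/UB EOQ construction, not any direct logical appeal to the constants theorem.
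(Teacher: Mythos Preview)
Your proposal is correct and follows essentially the same approach as the paper: the paper's proof simply observes that the cost functions in Eqs.~(\ref{DEF clauses h})--(\ref{DEF clauses K}) have the same form as those in Eqs.~(\ref{DEF const h})--(\ref{DEF const K}), so Theorem~\ref{Theorem constants cycle time} applies verbatim. Your version is a more explicit spelling-out of exactly that observation, with the LB/UB EOQ auxiliaries reconstructed for $c_{r}^{\omega}$ rather than merely cited.
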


\begin{proof}
Since the costs functions of commodity $c_{r}^{\omega }\in $\emph{Clauses }%
in Eqs. $\left( \ref{DEF clauses h}\right) $ and $\left( \ref{DEF clauses K}%
\right) $ are identical to the costs functions of commodity $c_{lm}^{pv}\in $%
\emph{Constants }in Eqs. $\left( \ref{DEF const h}\right) $ and $\left( \ref%
{DEF const K}\right) $, Theorem \ref{Theorem constants cycle time} holds for 
$c_{r}^{\omega }\in $\emph{Clauses }as well.
\end{proof}

\subsubsection{\label{SEC cycle time variables}Cycle time of commodities of
type \emph{Variables}}

In this section we show that for each commodity $c_{i}^{x}$ in \emph{%
Variables }the cycle time in an optimal solution is either $\underline{p}%
_{i} $ or $\overline{p}_{i}$. We denote by $jr\left( t_{c_{i}^{x}},S\right) $
the proportion of periods in which there is an order only of commodity $%
t_{c_{i}^{x}}$. Therefore, $\Delta _{c_{i}^{x}}\left( t_{c_{i}^{x}},S\right) 
$ is given by 
\begin{equation*}
\Delta _{c_{i}^{x}}\left( t_{c_{i}^{x}},S\right) =\frac{K_{c_{i}^{x}}}{%
t_{c_{i}^{x}}}+t_{c_{i}^{x}}h_{c_{i}^{x}}+K_{0}\cdot jr\left(
t_{c_{i}^{x}},S\right) .
\end{equation*}%
In order to analyze $\Delta _{c_{i}^{x}}\left( t_{c_{i}^{x}},S\right) $ we
bound $jr\left( t_{c_{i}^{x}},S\right) $. To do so we have to meticulously
calculate the average periodic marginal addition of joint replenishment cost
when choosing a cycle time of $t_{c_{i}^{x}}.$ As we shall show next, the
values in Eq. $\left( \ref{ac}\right) -\left( \ref{an}\right) $ are
meaningful and where not chosen arbitrarily.

For any prime number $p,$ the proportion of periods that are not\ a
multiplication of $p$ is 
\begin{equation*}
1-\frac{1}{p}=\frac{p-1}{p}.
\end{equation*}%
For any set of prime numbers $A$, the proportion of periods that are not
multiples of any prime number $p\in A$ is 
\begin{equation*}
\dprod\nolimits_{p\in A}\left( \frac{p-1}{p}\right) .
\end{equation*}%
Therefore, the proportion of periods that are not multiples of any prime
number $p_{l}\in \emph{PP}$ is given by:%
\begin{equation*}
\dprod\nolimits_{p\in \text{\emph{PP}}}\left( \frac{p-1}{p}\right) =\alpha
_{c}.
\end{equation*}%
Similarly, $\alpha _{v},$ $\overline{\alpha }_{v}$, $\underline{\alpha }_{v}$
and $\alpha _{n}$ represent the proportion of periods that are not multiples
of any prime number $p\in \emph{VP,}$ $\overline{p}_{i}:c_{i}^{x}\in \emph{%
Variable}$, $\underline{p}_{i}:c_{i}^{x}\in \emph{Variable}$ and $p_{\left[ i%
\right] }:i<n$, respectively.

When calculating $jr\left( t_{c},S\right) $ we may first divide the time
horizon by $t_{c}$, ending up with a new time horizon that represents only
periods where $c$ was actually ordered. Out of this new set of time periods,
denoted by $T_{t_{c}}$, we try to account for the proportion of periods
where $c$ was ordered alone and actually initiated a joint replenishment
that would not have been initiated otherwise. In other words, we are looking
for the proportion of periods not covered by other orders within $T_{t_{c}}.$
The frequency of ordering any two commodities at the same period is actually
the least common denominator of each of their individual frequencies. For
example, if commodities $c_{1}$ and $c_{2}$ have cycle times of $15$ and $9$
periods, respectively, they would be jointly ordered every $45$ periods. If
we were to consider $T_{t_{c_{1}}}$, which divides the time horizon by $%
t_{1}=15,$ then $1/3$ of the periods in $T_{t_{c_{1}}}$ would have already
been covered by $c_{2}$.

According to Theorem \ref{Theorem constants cycle time}, in any optimal
solution to $\Gamma $\emph{\ }$t_{c_{i}^{x}}=t_{c_{i}^{x}}^{\ast }$ for any $%
c_{i}^{x}\in $\emph{Constants}, and therefore we can consider only solutions
in which $t_{c_{i}^{x}}=t_{c_{i}^{x}}^{\ast }$ for any $c_{i}^{x}\in $\emph{%
Constants}. Note that since there is a commodity $c_{i}^{x}\in $\emph{%
Constants }for each combination of $p_{l}\in \emph{PP}$ and $v_{m}\in \emph{%
VP}$, we can assume that at any period that is a multiplication of $%
\underline{p}_{i}$ and $p_{l}\in \emph{PP}$ or a multiplication of $%
\overline{p}_{i}$ and $p_{l}\in \emph{PP}$, there is an order placed due to
the commodities in \emph{Constants}. Therefore $jr\left(
t_{c_{i}^{x}},S\right) $ is not greater than the proportion of periods whose
factors include $t_{c_{i}^{x}}\ $and exclude all prime numbers of set $\emph{%
PP}$. This proportion is given by: 
\begin{equation*}
jr\left( t_{c_{i}^{x}},S\right) <\frac{1}{t_{c_{i}^{x}}}\cdot \alpha _{c}.
\end{equation*}%
Therefore, 
\begin{equation}
UB\left( \Delta _{c_{i}^{x}}\left( t_{c_{i}^{x}}\in \left\{ \underline{p}%
_{i},\overline{p}_{i}\right\} ,S\right) \right) =\frac{K_{c_{i}^{x}}}{%
t_{c_{i}^{x}}}+t_{c_{i}^{x}}h_{c_{i}^{x}}+K_{0}\cdot \frac{1}{t_{c_{i}^{x}}}%
\cdot \alpha _{c}.  \label{EQ vars UB p- or p+}
\end{equation}

The lower bound on this marginal cost for an arbitrary cycle time $%
t_{c_{i}^{x}}$ is given by the solution $S$ in which there is a cycle time $%
t\in S$ such that $\func{mod}(t_{i},t)=0$. In this case $jr\left(
t_{c_{i}^{x}},S\right) =0$ and then the lower bound is given by:%
\begin{equation}
LB\left( \Delta _{c_{i}^{x}}\left( t_{c_{i}^{x}}\right) \right) =\frac{%
K_{c_{i}^{x}}}{t_{c_{i}^{x}}}+t_{c_{i}^{x}}h_{c_{i}^{x}}.
\label{EQ vars LB t!= p- or p+}
\end{equation}%
In order to prove that for each commodity $c_{i}^{x}\in $\emph{Variables }%
the cycle time in an optimal solution is either $\underline{p}_{i}$ or $%
\overline{p}_{i}$, we first show that the optimal solution is bounded by the
range $\left[ \underline{p}_{i},\overline{p}_{i}\right] $.

\begin{claim}
\label{claim vars A p+<p+ +1}For each $c_{i}^{x}$ $\in $\emph{Variables} and
for every solution $S$, $\Delta _{c_{i}^{x}}\left( \overline{p}_{i},S\right)
\leq \Delta _{c_{i}^{x}}\left( \overline{p}_{i}+1,S\right) $
\end{claim}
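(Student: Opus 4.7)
The plan is to prove the stronger uniform bound
\[
UB\bigl(\Delta_{c_i^x}(\overline{p}_i, S)\bigr) \;\le\; LB\bigl(\Delta_{c_i^x}(\overline{p}_i+1, S)\bigr),
\]
which is independent of $S$ and therefore implies the claim. Since $\overline{p}_i \in \emph{VP}$ is prime, Eq.~(\ref{EQ vars UB p- or p+}) supplies $UB(\Delta_{c_i^x}(\overline{p}_i,S)) = K_{c_i^x}/\overline{p}_i + \overline{p}_i h_{c_i^x} + K_0\alpha_c/\overline{p}_i$, the joint-replenishment surcharge coming from the fact that every period which is a multiple of both $\overline{p}_i$ and some $p_l\in \emph{PP}$ is already covered by the \emph{Constants} commodity $c_{l,m}^{pv}$ with $v_m=\overline{p}_i$ (Theorem~\ref{Theorem constants cycle time}). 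For $\overline{p}_i+1$, I would use the universal lower bound in Eq.~(\ref{EQ vars LB t!= p- or p+}), i.e.\ the bare standalone cost $K_{c_i^x}/(\overline{p}_i+1)+(\overline{p}_i+1)h_{c_i^x}$, which is attained by any $S$ containing a cycle time dividing $\overline{p}_i+1$.

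Subtracting these and invoking $K_0=1$, the claim reduces to the elementary inequality
\[
h_{c_i^x} - \frac{K_{c_i^x}}{\overline{p}_i(\overline{p}_i+1)} \;\ge\; \frac{\alpha_c}{\overline{p}_i}.
\]
Substituting the definitions (\ref{DEF vars h})-(\ref{DEF vars K}) and using $\overline{p}_i=\underline{p}_i+b_i$, the left-hand side can be rewritten in closed form as $h_{c_i^x}\cdot\frac{b_i+1}{\overline{p}_i+1} + \frac{\alpha_c\overline{\alpha}_v}{(\overline{p}_i-1)(\overline{p}_i+1)}$, an explicit expression in $\underline{p}_i,b_i,\alpha_c,\overline{\alpha}_v$ that has to be compared with $\alpha_c/\overline{p}_i$.

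Conceptually, the comparison is expected to succeed because $h_{c_i^x}$ and $K_{c_i^x}$ are tuned so that the standalone EOQ optimum $\sqrt{K_{c_i^x}/h_{c_i^x}}$ sits near $\sqrt{\underline{p}_i\overline{p}_i}$ --- strictly below $\overline{p}_i$ --- so by convexity of $g$ the forward difference $g(\overline{p}_i+1)-g(\overline{p}_i)$ is already positive and of order $h_{c_i^x}(b_i+1)/\overline{p}_i$, while the joint-replenishment penalty $\alpha_c/\overline{p}_i$ is smaller by a factor that grows polynomially with $\underline{p}_i/b_i$. The main obstacle I anticipate is the bookkeeping around the correction term $-\frac{\overline{p}_i}{\overline{p}_i-1}\alpha_c\overline{\alpha}_v$ in $K_{c_i^x}$: that term was installed to balance the standalone costs at $\underline{p}_i$ and $\overline{p}_i$ (so that the pairing can later encode a truth assignment), and it slightly erodes the forward difference at $\overline{p}_i$. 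To close the inequality I plan to use Conditions~\ref{Cond 1}-\ref{Cond 2.5}, namely $b_i\le b$ and $\underline{p}_1>n^{6\widetilde{b}}$, which together force $h_{c_i^x}\overline{p}_i \gg \alpha_c$ and leave ample slack.
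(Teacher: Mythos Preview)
Your proposal is correct and follows essentially the same route as the paper's proof: both argue via the uniform inequality $UB(\Delta_{c_i^x}(\overline{p}_i))\le LB(\Delta_{c_i^x}(\overline{p}_i+1))$, use Eqs.~(\ref{EQ vars UB p- or p+}) and~(\ref{EQ vars LB t!= p- or p+}), reduce to the expression $h_{c_i^x}\frac{b_i+1}{\overline{p}_i+1}+\frac{\alpha_c\overline{\alpha}_v}{(\overline{p}_i-1)(\overline{p}_i+1)}-\frac{\alpha_c}{\overline{p}_i}$, and then close the gap using $b_i\le b$ and Condition~\ref{Cond 2.5}. The only difference is cosmetic: the paper substitutes the explicit formula for $h_{c_i^x}$ and simplifies to $\alpha_c\,\frac{\underline{p}_i^{2}-2b_i^{2}-\underline{p}_ib_i-\underline{p}_i}{\underline{p}_i(\underline{p}_i+b_i/2)(\underline{p}_i+b_i+1)}>0$, whereas you stop at the asymptotic observation $h_{c_i^x}(b_i+1)\sim 2\alpha_c(b_i+1)/b_i>\alpha_c$; carrying out that substitution is routine and gives exactly the paper's final inequality.
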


\begin{proof}
We show now that $UB\left( \Delta _{c_{i}^{x}}\left( \overline{p}_{i}\right)
\right) \leq LB\left( \Delta _{c_{i}^{x}}\left( \overline{p}_{i}+1\right)
\right) .$ Using Eqs. $\left( \ref{EQ vars UB p- or p+}\right) $ and $\left( %
\ref{EQ vars LB t!= p- or p+}\right) $ with $t_{c_{i}^{x}}=\overline{p}_{i}$
and $t_{c_{i}^{x}}=\overline{p}_{i}+1$, respectively, we get:%
\begin{eqnarray*}
UB\left( \Delta _{c_{i}^{x}}\left( \overline{p}_{i}\right) \right) &=&\frac{%
K_{c_{i}^{x}}}{\overline{p}_{i}}+\overline{p}_{i}h_{c_{i}^{x}}+K_{0}\cdot 
\frac{1}{\overline{p}_{i}}\cdot \alpha _{c} \\
LB\left( \Delta _{c_{i}^{x}}\left( \overline{p}_{i}+1\right) \right) &=&%
\frac{K_{c_{i}^{x}}}{\left( \overline{p}_{i}+1\right) }+\left( \overline{p}%
_{i}+1\right) h_{c_{i}^{x}}.
\end{eqnarray*}%
Therefore, 
\begin{eqnarray}
&&LB\left( \Delta _{c_{i}^{x}}\left( \overline{p}_{i}+1\right) \right)
-UB\left( \Delta _{c_{i}^{x}}\left( \overline{p}_{i}\right) \right)  \notag
\\
&=&\frac{K_{c_{i}^{x}}}{\left( \overline{p}_{i}+1\right) }+\left( \overline{p%
}_{i}+1\right) h_{c_{i}^{x}}-\frac{K_{c_{i}^{x}}}{\overline{p}_{i}}-%
\overline{p}_{i}h_{c_{i}^{x}}-K_{0}\cdot \frac{1}{\overline{p}_{i}}\cdot
\alpha _{c}  \notag \\
&=&h_{c_{i}^{x}}-\frac{K_{c_{i}^{x}}}{\overline{p}_{i}\left( \overline{p}%
_{i}+1\right) }-K_{0}\cdot \frac{1}{\overline{p}_{i}}\cdot \alpha _{c}.
\label{EQ LB(p+ +1) -UB(p+) with p+}
\end{eqnarray}%
Substituting for $\overline{p}_{i}=\underline{p}_{i}+b_{i}$ (see Condition %
\ref{Cond 1}) into Eq. $\left( \ref{EQ LB(p+ +1) -UB(p+) with p+}\right) $
we get:%
\begin{eqnarray}
&&LB\left( \Delta _{c_{i}^{x}}\left( \overline{p}_{i}+1\right) \right)
-UB\left( \Delta _{c_{i}^{x}}\left( \overline{p}_{i}\right) \right)  \notag
\\
&=&h_{c_{i}^{x}}-\frac{K_{c_{i}^{x}}}{\left( \underline{p}_{i}+b_{i}\right)
\left( \underline{p}_{i}+b_{i}+1\right) }-K_{0}\cdot \frac{1}{\left( 
\underline{p}_{i}+b_{i}\right) }\cdot \alpha _{c}.
\label{EQ LB(p+ +1) -UB(p+) before h and k}
\end{eqnarray}%
Substituting for $h_{c_{i}^{x}},$ $K_{c_{i}^{x}}$, and $K_{0}$ using Eqs. $%
\left( \ref{DEF vars h}\right) $, $\left( \ref{DEF vars K}\right) $, and $%
\left( \ref{DEF K_0}\right) $ into Eq. $\left( \ref{EQ LB(p+ +1) -UB(p+)
before h and k}\right) $ we get:%
\begin{eqnarray*}
&&LB\left( \Delta _{c_{i}^{x}}\left( \overline{p}_{i}+1\right) \right)
-UB\left( \Delta _{c_{i}^{x}}\left( \overline{p}_{i}\right) \right) \\
&=&h_{c_{i}^{x}}-\frac{h_{c_{i}^{x}}\cdot \underline{p}_{i}\left( \underline{%
p}_{i}+b_{i}\right) }{\left( \underline{p}_{i}+b_{i}\right) \left( 
\underline{p}_{i}+b_{i}+1\right) }+\frac{\frac{\underline{p}_{i}+b_{i}}{%
\underline{p}_{i}+b_{i}-1}\alpha _{c}\overline{\alpha }_{v}}{\left( 
\underline{p}_{i}+b_{i}\right) \left( \underline{p}_{i}+b_{i}+1\right) }-%
\frac{1}{\left( \underline{p}_{i}+b_{i}\right) }\cdot \alpha _{c} \\
&=&h_{c_{i}^{x}}\frac{b_{i}+1}{\left( \underline{p}_{i}+b_{i}+1\right) }+%
\frac{\alpha _{c}\overline{\alpha }_{v}}{\left( \underline{p}%
_{i}+b_{i}-1\right) \left( \underline{p}_{i}+b_{i}+1\right) }-\frac{1}{%
\left( \underline{p}_{i}+b_{i}\right) }\cdot \alpha _{c} \\
&=&\alpha _{c}\left( \frac{\left( \underline{p}_{i}^{2}-b_{i}^{2}\right)
\left( b_{i}+1\right) }{\underline{p}_{i}\left( \underline{p}_{i}+\frac{b_{i}%
}{2}\right) \frac{b_{i}}{2}\left( \underline{p}_{i}+b_{i}+1\right) }+\frac{%
\overline{\alpha }_{v}}{\left( \underline{p}_{i}+b_{i}-1\right) \left( 
\underline{p}_{i}+b_{i}+1\right) }-\frac{1}{\left( \underline{p}%
_{i}+b_{i}\right) }\right) \\
&>&\alpha _{c}\left( \frac{2\left( \underline{p}_{i}^{2}-b_{i}^{2}\right) }{%
\underline{p}_{i}\left( \underline{p}_{i}+\frac{b_{i}}{2}\right) \left( 
\underline{p}_{i}+b_{i}+1\right) }-\frac{1}{\left( \underline{p}_{i}+\frac{%
b_{i}}{2}\right) }\right) \\
&=&\alpha _{c}\frac{\underline{p}_{i}^{2}-2b_{i}^{2}-\underline{p}_{i}b_{i}-%
\underline{p}_{i}}{\underline{p}_{i}\left( \underline{p}_{i}+\frac{b_{i}}{2}%
\right) \left( \underline{p}_{i}+b_{i}+1\right) }.
\end{eqnarray*}%
$\underline{p}_{i}$ is bounded below by $n^{6\widetilde{b}}$ (See Condition %
\ref{Cond 2.5}); thus, for an input $n>2$ the numerator is positive even for
the upper bound of $256$ on $b$ (\cite{PM2015}) and a lower bound of $2$ on $%
\widetilde{b};$ thus, for any permissible $\underline{p}_{i}:$%
\begin{equation*}
LB\left( \Delta _{c_{i}^{x}}\left( \overline{p}_{i}+1\right) \right)
-UB\left( \Delta _{c_{i}^{x}}\left( \overline{p}_{i}\right) \right) >0
\end{equation*}
\end{proof}

\begin{claim}
\label{claim vars A p-<p- -1}For each $c_{i}^{x}$ $\in $\emph{Variables} and
for every solution $S$, $\Delta _{c_{i}^{x}}\left( \overline{p}_{i},S\right)
\leq \Delta _{c_{i}^{x}}\left( \underline{p}_{i}-1,S\right) $
\end{claim}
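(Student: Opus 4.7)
The plan is to follow the same envelope strategy used in the proof of Claim~\ref{claim vars A p+<p+ +1}, but now applied on the lower side of the interval $[\underline{p}_i,\overline{p}_i]$. Specifically, I would prove the solution-independent inequality
\[
UB\bigl(\Delta_{c_i^x}(\overline{p}_i)\bigr)\;\le\;LB\bigl(\Delta_{c_i^x}(\underline{p}_i-1)\bigr),
\]
which instantly yields the claim: for every $S$ one has $\Delta_{c_i^x}(\overline{p}_i,S)\le UB(\Delta_{c_i^x}(\overline{p}_i))\le LB(\Delta_{c_i^x}(\underline{p}_i-1))\le \Delta_{c_i^x}(\underline{p}_i-1,S)$. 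The upper bound on the left is already supplied by Eq.~(\ref{EQ vars UB p- or p+}); the lower bound on the right is given by Eq.~(\ref{EQ vars LB t!= p- or p+}), which is a valid lower bound for every $S$ since $jr(t_{c_i^x},S)\ge 0$ in general, irrespective of whether some cycle in $S$ happens to divide $\underline{p}_i-1$.

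Next, I would form the difference $LB(\Delta_{c_i^x}(\underline{p}_i-1))-UB(\Delta_{c_i^x}(\overline{p}_i))$ and simplify in the same manner as Claim~\ref{claim vars A p+<p+ +1}: first combine the two $K_{c_i^x}/t$ contributions using $\overline{p}_i-(\underline{p}_i-1)=b_i+1$ to pull out the common factor $(b_i+1)/((\underline{p}_i-1)\overline{p}_i)$; then substitute the closed-form expressions for $K_{c_i^x}$, $h_{c_i^x}$, and $K_0$ from Eqs.~(\ref{DEF vars h})--(\ref{DEF vars K}) and~(\ref{DEF K_0}); and finally expand $\underline{p}_i^2-b_i^2=(\underline{p}_i-b_i)(\underline{p}_i+b_i)$ in the definition of $h_{c_i^x}$ so that all fractions can be brought over a single common denominator. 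The outcome should be $\alpha_c$ times a rational function of $(\underline{p}_i,b_i)$ whose numerator is a polynomial in $\underline{p}_i$ with a strictly positive leading $\underline{p}_i^2$-coefficient.

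The concluding step is a positivity check on that polynomial. Intuitively, the holding-cost gap between $\underline{p}_i-1$ and $\overline{p}_i$ contributes a dominant term of order $2(b_i+1)\alpha_c/(b_i\,\underline{p}_i)$, which is strictly larger than the joint-cost penalty $\alpha_c/\overline{p}_i\sim \alpha_c/\underline{p}_i$ as soon as $b_i\ge 2$; the $\overline{\alpha}_v$-correction coming from the second term of $K_{c_i^x}$ is of lower order $O(\alpha_c/\underline{p}_i^2)$ and is therefore negligible. The subleading negative terms are absorbed exactly as in Claim~\ref{claim vars A p+<p+ +1} by Condition~\ref{Cond 2.5}: since $\underline{p}_i>n^{6\widetilde{b}}$, $b\le 256$, and $\widetilde{b}\ge 2$ by~\cite{PM2015}, positivity of the numerator holds for every admissible input $n>2$.

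I expect the main obstacle to be purely algebraic rather than conceptual: the step $b_i+1$ is substantially larger than the unit step used in Claim~\ref{claim vars A p+<p+ +1}, so the cancellations between the holding-cost, setup-cost, and joint-cost pieces are less transparent, and care must be taken to keep track of signs so that the $\overline{\alpha}_v$-correction ends up on the favorable side of the inequality. Beyond that, no new ingredient is required: the envelope bounds $UB$ and $LB$ together with the conservative growth bound on $\underline{p}_i$ should close the estimate.
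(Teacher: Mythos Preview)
Your plan is sound and would yield a valid proof, but it takes a different route from the paper. The paper does \emph{not} repeat a direct computation of $LB(\Delta_{c_i^x}(\underline{p}_i-1))-UB(\Delta_{c_i^x}(\overline{p}_i))$. Instead it observes that both $LB(\Delta_{c_i^x}(\overline{p}_i+1))$ and $LB(\Delta_{c_i^x}(\underline{p}_i-1))$ are standalone EOQ costs, and then invokes the geometric-mean rounding rule of Eq.~(\ref{EOQ rounding ruls}): it checks that $t_{c_i^x}^{\ast}=\sqrt{K_{c_i^x}/h_{c_i^x}}>\sqrt{(\overline{p}_i+1)(\underline{p}_i-1)}$, which immediately gives $g(\overline{p}_i+1)<g(\underline{p}_i-1)$, i.e.\ $LB(\Delta_{c_i^x}(\overline{p}_i+1))\le LB(\Delta_{c_i^x}(\underline{p}_i-1))$. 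It then chains this with Claim~\ref{claim vars A p+<p+ +1}, which already established $UB(\Delta_{c_i^x}(\overline{p}_i))\le LB(\Delta_{c_i^x}(\overline{p}_i+1))$, to conclude. This route is shorter because the heavy algebra has already been done once in Claim~\ref{claim vars A p+<p+ +1}; only the inequality $t_{c_i^x}^{\ast}>\sqrt{(\overline{p}_i+1)(\underline{p}_i-1)}$ needs a brief verification. Your direct approach works too---after substituting $K_{c_i^x}$ one indeed gets a leading positive contribution $(b_i+1)h_{c_i^x}/(\underline{p}_i-1)\sim 2(b_i+1)\alpha_c/(b_i\underline{p}_i)$ that dominates the penalty $\alpha_c/\overline{p}_i$---but note that the $\overline{\alpha}_v$-correction here carries the \emph{unfavorable} sign (it is $-\,(b_i+1)\alpha_c\overline{\alpha}_v/\bigl((\underline{p}_i-1)(\overline{p}_i-1)\bigr)$), contrary to what you suggest; it is harmless only because it is $O(\alpha_c/\underline{p}_i^{2})$.
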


\begin{proof}
We show now that $LB\left( \Delta _{c_{i}^{x}}\left( \underline{p}%
_{i}+b_{i}+1\right) \right) \leq LB\left( \Delta _{c_{i}^{x}}\left( 
\underline{p}_{i}-1\right) \right) $. Note that $LB\left( \Delta
_{c_{i}^{x}}\left( \underline{p}_{i}+b_{i}+1\right) \right) $ and $LB\left(
\Delta _{c_{i}^{x}}\left( \underline{p}_{i}-1\right) \right) $ are in fact
the standalone costs $g\left( \underline{p}_{i}+b_{i}+1\right) $ and $%
g\left( \underline{p}_{i}-1\right) ,$\ respectively. According to the
rounding rules in Eq. $\left( \ref{EOQ rounding ruls}\right) :$ $g\left( 
\underline{p}_{i}+b_{i}+1\right) <g\left( \underline{p}_{i}-1\right) $ if $%
\underline{p}_{i}-1<t_{c_{i}^{x}}^{\ast }<\underline{p}_{i}+b_{i}+1$ and 
\begin{equation*}
\sqrt{\left( \underline{p}_{i}+b_{i}+1\right) \left( \underline{p}%
_{i}-1\right) }<t_{c_{i}^{x}}^{\ast }.
\end{equation*}%
The optimal solution to the standalone problem, $t_{c_{i}^{x}}^{\ast }$ is
given by: 
\begin{eqnarray*}
\sqrt{\frac{K_{c_{i}^{x}}}{h_{c_{i}^{x}}}} &=&\sqrt{\underline{p}_{i}\left( 
\underline{p}_{i}+b_{i}\right) -\frac{\frac{\underline{p}_{i}+b_{i}}{%
\underline{p}_{i}+b_{i}-1}\alpha _{c}\overline{\alpha }_{v}}{h_{c_{i}^{x}}}}=%
\sqrt{\underline{p}_{i}\left( \underline{p}_{i}+b_{i}\right) -\frac{\frac{%
\underline{p}_{i}+b_{i}}{\underline{p}_{i}+b_{i}-1}\overline{\alpha }_{v}}{%
\left( \frac{\underline{p}_{i}^{2}-b_{i}^{2}}{\underline{p}_{i}\left( 
\underline{p}_{i}+\frac{b_{i}}{2}\right) \frac{b_{i}}{2}}\right) }} \\
&>&\sqrt{\underline{p}_{i}\left( \underline{p}_{i}+b_{i}\right) -\frac{%
\underline{p}_{i}\left( \underline{p}_{i}+\frac{b_{i}}{2}\right) \frac{b_{i}%
}{2}}{\left( \underline{p}_{i}+b_{i}-1\right) \left( \underline{p}%
_{i}-b_{i}\right) }} \\
&>&\sqrt{\underline{p}_{i}\left( \underline{p}_{i}+b_{i}\right) -\frac{b_{i}%
\underline{p}_{i}}{2\left( \underline{p}_{i}-b_{i}\right) }}>\sqrt{%
\underline{p}_{i}^{2}+b_{i}\underline{p}_{i}-b_{i}} \\
&>&\sqrt{\underline{p}_{i}^{2}+b_{i}\underline{p}_{i}-b_{i}-1}=\sqrt{\left( 
\underline{p}_{i}+b_{i}+1\right) \left( \underline{p}_{i}-1\right) }.
\end{eqnarray*}%
Therefore, $LB\left( \Delta _{c_{i}^{x}}\left( \underline{p}%
_{i}+b_{i}+1\right) \right) \leq LB\left( \Delta _{c_{i}^{x}}\left( 
\underline{p}_{i}-1\right) \right) $. According to Claim \ref{claim vars A
p+<p+ +1} $\Delta _{c_{i}^{x}}\left( \overline{p}_{i},S\right) \leq \Delta
_{c_{i}^{x}}\left( \overline{p}_{i}+1,S\right) $; hence, $\Delta
_{c_{i}^{x}}\left( \overline{p}_{i},S\right) \leq \Delta _{c_{i}^{x}}\left( 
\overline{p}_{i}+1,S\right) \leq \Delta _{c_{i}^{x}}\left( \underline{p}%
_{i}-1,S\right) $.
\end{proof}

\begin{theorem}
\label{Theorem vars- bounded range of solution}In any optimal solution to $%
\Gamma ,$\emph{\ }$t_{c_{i}^{x}}\in \left[ \underline{p}_{i},\overline{p}_{i}%
\right] $ for any $c_{i}^{x}\in $ \emph{Variables.}
\end{theorem}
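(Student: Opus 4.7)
The plan is to promote Claims \ref{claim vars A p+<p+ +1} and \ref{claim vars A p-<p- -1}, which compare $\overline{p}_i$ only to its two immediate integer neighbours outside $[\underline{p}_i,\overline{p}_i]$, into comparisons against every integer cycle time outside that interval. The lever is the observation that $LB(\Delta_{c_i^x}(t))$ is exactly the standalone EOQ cost $g(t)=K_{c_i^x}/t+h_{c_i^x}t$, which is strictly convex in $t$ with unique minimiser $t^{\ast}_{c_i^x}=\sqrt{K_{c_i^x}/h_{c_i^x}}$.

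First I would pin down $t^{\ast}_{c_i^x}\in(\underline{p}_i-1,\overline{p}_i+1)$. The upper estimate is immediate from Eq.~(\ref{DEF vars K}): the negative correction subtracted inside $K_{c_i^x}$ gives $K_{c_i^x}/h_{c_i^x}<\underline{p}_i(\underline{p}_i+b_i)=\underline{p}_i\overline{p}_i<\overline{p}_i^{\,2}$. The lower estimate is essentially the bound already carried out inside the proof of Claim \ref{claim vars A p-<p- -1}, which yields $t^{\ast}_{c_i^x}>\sqrt{(\overline{p}_i+1)(\underline{p}_i-1)}\ge\underline{p}_i-1$. Thus $g$ is strictly increasing on $[\overline{p}_i+1,\infty)$ and strictly decreasing on $(0,\underline{p}_i-1]$.

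Next, for any integer $t\ge\overline{p}_i+1$, convex monotonicity gives $g(t)\ge g(\overline{p}_i+1)=LB(\Delta_{c_i^x}(\overline{p}_i+1))$, while the proof of Claim \ref{claim vars A p+<p+ +1} in fact establishes the strict inequality $LB(\Delta_{c_i^x}(\overline{p}_i+1))>UB(\Delta_{c_i^x}(\overline{p}_i))$. Chaining,
\[
\Delta_{c_i^x}(t,S)\;\ge\;g(t)\;\ge\;g(\overline{p}_i+1)\;>\;UB(\Delta_{c_i^x}(\overline{p}_i))\;\ge\;\Delta_{c_i^x}(\overline{p}_i,S)
\]
for every solution $S$ to the other commodities, ruling out all $t>\overline{p}_i$ as optimal choices for $t_{c_i^x}$. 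A symmetric argument on the decreasing branch of $g$, together with the chain $UB(\Delta_{c_i^x}(\overline{p}_i))\le LB(\Delta_{c_i^x}(\overline{p}_i+1))\le LB(\Delta_{c_i^x}(\underline{p}_i-1))$ already isolated inside the proof of Claim \ref{claim vars A p-<p- -1}, rules out every integer $t<\underline{p}_i$.

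The only technical hurdle is the localisation of $t^{\ast}_{c_i^x}$ inside $(\underline{p}_i-1,\overline{p}_i+1)$; if $t^{\ast}_{c_i^x}$ sat outside this window the convexity shortcut would fail on one side. Fortunately the costs $h_{c_i^x}$ and $K_{c_i^x}$ in Eqs.~(\ref{DEF vars h})--(\ref{DEF vars K}) were calibrated precisely so that $t^{\ast}_{c_i^x}$ lies strictly between $\underline{p}_i$ and $\overline{p}_i$, and the needed algebra is already buried in the earlier claims, so what remains is essentially a bookkeeping assembly.
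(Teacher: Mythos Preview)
Your proposal is correct and follows essentially the same approach as the paper: both invoke Claims \ref{claim vars A p+<p+ +1} and \ref{claim vars A p-<p- -1} for the boundary comparisons and then use the convexity of the standalone cost $g(t)=LB(\Delta_{c_i^x}(t))$ together with the location of $t^{\ast}_{c_i^x}$ between $\underline{p}_i$ and $\overline{p}_i$ to extend those comparisons to all integers outside $[\underline{p}_i,\overline{p}_i]$. You are simply more explicit than the paper about verifying the position of $t^{\ast}_{c_i^x}$ and spelling out the monotonicity chain, but the argument is the same.
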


\begin{proof}
According to Claims $\ref{claim vars A p+<p+ +1}$ and $\ref{claim vars A
p-<p- -1}$ for each variable $c_{i}^{x}\in $\emph{Variable}, $UB\left(
\Delta _{c_{i}^{x}}\left( \overline{p}_{i}\right) \right) \leq LB\left(
\Delta _{c_{i}^{x}}\left( \overline{p}_{i}+1\right) \right) \leq LB\left(
\Delta _{c_{i}^{x}}\left( \underline{p}_{i}-1\right) \right) $. Due to the
convex nature of the cost function and since $\underline{p}%
_{i}<t_{c_{i}^{x}}^{\ast }<\overline{p}_{i}$, $LB\left( \Delta
_{c_{i}^{x}}\left( \overline{p}_{i}+1\right) \right) $ is a lower bound on
any solution $t_{c_{i}^{x}}\notin \left[ \underline{p}_{i},\overline{p}_{i}%
\right] $. Therefore in any optimal solution to $\Gamma ,$\emph{\ }$%
t_{c_{i}^{x}}\in \left[ \underline{p}_{i},\overline{p}_{i}\right] .$
\end{proof}

Accordingly, we consider only\emph{\ }$t_{c_{i}^{x}}\in \left\{ \underline{p}%
_{i}+y:0\leq y\leq b_{i}\right\}$. In the next claim we prove that any
solution \emph{\ }$t_{c_{i}^{x}}\not\in \left\{ \underline{p}_{i},\overline{p%
}_{i}\right\}$ is not the optimal solution. Let us assume that $%
t_{c_{i}^{x}}\in \left\{ \underline{p}_{i}+y:0<y<b_{i}\right\}$ and find a
new lower bound for the solution. Note that $t_{c_{i}^{x}}$ might not be a
prime number.

Let us calculate the lower bound for $jr\left( t_{c_{i}^{x}},S\right) .$
According to Condition \ref{Cond 3} none of the factors of $t_{c_{i}^{x}}$
belong to $\emph{VP}$. However, the factorials of $t_{c_{i}^{x}}$ may
include primes $p\in \emph{PP}$. If that happens, Theorem \ref{Theorem
constants cycle time} states that at each period that is a\ multiple of a
prime number $p\in \emph{VP}$ and $t_{c_{j}^{x}}$, there is an order of
another commodity $c_{lm}^{pv}$ $\in $ \emph{Constants}. Thus, as a lower
bound of $jr\left( t_{c_{i}^{x}},S\right) ,$ $\frac{1}{\underline{p}_{i}+y}%
\left( 1-\alpha _{v}\right) $ of the periods may already be covered. Of the
remaining $\frac{1}{\underline{p}_{i}+y}\alpha _{v}$ periods there might be
periods covered by some other $c_{j}^{x}$ $\in $ \emph{Variables }sharing
the same factorials with $t_{c_{i}^{x}}$. Any two commodities $%
c_{i}^{x},c_{j}^{x}$ $\in $ \emph{Variables} with $t_{c_{i}^{x}}$ and $%
t_{c_{j}^{x}}$ that are not primes, might share some common prime factors as
well as a unique multiplier (might not be a prime one) of each one. Hence,
we may represent their respective cycle times by $t_{c_{i}^{x}}=\eta \mu
_{i} $ and $t_{c_{j}^{x}}=\eta \mu _{j}$, where $\mu _{i}$ and $\mu _{j}$
represent the unique elements of $t_{c_{i}^{x}}$, and $t_{c_{j}^{x}}$ and $%
\eta $ and represent their common factors.

Calculating $jr\left( t_{c_{i}^{x}},S\right) $, each period in $%
T_{t_{c_{i}^{x}}}$ that is a\ multiple of $\mu _{j}$ is covered by $%
c_{j}^{x} $. Since there are $n$ commodities of type $c_{i}^{x}$ $\in $\emph{%
Variables,} there are at most $n-1$ such unique $\mu _{j}$ elements. The
more common factors these elements share and the smaller they are the more
time periods they will cover in $T_{t_{c_{i}^{x}}}.$ Accordingly, a lower
bound for $jr\left( \underline{p}_{i}+y,S\right) $ considers the $n$
smallest primes as $\mu _{j}$ values. 
\begin{equation*}
jr\left( t_{c_{i}^{x}},S\right) >\frac{1}{t_{c_{i}^{x}}}\left( \alpha
_{v}\dprod {}_{\substack{ q_{\left[ j\right] }\in \emph{PP}  \\ j<n}}\left(
1-\frac{1}{q_{\left[ j\right] }}\right) \right) =\frac{1}{t_{c_{i}^{x}}}%
\left( \alpha _{v}a_{n}\right) .
\end{equation*}%
Hence, assuming the optimal cycle time according to Eq. $\left( \ref{EOQ
optima solution}\right) $: 
\begin{equation}
LB\left( \Delta _{c_{i}^{x}}\left( t_{c_{i}^{x}}\notin \left\{ \underline{p}%
_{i},\overline{p}_{i}\right\} \right) \right) =\frac{K_{c_{i}^{x}}+K_{0}%
\left( \alpha _{v}a_{n}\right) }{t_{c_{i}^{x}}}+h_{c_{i}^{x}}\left(
t_{c_{i}^{x}}\right) .  \label{EQ LB M tight}
\end{equation}

In the next Lemmas we show that for each $c_{i}^{x}$ $\in A$ and for every
solution $S$ the optimal $t_{c_{i}}\ $is either $\underline{p}_{i}$ or $%
\overline{p}_{i}.$

\begin{claim}
\label{Claim vars x+ better then M}For each $c_{i}^{x}$ $\in $\emph{Variables%
} and for every solution $S$, $\Delta _{c_{i}^{x}}\left( \underline{p}%
_{i},S\right) <\Delta _{c_{i}^{x}}\left( \underline{p}_{i}+y,S\right)$ for $%
0<y<b_{i}.$
\end{claim}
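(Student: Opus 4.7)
The plan is to prove the stronger inequality $UB\bigl(\Delta_{c_i^x}(\underline{p}_i)\bigr) < LB\bigl(\Delta_{c_i^x}(\underline{p}_i+y)\bigr)$, since by definition of these bounds this implies the claim for every solution $S$. Subtracting the explicit formulas in Eq.~(\ref{EQ vars UB p- or p+}) and Eq.~(\ref{EQ LB M tight}) and using $K_0 = 1$ from Eq.~(\ref{DEF K_0}), the difference decomposes cleanly as
\[
\bigl[g(\underline{p}_i+y)-g(\underline{p}_i)\bigr] + \left[\frac{\alpha_v a_n}{\underline{p}_i+y}-\frac{\alpha_c}{\underline{p}_i}\right],
\]
where $g(t)=K_{c_i^x}/t+h_{c_i^x}\,t$ is the standalone EOQ cost. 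I would treat the two brackets separately, aiming to show that the joint-replenishment bracket is positive of order $1/\underline{p}_i$ while the EOQ bracket is at most $O(\alpha_c b_i/\underline{p}_i)$ in magnitude.

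For the EOQ bracket I would use convexity of $g$ together with the identity $g(\underline{p}_i)-g(t^*)=h_{c_i^x}(t^*-\underline{p}_i)^2/\underline{p}_i$, where $t^*=\sqrt{K_{c_i^x}/h_{c_i^x}}$. A short calculation from Eqs.~(\ref{DEF vars h})--(\ref{DEF vars K}) shows $t^{*2}<\underline{p}_i\overline{p}_i$, so $t^*-\underline{p}_i<b_i/2$ via $\sqrt{1+x}-1\leq x/2$. Convexity then yields $g(\underline{p}_i+y)-g(\underline{p}_i)\geq -h_{c_i^x}b_i^2/(4\underline{p}_i)$, which after substituting Eq.~(\ref{DEF vars h}) is $-\Theta(\alpha_c b_i/\underline{p}_i)$ under Condition~\ref{Cond 2.5}.

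For the joint-replenishment bracket I would factor out $1/\underline{p}_i$ and use $\underline{p}_i+y\leq\underline{p}_i+b_i$ to obtain the lower bound $\underline{p}_i^{-1}\bigl[\alpha_v a_n(1-b_i/\underline{p}_i)-\alpha_c\bigr]$. By Condition~\ref{Cond 2.5} every prime in $\emph{VP}$ exceeds $n^{6\widetilde{b}}$, and a Prime-Number-Theorem estimate gives the same for the $(n-1)$-th-largest prime below $\underline{p}_1$; hence each of the at most $3n$ factors $(1-1/p)$ in $\alpha_v a_n$ is $\geq 1-n^{-6\widetilde{b}}$, and Bernoulli's inequality yields $\alpha_v a_n\geq 1-3n^{1-6\widetilde{b}}=1-o(1)$. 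On the other hand, Mertens' theorem together with Condition~\ref{Cond 2.5} gives $\alpha_c=O(1/\log\underline{p}_1)=O(1/\log n)$. The joint-replenishment bracket is therefore $(1-o(1))/\underline{p}_i$, which for $n$ beyond a constant threshold exceeds the EOQ penalty and makes $LB-UB$ strictly positive.

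The main obstacle will be turning these asymptotic estimates into fully quantitative inequalities valid for every admissible pair of Polymath8 constants $b\leq 256,\ \widetilde{b}\leq 50$ \cite{PM2015} and uniformly in $y\in\{1,\dots,b_i-1\}$. In particular, the exponent $6\widetilde{b}$ built into Condition~\ref{Cond 2.5} must provide enough slack to simultaneously bound $\alpha_c$ above via Mertens and $\alpha_v a_n$ below via Bernoulli, and the estimate $t^*-\underline{p}_i<b_i/2$ must be sharp enough that the EOQ penalty is strictly dominated by the constant-order JRP gain rather than merely matched in order.
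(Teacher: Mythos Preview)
Your proposal contains a genuine gap: you have misread the quantity $a_n$. Despite the paper's unfortunate phrase ``$j$-th largest prime number,'' the definition in Eq.~(\ref{an}) and every subsequent use (in particular the bound $p_{[3n]}>0.91\cdot 3n\ln 3n$ quoted from \cite{R1996}, and the cancellation ``$\alpha_c$ and $a_n$ share the first $n$ elements'') make clear that $p_{[j]}$ is the $j$-th prime counting from the smallest. Hence
\[
a_n=\Bigl(1-\tfrac12\Bigr)\Bigl(1-\tfrac13\Bigr)\Bigl(1-\tfrac15\Bigr)\cdots\Bigl(1-\tfrac{1}{p_{[n-1]}}\Bigr)=\Theta\!\left(\frac{1}{\log n}\right)
\]
by Mertens, not $1-o(1)$. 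Your Bernoulli estimate ``each of the at most $3n$ factors in $\alpha_v a_n$ is $\ge 1-n^{-6\widetilde b}$'' is therefore false: the first factor of $a_n$ is $1/2$.

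This collapses your order-separation argument. Both $\alpha_v a_n$ and $\alpha_c$ are $\Theta(1/\log n)$, so the joint-replenishment bracket $\frac{\alpha_v a_n}{\underline p_i+y}-\frac{\alpha_c}{\underline p_i}$ is of order $\alpha_c/\underline p_i$, exactly the same order as your (correctly derived) EOQ penalty $-\Theta(\alpha_c b_i/\underline p_i)$. The question is then whether the constant in the JR bracket beats the constant $\approx b_i/2$ in the EOQ bracket, and this is precisely the delicate comparison $\alpha_v>\frac{\alpha_c}{a_n}\bigl(\tfrac{b_i}{2}+2\bigr)$ that the paper extracts by direct substitution. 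The paper settles it by observing that $\alpha_v$ is a product of $2n$ factors with primes in $\emph{VP}$, each larger than the corresponding factor in $\prod_{n\le j\le 3n}(1-1/p_{[j]})$, so that after cancelling these $2n$ factors from $\alpha_c/a_n$ one is left with $\prod_{3n<j,\,p_{[j]}\le n^{6\widetilde b}}(1-1/p_{[j]})$, which Mertens shows is below $1/130$ for $n\ge 64$. Your convexity bound on the EOQ bracket is sharp enough that the same Mertens comparison would close the argument, but the crucial step---recognising that the contest is between $\alpha_v$ and $\alpha_c/a_n$ with comparable constants, not between $1$ and $o(1)$---is missing from your proposal.
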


\begin{proof}
We show now that $UB\left( \Delta _{c_{i}^{x}}\left( \underline{p}%
_{i}\right) \right) <LB\left( \Delta _{c_{i}^{x}}\left( \underline{p}%
_{i}+y\right) \right).$ Using Eq. $\left( \ref{EQ LB M tight}\right)$ with $%
t_{c_{i}^{x}}=\underline{p}_{i}+y$ and Eq. $\left( \ref{EQ vars UB p- or p+}%
\right)$ we get:%
\begin{eqnarray}
&&LB\left( \Delta_{c_{i}^{x}}\left( \underline{p}_{i}+y\right) \right)
-UB\left( \Delta_{c_{i}^{x}}\left( \underline{p}_{i}\right) \right)  \notag
\\
&=&\frac{K_{c_{i}^{x}}+K_{0}\left( \alpha _{v}a_{n}\right) }{\underline{p}%
_{i}+y}+h_{c_{i}^{x}}\left( \underline{p}_{i}+y\right) -\left( \frac{%
K_{c_{i}^{x}}}{\underline{p}_{i}}+\underline{p}_{i}h_{c_{i}^{x}}+K_{0}\cdot 
\frac{\alpha _{c}}{\underline{p}_{i}}\right)  \notag \\
&=&\frac{-yK_{c_{i}^{x}}}{\underline{p}_{i}\left( \underline{p}_{i}+y\right) 
}+yh_{c_{i}^{x}}+\frac{\alpha_{v}a_{n}}{\underline{p}_{i}+y}-\frac{\alpha
_{c}}{\underline{p}_{i}}  \label{EQ LB(p+ -1) -UB(p+) with p+}
\end{eqnarray}%
Substituting for $h_{c_{i}^{x}},$ $K_{c_{i}^{x}}$ and $K_{0}$ using Eqs. $%
\left( \ref{DEF vars h}\right)$, $\left( \ref{DEF vars K}\right)$ and $%
\left( \ref{DEF K_0}\right) $ into Eq. $\left( \ref{EQ LB(p+ -1) -UB(p+)
with p+}\right)$ we get:%
\begin{eqnarray*}
&&LB\left( \Delta _{c_{i}^{x}}\left( \underline{p}_{i}+y\right) \right)
-UB\left( \Delta _{c_{i}^{x}}\left( \underline{p}_{i}\right) \right) \\
&=&\frac{-yh_{c_{i}^{x}}\cdot \underline{p}_{i}\left( \underline{p}%
_{i}+b_{i}\right) +y\frac{\underline{p}_{i}+b_{i}}{\underline{p}_{i}+b_{i}-1}%
\alpha _{c}\overline{\alpha }_{v}}{\underline{p}_{i}\left( \underline{p}%
_{i}+y\right) }+yh_{c_{i}^{x}}+\frac{\alpha _{v}a_{n}}{\underline{p}_{i}+y}-%
\frac{\alpha _{c}}{\underline{p}_{i}} \\
&=&-yh_{c_{i}^{x}}\frac{b_{i}-y}{\underline{p}_{i}+y}+\frac{y\left( 
\underline{p}_{i}+b_{i}\right) \alpha_{c}\overline{\alpha }_{v}}{\underline{p%
}_{i}\left( \underline{p}_{i}+y\right) \left( \underline{p}%
_{i}+b_{i}-1\right) }+\frac{\alpha _{v}a_{n}}{\underline{p}_{i}+y}-\frac{%
\alpha _{c}}{\underline{p}_{i}} \\
&=&-\alpha _{c}y\frac{\underline{p}_{i}^{2}-b_{i}^{2}}{\underline{p}%
_{i}\left( \underline{p}_{i}+\frac{b_{i}}{2}\right) \frac{b_{i}}{2}}\frac{%
b_{i}-y}{\underline{p}_{i}+y}+\frac{y\left( \underline{p}_{i}+b_{i}\right)
\alpha _{c}\overline{\alpha }_{v}}{\underline{p}_{i}\left( \underline{p}%
_{i}+y\right) \left( \underline{p}_{i}+b_{i}-1\right) }+\frac{\alpha
_{v}a_{n}}{\underline{p}_{i}+y}-\frac{\alpha _{c}}{\underline{p}_{i}} \\
&=&\frac{-\alpha _{c}}{\underline{p}_{i}+y}\left( y\frac{\underline{p}%
_{i}^{2}-b_{i}^{2}}{\underline{p}_{i}\left( \underline{p}_{i}+\frac{b_{i}}{2}%
\right) \frac{b_{i}}{2}}\left( b_{i}-y\right) -\frac{y\left( \underline{p}%
_{i}+b_{i}\right) \overline{\alpha }_{v}}{\underline{p}_{i}\left( \underline{%
p}_{i}+b_{i}-1\right) }+\frac{\left( \underline{p}_{i}+y\right) }{\underline{%
p}_{i}}\right) +\frac{\alpha _{v}a_{n}}{\underline{p}_{i}+y} \\
&>&\frac{-\alpha _{c}}{\underline{p}_{i}+y}\left( \frac{2y\left(
b_{i}-y\right) }{b_{i}}+2\right) +\frac{\alpha _{v}a_{n}}{\underline{p}_{i}+y%
}=\frac{a_{n}}{\underline{p}_{i}+y}\left( \alpha _{v}-2\frac{\alpha _{c}}{%
a_{n}}\left( \frac{y\left( b_{i}-y\right) }{b_{i}}+1\right) \right) .
\end{eqnarray*}%
The value $y\left( b_{i}-y\right) \ $is maximized when $y=0.5b_{i};$ thus:%
\begin{equation}
LB\left( \Delta_{c_{i}^{x}}\left( \underline{p}_{i}+y\right) \right)
-UB\left( \Delta_{c_{i}^{x}}\left( \underline{p}_{i}\right) \right) >\frac{%
a_{n}}{\underline{p}_{i}+y}\left( \alpha_{v}-\frac{\alpha _{c}}{a_{n}}\left( 
\frac{b_{i}}{2}+2\right) \right)  \label{Eq. y=0.5b}
\end{equation}

According to Ribenboim \cite{R1996} and Condition \ref{Cond 2.5} there are
at least 
\begin{equation*}
\frac{0.91\cdot n^{6\widetilde{b}}}{6\widetilde{b}\log n}>3n
\end{equation*}%
prime numbers in $\emph{PP.}$ Moreover, $\alpha _{c}$ and $a_{n}$ share the
first $n$ elements of their respective multiples. Therefore, we can cancel
out these $n$ elements and explicitly write $\frac{\alpha _{c}}{a_{n}}$ in
Eq. $\left( \ref{Eq. y=0.5b}\right) $ as follows:%
\begin{eqnarray*}
&&LB\left( \Delta _{c_{i}^{x}}\left( \underline{p}_{i}+y\right) \right)
-UB\left( \Delta _{c_{i}^{x}}\left( \underline{p}_{i}\right) \right) \\
&>&\frac{a_{n}}{\underline{p}_{i}+1}\left( \alpha _{v}-\left( \frac{b_{i}}{2}%
+2\right) \prod_{_{\substack{ j\geq n  \\ p_{\left[ j\right] }\leq n^{6%
\widetilde{b}}}}}\left( \frac{p_{\left[ j\right] }-1}{p_{\left[ j\right] }}%
\right) \right) \\
&=&\frac{a_{n}}{\underline{p}_{i}+1}\left( \alpha _{v}-\left( \frac{b_{i}}{2}%
+2\right) \prod_{n\leq j\leq 3n}\left( \frac{p_{\left[ j\right] }-1}{p_{%
\left[ j\right] }}\right) \prod_{_{\substack{ j\geq 3n  \\ p_{\left[ j\right]
}\leq n^{6\widetilde{b}}}}}\left( \frac{p_{\left[ j\right] }-1}{p_{\left[ j%
\right] }}\right) \right) .
\end{eqnarray*}

Note that both $\alpha _{v}$ and $\dprod\nolimits_{n\leq j\leq 3n}\left( 
\frac{p_{\left[ j\right] }-1}{p_{\left[ j\right] }}\right) $ are multiples
of $2n$ elements where each element in $\alpha _{v}$ is bigger than each
element in $\dprod\nolimits_{n\leq j\leq 3n}\left( \frac{p_{\left[ j\right]
}-1}{p_{\left[ j\right] }}\right) $ and therefore $\alpha
_{v}>\dprod\nolimits_{n\leq j\leq 3n}\left( \frac{p_{\left[ j\right] }-1}{p_{%
\left[ j\right] }}\right) $. Hence,%
\begin{eqnarray*}
&&LB\left( \Delta _{c_{i}^{x}}\left( \underline{p}_{i}+y\right) \right)
-UB\left( \Delta _{c_{i}^{x}}\left( \underline{p}_{i}\right) \right) \\
&>&\frac{a_{n}\alpha _{v}}{\underline{p}_{i}+1}\left( 1-\left( \frac{b_{i}}{2%
}+2\right) \dprod\nolimits_{\substack{ 3n<j  \\ p_{\left[ j\right] }\leq n^{6%
\widetilde{b}}}}\left( \frac{p_{\left[ j\right] }-1}{p_{\left[ j\right] }}%
\right) \right) .
\end{eqnarray*}

Using the upper bound of $256$ on $b_{i}$ (see \cite{PM2015}) we have: 
\begin{eqnarray}
&&LB\left( \Delta_{c_{i}^{x}}\left( \underline{p}_{i}+y\right) \right)
-UB\left( \Delta_{c_{i}^{x}}\left( \underline{p}_{i}\right) \right)  \notag
\\
&>&\frac{a_{n}\alpha _{v}}{\underline{p}_{i}+1}\left( 1-130\dprod\nolimits 
_{\substack{ 3n<j  \\ p_{\left[ j\right] }\leq n^{6\widetilde{b}}}}\left( 
\frac{p_{\left[ j\right] }-1}{p_{\left[ j\right] }}\right) \right).
\label{EQ vars M- before bounding p_3n}
\end{eqnarray}

In order to lower bound $p_{\left[ 3n\right] }$ we use the bound presented
in \cite{R1996}:%
\begin{equation*}
p_{\left[ 3n\right] }>0.91\cdot 3n\ln \left( 3n\right).
\end{equation*}%
Substituting this bound into Eq. $\left( \ref{EQ vars M- before bounding
p_3n}\right)$ we get: 
\begin{eqnarray}
&&LB\left( \Delta_{c_{i}^{x}}\left( \underline{p}_{i}+y\right) \right)
-UB\left( \Delta_{c_{i}^{x}}\left( \underline{p}_{i}\right) \right)  \notag
\\
&>&\frac{a_{n}\alpha_{v}}{\underline{p}_{i}+1}\left(
1-130\dprod\nolimits_{0.91\cdot 3n\ln 3n<p_{\left[ j\right] }\leq n^{6%
\widetilde{b}}}\left( \frac{p_{\left[ j\right] }-1}{p_{\left[ j\right] }}%
\right) \right).  \label{EQ vars M- before mertens}
\end{eqnarray}

According to Merten's theorems \cite{M1874}:%
\begin{equation*}
\dprod\nolimits_{j\leq G}\left( \frac{p_{\left[ j\right] }-1}{p_{\left[ j%
\right] }}\right) =\frac{e^{-\gamma \rho \left( G\right) }}{\ln \left(
G\right) },
\end{equation*}%
where $0<\rho \left( G\right) <\frac{4}{\ln \left( G+1\right) }+\frac{2}{%
G\ln \left( G\right) }+\frac{1}{2G}$ and $\gamma $ is the Euler--Mascheroni
constant. Hence, 
\begin{equation*}
\dprod\nolimits_{G^{\prime }<j\leq G}\left( \frac{p_{\left[ j\right] }-1}{p_{%
\left[ j\right] }}\right) =\frac{\frac{e^{-\gamma \rho \left( G\right) }}{%
\ln \left( G\right) }}{\frac{e^{-\gamma \rho \left( G^{\prime }\right) }}{%
\ln \left( G^{\prime }\right) }}=\frac{\ln \left( G^{\prime }\right)
e^{\gamma \left( \rho \left( G^{\prime }\right) -\rho \left( G\right)
\right) }}{\ln \left( G\right) }.
\end{equation*}%
Therefore, the bound on $\dprod\nolimits_{0.91\cdot 3n\ln 3n<p_{\left[ j%
\right] }\leq n^{6\widetilde{b}}}\left( \frac{p_{\left[ j\right] }-1}{p_{%
\left[ j\right] }}\right) $ is given by: 
\begin{eqnarray}
&&\dprod\nolimits_{0.91\cdot 3n\ln 3n<p_{\left[ j\right] }\leq n^{6%
\widetilde{b}}}\left( \frac{p_{\left[ j\right] }-1}{p_{\left[ j\right] }}%
\right)  \notag \\
&=&\frac{\ln \left( 0.91\cdot 3n\ln 3n\right) e^{\gamma \left( \rho \left(
0.91\cdot 3n\ln 3n\right) -\rho \left( n^{6\widetilde{b}}\right) \right) }}{%
\ln \left( n^{6\widetilde{b}}\right) }  \notag \\
&<&\frac{\ln \left( 0.91\cdot 3n\ln 3n\right) }{6\widetilde{b}\ln n}%
e^{\gamma \left( \frac{4}{\ln \left( 0.91\cdot 3n\ln 3n+1\right) }+\frac{2}{%
0.91\cdot 3n\ln 3n\ln \left( 0.91\cdot 3n\ln 3n\right) }+\frac{1}{2\cdot
0.91\cdot 3n\ln 3n}\right) }.  \label{EQ vars M- bounding martens}
\end{eqnarray}%
The function in Eq. $\left( \ref{EQ vars M- bounding martens}\right) $ is a\
multiple of 2 positive non-increasing functions of $n>1$:\newline
Functions%
\begin{equation*}
\frac{\ln \left( 0.91\cdot 3n\ln 3n\right) }{6\widetilde{b}\ln n}
\end{equation*}%
and 
\begin{equation*}
e^{\gamma \left( \frac{4}{\ln \left( 0.91\cdot 3n\ln 3n+1\right) }+\frac{2}{%
0.91\cdot 3n\ln 3n\ln \left( 0.91\cdot 3n\ln 3n\right) }+\frac{1}{2\cdot
0.91\cdot 3n\ln 3n}\right) }.
\end{equation*}%
Therefore, the function in Eq. $\left( \ref{EQ vars M- bounding martens}%
\right) $ is\ a non-increasing function of $n$ for $n>1$. For $n=64$ the
function in $\left( \ref{EQ vars M- bounding martens}\right) $ is smaller
than $\frac{1}{130}$ and therefore, for any $n\geq 64$ we\ can substitute
the upper bound of $\frac{1}{130}$ into Eq. $\left( \ref{EQ vars M- before
mertens}\right) $:%
\begin{eqnarray*}
&&LB\left( \Delta _{c_{i}^{x}}\left( \underline{p}_{i}+y\right) \right)
-UB\left( \Delta _{c_{i}^{x}}\left( \underline{p}_{i}\right) \right) \\
&>&\frac{a_{n}\alpha _{v}}{\underline{p}_{i}+1}\left(
1-130\dprod\nolimits_{0.91\cdot 3n\ln 3n<p_{\left[ j\right] }\leq n^{6%
\widetilde{b}}}\left( \frac{p_{\left[ j\right] }-1}{p_{\left[ j\right] }}%
\right) \right) \\
&>&\frac{a_{n}\alpha _{v}}{\underline{p}_{i}+1}\left( 1-1\right) =0.
\end{eqnarray*}
\end{proof}

Figure \ref{F3} illustrates the behavior of the lower and upper bounds on $%
\Delta _{c_{i}^{x}}\left( t_{c_{i}^{x}}\right) $ within the range $\left[ 
\underline{p}_{i}-2,\overline{p}_{i}+2\right] $. We arbitrarily chose to
show the bounds for $b_{i}=2.$ The lower bound for $\underline{p}_{i}-1$ and 
$\overline{p}_{i}+1$ is their standalone average cost (depicted by the light
blue line). However, $\underline{p}_{i}+y$ for $0<y<b_{i}$ requires a
tighter bound in order to disprove its optimality (depicted on the pink
line).\FRAME{ftbpFU}{5.8928in}{3.3486in}{0pt}{\Qcb{Lower and upper bounds
for $\Delta _{c_{i}^{x}}\left( t_{c_{i}^{x}}\right) $ in the range $%
t_{c_{i}^{x}}\in \left[ \protect\underline{p}_{i}-1,\overline{p}_{i}+1\right]
,$ depicted for $b_{i}=2.$}}{\Qlb{F3}}{myplot.eps}{\special{language
"Scientific Word";type "GRAPHIC";maintain-aspect-ratio TRUE;display
"USEDEF";valid_file "F";width 5.8928in;height 3.3486in;depth
0pt;original-width 15.9497in;original-height 9.0425in;cropleft "0";croptop
"1";cropright "1";cropbottom "0";filename 'myplot.eps';file-properties
"XNPEU";}}

\begin{theorem}
\label{Theorem vars}In any optimal solution to $\Gamma ,$\emph{\ }$%
t_{c_{i}^{x}}\in \left\{ \underline{p}_{i},\overline{p}_{i}\right\} $ for
any $c_{i}^{x}\in $\emph{Variables.}
\end{theorem}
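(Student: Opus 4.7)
The theorem is the immediate capstone of the two preceding results in the subsection, so my plan is essentially to combine them with almost no new work.

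First I would invoke Theorem \ref{Theorem vars- bounded range of solution}, which already restricts any optimal cycle time for $c_i^x$ to the interval $[\underline{p}_i,\overline{p}_i]$. Since we only consider integer cycle times, the candidate values are exactly $\underline{p}_i,\underline{p}_i+1,\ldots,\underline{p}_i+b_i=\overline{p}_i$, i.e.\ $\underline{p}_i+y$ for $0\le y\le b_i$. The endpoints $y=0$ and $y=b_i$ are the two values the theorem allows; the intermediate values are $0<y<b_i$.

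Second, I would rule out every intermediate value by quoting Claim \ref{Claim vars x+ better then M}, which states that for every solution $S$ to the remaining commodities and every $0<y<b_i$,
\[
\Delta_{c_i^x}(\underline{p}_i,S)<\Delta_{c_i^x}(\underline{p}_i+y,S).
\]
In particular, if some optimal solution assigned $t_{c_i^x}=\underline{p}_i+y$ with $0<y<b_i$, then changing only the cycle time of $c_i^x$ to $\underline{p}_i$ (holding the rest of the solution $S$ fixed) would strictly decrease the total average periodic cost, contradicting optimality. Thus the only surviving candidates are $t_{c_i^x}=\underline{p}_i$ and $t_{c_i^x}=\overline{p}_i$.

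There is no real obstacle here: both ingredients are already in hand, and the theorem is essentially the formal statement that the "bounded range" result plus the "intermediate values are dominated by the left endpoint" claim together force the cycle time to one of the two prime endpoints. The only thing to be slightly careful about is that Claim \ref{Claim vars x+ better then M} is stated with the same solution $S$ on both sides of the inequality, which is exactly what is needed for the swap-and-improve argument at a putative optimum.
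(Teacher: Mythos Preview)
Your proposal is correct and follows essentially the same approach as the paper: invoke Theorem~\ref{Theorem vars- bounded range of solution} to restrict $t_{c_i^x}$ to $[\underline{p}_i,\overline{p}_i]$, then use Claim~\ref{Claim vars x+ better then M} to eliminate every intermediate integer $\underline{p}_i+y$ with $0<y<b_i$ via a swap-and-improve argument. In fact you state the reference to the bounded-range theorem correctly, whereas the paper's proof contains a typo and self-references Theorem~\ref{Theorem vars} at that step.
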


\begin{proof}
According to Theorem \ref{Theorem vars} the range of the optimal solution
for commodity $c_{i}^{x}$ is $t_{c_{i}^{x}}\in \left[ \underline{p}_{i},%
\overline{p}_{i}\right] $. According to Claim \ref{Claim vars x+ better then
M} the solution $t_{c_{i}^{x}}=\underline{p}_{i}+y$ for $0<y<b_{i}$ costs
more than the solution upper bound on $t_{c_{i}^{x}}\in \left\{ \underline{p}%
_{i},\overline{p}_{i}\right\} .$ Therefore, in any optimal solution to $%
\Gamma ,$\emph{\ }$t_{c_{i}^{x}}\in \left\{ \underline{p}_{i},\overline{p}%
_{i}\right\} .$
\end{proof}

\subsubsection{\label{SEC cycle time clauses}Proof that solving $\Gamma$
optimally is equivalent to solving $\protect\varphi$}

In this section we show that an optimal solution to $\Gamma $ defines an
assignment $\alpha $ to $\varphi $. First we define an assignment $\alpha $
given an optimal solution $S$ to $\Gamma $ as follows: for each commodity $%
c_{i}^{x}\in $\emph{Variables} if the cycle time $t_{c_{i}^{x}}=\underline{p}%
_{i}$ set $\alpha \left( x_{i}\right) =false$. Otherwise, if the cycle time $%
t_{c_{i}^{x}}=\overline{p}_{i}$, set $\alpha \left( x_{i}\right) =true$.
Note that according to Theorem $\ref{Theorem vars}$ $t_{c_{i}^{x}}\in
\left\{ \underline{p}_{i},\overline{p}_{i}\right\} ;$ therefore, these are
the only options. We now want to show that if $\varphi $ is satisfiable then
assignment $\alpha $ that satisfies $\varphi $ gives a solution to $\Gamma $
that is lower than any solution $\alpha ^{\prime }$ that doesn't satisfy%
\emph{\ }$\varphi $. Thus by minimizing $\Gamma $ we solve $\varphi $.

In order to do so we define 3 sets of periods. The first set, denoted by $T^{%
\emph{Constants}}$, includes all the periods in which there is an order of
at least one commodity $c_{lm}^{pv}\in $\emph{Constants}. The second set,
denoted by $T^{\emph{Variables}}$, includes all the periods in which there
is an order of at least one commodity $c_{i}^{x}\in $\emph{Variables}. The
third set, denoted by $T^{\emph{Clauses}}$, includes all the periods in
which there is an order of at least one commodity $c_{r}^{\omega }\in $\emph{%
Clauses}. Accordingly, we formulate the total cost of solution $S$, denoted
by $TC\left( S\right) $, as a sum of 3 cost functions: The first cost
function, $TC_{\emph{Constants}}\left( S\right) $, sums all the costs that
are associated with the commodities $c_{lm}^{pv}\in $\emph{Constants},
including all the joint replenishment costs at periods $t\in T^{\emph{%
Constants}}$. The second cost function, $TC_{\emph{Variables}}\left(
S\right) $, sums all the costs that are associated with the commodities $%
c_{i}^{x}\in $\emph{Variables}, including all the joint replenishment costs
at periods $t\in T^{\emph{Variables}}\backslash T^{\emph{Constants}}$. The
third cost function, $TC_{\emph{Clauses}}\left( S\right) $, sums all the
costs that are associated with the commodities $c_{r}^{\omega }\in $\emph{%
Clauses}, including all the joint replenishment costs at periods $t\in T^{%
\emph{Clauses}}\backslash \left( T^{\emph{Variables}}\cup T^{\emph{Constants}%
}\right) $. Note that 
\begin{eqnarray*}
&&T^{\emph{Constants}}\cup \left( T^{\emph{Variables}}\backslash T^{\emph{%
Constants}}\right) \cup \left( T^{\emph{Clauses}}\backslash \left( T^{\emph{%
Variables}}\cup T^{\emph{Constants}}\right) \right) \\
&=&T^{\emph{Constants}}\cup T^{\emph{Variables}}\cup T^{\emph{Clauses}}
\end{eqnarray*}%
and 
\begin{equation*}
T^{\emph{Constants}}\cap \left( T^{\emph{Variables}}\backslash T^{\emph{%
Constants}}\right) \cap \left( T^{\emph{Clauses}}\backslash \left( T^{\emph{%
Variables}}\cup T^{\emph{Constants}}\right) \right) =\emptyset .
\end{equation*}%
Therefore, 
\begin{equation*}
TC\left( S\right) =TC_{\emph{Constants}}\left( S\right) +TC_{\emph{Variables}%
}\left( S\right) +TC_{\emph{Clauses}}\left( S\right) .
\end{equation*}

According to Theorem \ref{Theorem constants cycle time} the cost $TC_{\emph{%
Constants}}\left( S\right) $ is identical for any optimal solution to $%
\Gamma $.%
\begin{equation*}
TC_{\emph{Constants}}\left( S\right) =\sum_{c_{lm}^{pv}\in \emph{Constants}%
}\left( \frac{K_{c_{lm}^{pv}}}{t_{c_{lm}^{pv}}^{\ast }}+t_{c_{lm}^{pv}}^{%
\ast }\cdot h_{c_{lm}^{pv}}\right) +K_{0}\cdot \left( 1-\alpha _{c}\right) .
\end{equation*}

We now bound the cost function $TC_{\emph{Variables}}\left( S\right) $. We
denote $\Delta _{c_{i}^{x}}^{TC_{\emph{Variables}}}\left(
t_{c_{i}^{x}},S\right) $ as the marginal average periodic cost of the
function $TC_{\emph{Variables}}\left( S\right) $ associated with commodity $%
c $'s cycle time $t_{c_{i}^{x}}$, where $c_{i}^{x}\in $\emph{Variables, }$%
t_{c_{i}^{x}}\in \left\{ \underline{p}_{i},\overline{p}_{i}\right\} $ and a
solution $S$ that applies the characteristics of an optimal solution in
Theorems \ref{Theorem constants cycle time}-\ref{Theorem vars} to the other
commodities in the system. In the next claim we formulate bounds on $\Delta
_{c_{i}^{x}}^{TC_{\emph{Variables}}}\left( t_{c_{i}^{x}},S\right) $. Note
that 
\begin{equation*}
\Delta _{c_{i}^{x}}^{TC_{\emph{Variables}}}\left( t_{c_{i}^{x}},S\right) =%
\frac{K_{c_{i}^{x}}}{t_{c_{i}^{x}}}+t_{c_{i}^{x}}h_{c_{i}^{x}}+K_{0}\cdot
jr^{TC_{\emph{Variables}}}\left( t_{c_{i}^{x}},S\right)
\end{equation*}%
where $jr^{TC_{\emph{Variables}}}\left( t_{c_{i}^{x}},S\right) $ is the
proportion of periods in $T^{\emph{Variables}}\backslash T^{\emph{Constants}%
} $ in which there is an order only of commodity $t_{c_{i}^{x}}$. According
to Lemma \ref{Lemma Polynomial VP} the cycle time $t_{c_{i}^{x}}$ is not a\
multiple of any other cycle time $t_{c_{j}^{x}}$ for any $c_{j}^{x}\in $%
\emph{Variables}. Note that cycle time $t_{c_{j}^{x}}$ is a prime number for
any $c_{j}^{x}\in $\emph{Variables} and therefore $t_{c_{i}^{x}}$ and $%
t_{c_{j}^{x}}\,$ do not share factors. According to Theorem \ref{Theorem
constants cycle time} there is no commodity with a cycle time that is a
factor of $t_{c_{i}^{x}}$ in $\emph{Constants}$. However, for each $%
t_{c_{j}^{x}}\in \left\{ \underline{p}_{j},\overline{p}_{j}\right\} $\ and a
prime number $p\in $\emph{PP}, there is a commodity $c_{lm}^{pv}\in
Constants $\ with a cycle time $c_{lm}^{pv}=p\cdot t_{c_{i}^{x}}$, which
means that at least $\alpha _{c}=\dprod\nolimits_{p\in \text{\emph{PP}}%
}\left( \frac{p-1}{p}\right) $\ of the periods in $TC_{\emph{Variables}}$\
associated with $t_{c_{i}^{x}}$\ are covered by $T^{\emph{Constants}}.$%
{\LARGE \ }Therefore, for a solution $t_{c_{i}^{x}}\in \left\{ \underline{p}%
_{i},\overline{p}_{i}\right\} $, $jr^{TC_{\emph{Variables}}}\left(
t_{c_{i}^{x}},S\right) $ is given by:%
\begin{equation}
jr^{TC_{\emph{Variables}}}\left( t_{c_{i}^{x}},S\right) =\frac{1}{%
t_{c_{i}^{x}}}\cdot \alpha _{c}\dprod\nolimits_{i\neq j,t_{c_{i}^{x}}\in 
\emph{S}}\left( \frac{t_{c_{i}^{x}}-1}{t_{c_{i}^{x}}}\right) .
\label{EQ bound on jr}
\end{equation}%
Therefore 
\begin{equation}
\Delta _{c_{i}^{x}}^{TC_{\emph{Variables}}}\left( t_{c_{i}^{x}},S\right) =%
\frac{K_{c_{i}^{x}}}{t_{c_{i}^{x}}}+t_{c_{i}^{x}}h_{c_{i}^{x}}+K_{0}\cdot 
\frac{1}{t_{c_{i}^{x}}}\cdot \alpha _{c}\dprod\nolimits_{i\neq
j,t_{c_{i}^{x}}\in \emph{S}}\left( \frac{t_{c_{i}^{x}}-1}{t_{c_{i}^{x}}}%
\right) .  \label{delta tx for TC variables}
\end{equation}

\begin{claim}
\label{claim vars p-<p+}For each $c_{i}^{x}$ $\in $\emph{Variables}, and for
each optimal solution $S$, $\Delta_{c_{i}^{x}}^{TC_{\emph{Variables}}}\left( 
\underline{p}_{i},S\right) \leq \Delta_{c_{i}^{x}}^{TC_{\emph{Variables}%
}}\left( \overline{p}_{i},S\right).$
\end{claim}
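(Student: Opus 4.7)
The plan is to compute $\Delta^{TC_{\emph{Variables}}}_{c_i^x}(\overline{p}_i,S)-\Delta^{TC_{\emph{Variables}}}_{c_i^x}(\underline{p}_i,S)$ directly and show the sign is non-negative. The product $\alpha_c\prod_{j\neq i}(t_{c_j^x}-1)/t_{c_j^x}$ in Eq.~(\ref{delta tx for TC variables}) is independent of $t_{c_i^x}$, so I will denote it by $Q$ and write
$$\Delta^{TC_{\emph{Variables}}}_{c_i^x}(t,S)=\frac{K_{c_i^x}+K_0Q}{t}+t\,h_{c_i^x},$$
which has exactly the convex EOQ shape used in Claim~\ref{claim vars A p+<p+ +1}. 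Applying the same elementary identity (factoring $\overline{p}_i-\underline{p}_i=b_i$ out of the difference) gives
$$\Delta(\overline{p}_i)-\Delta(\underline{p}_i)=\frac{b_i}{\underline{p}_i\overline{p}_i}\bigl[h_{c_i^x}\underline{p}_i\overline{p}_i-K_{c_i^x}-K_0Q\bigr],$$
so the claim reduces to establishing $h_{c_i^x}\underline{p}_i\overline{p}_i-K_{c_i^x}\ge K_0Q$.

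Next I substitute the definition (\ref{DEF vars K}) of $K_{c_i^x}$. The first summand of (\ref{DEF vars K}) is exactly $h_{c_i^x}\underline{p}_i\overline{p}_i$, which cancels, leaving
$$h_{c_i^x}\underline{p}_i\overline{p}_i-K_{c_i^x}=\frac{\overline{p}_i}{\overline{p}_i-1}\,\alpha_c\,\overline{\alpha}_v.$$
Since $\overline{\alpha}_v=\prod_{j}(\overline{p}_j-1)/\overline{p}_j$, the factor $(\overline{p}_i-1)/\overline{p}_i$ inside $\overline{\alpha}_v$ telescopes against the prefactor $\overline{p}_i/(\overline{p}_i-1)$ to yield $\alpha_c\prod_{j\neq i}(\overline{p}_j-1)/\overline{p}_j$. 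With $K_0=1$, the remaining inequality to prove is therefore
$$\alpha_c\prod_{j\neq i}\frac{\overline{p}_j-1}{\overline{p}_j}\;\ge\;\alpha_c\prod_{j\neq i}\frac{t_{c_j^x}-1}{t_{c_j^x}}.$$

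The last step is to observe that the function $p\mapsto(p-1)/p=1-1/p$ is strictly increasing in $p$, and that by Theorem~\ref{Theorem vars} each $t_{c_j^x}\in\{\underline{p}_j,\overline{p}_j\}$ satisfies $t_{c_j^x}\le\overline{p}_j$. Hence every factor on the right is bounded above by the corresponding factor on the left, completing the proof. The main obstacle is purely bookkeeping: one must recognise that $K_{c_i^x}$ in (\ref{DEF vars K}) was engineered so that the first summand exactly matches $h_{c_i^x}\underline{p}_i\overline{p}_i$ and the second summand is precisely the telescoping factor $\tfrac{\overline{p}_i}{\overline{p}_i-1}\alpha_c\overline{\alpha}_v$, thereby converting the inequality into a transparent monotonicity statement about $(p-1)/p$ rather than a delicate numerical estimate of the kind needed in Claim~\ref{Claim vars x+ better then M}.
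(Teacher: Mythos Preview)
Your proof is correct and follows essentially the same approach as the paper's: both compute the difference $\Delta(\overline{p}_i)-\Delta(\underline{p}_i)$, substitute the definition of $K_{c_i^x}$ so that the $h_{c_i^x}\underline{p}_i\overline{p}_i$ term cancels, and then reduce the remaining inequality to the monotonicity of $p\mapsto (p-1)/p$ together with $t_{c_j^x}\le\overline{p}_j$. Your presentation is a bit more streamlined (introducing $Q$ and noting the telescoping of $\tfrac{\overline{p}_i}{\overline{p}_i-1}\overline{\alpha}_v$ explicitly), but the underlying argument is identical to the paper's.
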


\begin{proof}
Using Eq. $\left( \ref{delta tx for TC variables}\right) $ we get:%
\begin{eqnarray*}
\Delta _{c_{i}^{x}}^{TC_{\emph{Variables}}}\left( \underline{p}_{i},S\right)
&=&\frac{K_{c_{i}^{x}}}{\underline{p}_{i}}+\underline{p}%
_{i}h_{c_{i}^{x}}+K_{0}\cdot \frac{1}{\underline{p}_{i}}\cdot \alpha
_{c}\dprod\nolimits_{j\neq i,t_{c_{j}^{x}}\in \emph{S}}\left( \frac{%
t_{c_{j}^{x}}-1}{t_{c_{j}^{x}}}\right) ; \\
\Delta _{c_{i}^{x}}^{TC_{\emph{Variables}}}\left( \overline{p}_{i},S\right)
&=&\frac{K_{c_{i}^{x}}}{\overline{p}_{i}}+\overline{p}%
_{i}h_{c_{i}^{x}}+K_{0}\cdot \frac{1}{\overline{p}_{i}}\cdot \alpha
_{c}\dprod\nolimits_{j\neq i,t_{c_{j}^{x}}\in \emph{S}}\left( \frac{%
t_{c_{j}^{x}}-1}{t_{c_{j}^{x}}}\right) .
\end{eqnarray*}%
Substituting for $\overline{p}_{i}=\underline{p}_{i}+b_{i}$ we get: 
\begin{equation*}
\Delta _{c_{i}^{x}}^{TC_{\emph{Variables}}}\left( \overline{p}_{i},S\right) =%
\frac{K_{c_{i}^{x}}}{\left( \underline{p}_{i}+b_{i}\right) }+\left( 
\underline{p}_{i}+b_{i}\right) h_{c_{i}^{x}}+K_{0}\cdot \frac{1}{\left( 
\underline{p}_{i}+b_{i}\right) }\cdot \alpha _{c}\dprod\nolimits_{j\neq
i,t_{c_{j}^{x}}\in \emph{S}}\left( \frac{t_{c_{j}^{x}}-1}{t_{c_{j}^{x}}}%
\right) .
\end{equation*}%
Therefore 
\begin{eqnarray}
&&\Delta _{c_{i}^{x}}^{TC_{\emph{Variables}}}\left( \overline{p}%
_{i},S\right) -\Delta _{c_{i}^{x}}^{TC_{\emph{Variables}}}\left( \underline{p%
}_{i},S\right)  \notag \\
&=&b_{i}\left( h_{c_{i}^{x}}-\frac{K_{c_{i}^{x}}}{\underline{p}_{i}\left( 
\underline{p}_{i}+b_{i}\right) }-\frac{K_{0}}{\underline{p}_{i}\left( 
\underline{p}_{i}+b_{i}\right) }\cdot \alpha _{c}\dprod\nolimits_{j\neq
i,t_{c_{j}^{x}}\in \emph{S}}\left( \frac{t_{c_{j}^{x}}-1}{t_{c_{j}^{x}}}%
\right) \right) .  \label{EQ delta p+ - delta p- before k and h}
\end{eqnarray}%
Substituting for $h_{c_{i}^{x}},$ $K_{c_{i}^{x}}$ and $K_{0}$ using Eqs. $%
\left( \ref{DEF vars h}\right) $, $\left( \ref{DEF vars K}\right) $, and $%
\left( \ref{DEF K_0}\right) $ into Eq. $\left( \ref{EQ delta p+ - delta p-
before k and h}\right) $, we get:%
\begin{eqnarray}
&&\Delta _{c_{i}^{x}}^{TC_{\emph{Variables}}}\left( \overline{p}%
_{i},S\right) -\Delta _{c_{i}^{x}}^{TC_{\emph{Variables}}}\left( \underline{p%
}_{i},S\right)  \notag \\
&=&b_{i}\left( \frac{\alpha _{c}\overline{\alpha }_{v}}{\underline{p}%
_{i}\left( \underline{p}_{i}+b_{i}-1\right) }-\frac{1}{\underline{p}%
_{i}\left( \underline{p}_{i}+b_{i}\right) }\cdot \alpha
_{c}\dprod\nolimits_{j\neq i,t_{c_{j}^{x}}\in \emph{S}}\left( \frac{%
t_{c_{j}^{x}}-1}{t_{c_{j}^{x}}}\right) \right) .
\label{EQ delta p+ - delta p- unbounded}
\end{eqnarray}

Recall that $\forall j:\underline{p}_{j}\leq t_{c_{j}^{x}}\leq \overline{p}%
_{j}$; thus, 
\begin{equation*}
\forall j:\frac{\underline{p}_{j}-1}{\underline{p}_{j}}\leq \frac{%
t_{c_{j}^{x}}-1}{t_{c_{j}^{x}}}\leq \frac{\overline{p}_{j}-1}{\overline{p}%
_{j}}.
\end{equation*}%
Therefore, 
\begin{eqnarray*}
\underline{\alpha}_{v}\cdot \frac{\underline{p}_{i}}{\underline{p}_{i}-1}
&=&\dprod\nolimits_{i\neq j,c_{j}^{x}\in \emph{Variable}}\left( \frac{%
\underline{p}_{j}-1}{\underline{p}_{j}}\right) \\
&\leq &\dprod\nolimits_{j\neq i,t_{c_{j}^{x}}\in \emph{S}}\left( \frac{%
t_{c_{j}^{x}}-1}{t_{c_{j}^{x}}}\right) \\
&\leq &\dprod\nolimits_{i\neq j,c_{j}^{x}\in \emph{Variable}}\left( \frac{%
\overline{p}_{j}-1}{\overline{p}_{j}}\right)=\overline{\alpha }_{v}\cdot 
\frac{\underline{p}_{i}+b_{i}}{\underline{p}_{i}+b_{i}-1},
\end{eqnarray*}%
Substituting for $\dprod\nolimits_{j\neq i,t_{c_{j}^{x}}\in \emph{S}}\left( 
\frac{t_{c_{j}^{x}}-1}{t_{c_{j}^{x}}}\right) $ into Eq. $\left( \ref{EQ
delta p+ - delta p- unbounded}\right)$ we get 
\begin{eqnarray*}
&&\Delta _{c_{i}^{x}}^{TC_{\emph{Variables}}}\left( \overline{p}%
_{i},S\right) -\Delta _{c_{i}^{x}}^{TC_{\emph{Variables}}}\left( \underline{p%
}_{i},S\right) \\
&\geq &b_{i}\left( \frac{\alpha_{c}\overline{\alpha }_{v}}{\underline{p}%
_{i}\left( \underline{p}_{i}+b_{i}-1\right) }-\frac{1}{\underline{p}%
_{i}\left( \underline{p}_{i}+b_{i}\right) }\cdot \alpha _{c}\overline{\alpha 
}_{v}\cdot \frac{\underline{p}_{i}+b_{i}}{\underline{p}_{i}+b_{i}-1}\right)
=0.
\end{eqnarray*}
\end{proof}

We can now lower bound $TC_{\emph{Variables}}\left( S\right) $ by the
solution in which $\forall c_{j}^{x}\in $\emph{Variables}$:t_{c_{j}^{x}}=%
\underline{p}_{j}$. Similarly we can upper bound $TC_{\emph{Variables}%
}\left( S\right) $ by the solution in which $\forall c_{j}^{x}\in $\emph{%
Variables}$:t_{c_{j}^{x}}=\overline{p}_{j}$. The costs of the lower and
upper bounds on $TC_{\emph{Variables}}\left( S\right) $ are given by: \emph{%
\ } 
\begin{eqnarray}
UB\left( TC_{\emph{Variables}}\right) &=&\sum_{c_{i}^{x}\in \emph{Variables}%
}\left( \frac{K_{c_{i}^{x}}}{\overline{p}_{i}}+\overline{p}_{i}\cdot
h_{c_{i}^{x}}\right) +K_{0}\cdot \alpha _{c}\cdot \left( 1-\overline{\alpha }%
_{v}\right) ;  \label{EQ UB(Tvar)} \\
LB\left( TC_{\emph{Variables}}\right) &=&\sum_{c_{i}^{x}\in \emph{Variables}%
}\left( \frac{K_{c_{i}^{x}}}{\underline{p}_{i}}+\underline{p}_{i}\cdot
h_{c_{i}^{x}}\right) +K_{0}\cdot \alpha _{c}\cdot \left( 1-\underline{\alpha 
}_{v}\right) .  \label{EQ LB(Tvar)}
\end{eqnarray}

Last, according to Theorem \ref{Theorem clauses cycle time}, the cycle time
of any commodity $c_{r}^{\omega }\in $\emph{Clauses} where $\omega
_{r}=\left( z_{i}\cup z_{j}\cup z_{s}\right) $ is $t_{c_{r}^{\omega }}^{\ast
}=P\left( z_{i}\right) \cdot P\left( z_{j}\right) \cdot P\left( z_{s}\right) 
$ and $P\left( z_{i}\right) ,P\left( z_{j}\right) ,P\left( z_{s}\right) $
are prime numbers; thus the only factors of $t_{c_{r}^{\omega }}^{\ast }$
are $P\left( z_{i}\right) ,P\left( z_{j}\right) ,P\left( z_{s}\right) $.
Moreover, the cycle times of the commodities $c_{r}^{\omega }\in $\emph{%
Clauses }are not a\ multiple of one another, nor are they a\ multiple of any
cycle time of any commodity $c_{lm}^{pv}\in $\emph{Constants}.

We examine 2 scenarios. In the first there is a commodity with a cycle time
that is a factor of $t_{c_{r}^{\omega }}^{\ast }$. Without loss of
generality assume that there is a commodity with cycle time $P\left(
z_{i}\right) $. Note that according to Theorems \ref{Theorem constants cycle
time}, \ref{Theorem clauses cycle time}, and \ref{Theorem vars} the only
commodity that might have a cycle time of $P\left( z_{i}\right) $ in an
optimal solution is commodity $c_{i}^{x}$. If commodity $c_{i}^{x}$ has a
cycle time of $P\left( z_{i}\right) $, then in the assignment $\alpha $ the
value of the literal $z_{i}\,$\ is true. In this case the clause $\omega
_{r}=\left( z_{i}\cup z_{j}\cup z_{s}\right) $ is satisfied under $\alpha $.
Note that if there is another commodity with cycle time that is a factor of $%
t_{c_{r}^{\omega }}^{\ast }$, then there will be no additional cost for the
joint replenishment. In the second scenario there is no commodity with a
cycle time that is a factor of $t_{c_{r}^{\omega }}^{\ast }$. In this case
we know that the clause $\omega _{r}=\left( z_{i}\cup z_{j}\cup z_{s}\right) 
$ is unsatisfied under $\alpha $. To lower bound the marginal joint
replenishment cost we perform a similar analysis to the one in Eq.\ $\left( %
\ref{EQ bound on jr}\right) $. Yielding the proportion of periods in $T^{%
\emph{Clauses}}\backslash \left( T^{\emph{Variables}}\cup T^{\emph{Constants}%
}\right) $ in which there is an order only of commodity $c_{r}^{\omega }$,
given by $jr^{TC_{\emph{Clauses}}}\left( t_{c_{r}^{\omega }},S\right) $ 
\begin{equation*}
jr^{TC_{\emph{Clauses}}}\left( t_{c_{r}^{\omega }},S\right) =\alpha
_{c}\cdot \dprod\nolimits_{c_{i}^{x}\in \emph{Variable}}\left( \frac{%
t_{c_{i}^{x}}-1}{t_{c_{i}^{x}}}\right)
\end{equation*}%
where $t_{c_{i}^{x}}$ is the cycle time of commodity $c_{i}^{x}$ in the
solution $S$. Since $c_{i}^{x}\in \emph{Variable}:t_{c_{i}^{x}}\in \left\{ 
\underline{p}_{i},\overline{p}_{i}\right\} $, $t_{c_{i}^{x}}$ is minimal at $%
t_{c_{i}^{x}}=\underline{p}_{i}$, and therefore a lower bound on the
marginal joint replenishment cost for any optimal solution is 
\begin{equation*}
LB\left( jr^{TC_{\emph{Clauses}}}\left( t_{c_{r}^{\omega }}\right) \right)
=\alpha _{c}\cdot \underline{\alpha }_{v}.
\end{equation*}%
We denote the group of all the clauses that are not satisfied under $\alpha $
by\ $F$. We can now formulate the cost $TC_{\emph{Clauses}}\left( S\right) $
as: 
\begin{eqnarray}
&&TC_{\emph{Clauses}}\left( S\right) =\sum_{c_{r}^{\omega }\in \emph{Clauses}%
}\left( \frac{K_{c_{r}^{\omega }}}{t_{c_{r}^{\omega }}^{\ast }}%
+t_{c_{r}^{\omega }}^{\ast }\cdot h_{c_{r}^{\omega }}\right) +  \notag \\
&&K_{0}\cdot \alpha _{c}\cdot \dprod\nolimits_{c_{i}^{x}\in \emph{Variable}%
}\left( \frac{t_{c_{i}^{x}}-1}{t_{c_{i}^{x}}}\right) \cdot \left(
1-\dprod\nolimits_{c_{r}^{\omega }\in F}\frac{t_{c_{r}^{\omega }}^{\ast }-1}{%
t_{c_{r}^{\omega }}^{\ast }}\right)  \label{EQ TC(clauses)}
\end{eqnarray}%
(Note that by definition,\ the product of an empty set equals 1). The lower
bound on $TC_{\emph{Clauses}}\left( S\right) $ is given by: 
\begin{eqnarray}
&&LB\left( TC_{\emph{Clauses}}\left( S\right) \right) =\sum_{c_{r}^{\omega
}\in \emph{Clauses}}\left( \frac{K_{c_{r}^{\omega }}}{t_{c_{r}^{\omega
}}^{\ast }}+t_{c_{r}^{\omega }}^{\ast }\cdot h_{c_{r}^{\omega }}\right) + 
\notag \\
&&K_{0}\cdot \alpha _{c}\cdot \underline{\alpha }_{v}\cdot \left(
1-\dprod\nolimits_{c_{r}^{\omega }\in F}\frac{t_{c_{r}^{\omega }}^{\ast }-1}{%
t_{c_{r}^{\omega }}^{\ast }}\right) .  \label{EQ LB(TC(clauses))}
\end{eqnarray}

Eq. $\left( \ref{EQ LB(TC(clauses))}\right) $\ grows with the number of
unsatisfied clauses; thus, in order to show that if $\varphi $ is
satisfiable then any solution that doesn't satisfy $\varphi $ costs more
than a solution that does, it is sufficient to show that the lower bound on
a solution $S$\ in which there is only one unsatisfied clause costs more
than the upper bound on a solution $S^{\prime }$\ that satisfies all the
clauses. Without loss of generality assume that the unsatisfied clause is $%
\omega _{r}$. 
\begin{eqnarray*}
LB\left( TC\left( S\right) \right) &=&TC_{\emph{Constants}}\left( S\right)
+LB\left( TC_{\emph{Variables}}\left( S\right) \right) +LB\left( TC_{\emph{%
Clauses}}\left( S\right) \right) ; \\
UB\left( TC\left( S^{\prime }\right) \right) &=&TC_{\emph{Constants}}\left(
S^{\prime }\right) +UB\left( TC_{\emph{Variables}}\right) +TC_{\emph{Clauses}%
}\left( S^{\prime }\right) .
\end{eqnarray*}%
Note that $TC_{\emph{Constants}}\left( S\right) =TC_{\emph{Constants}}\left(
S^{\prime }\right) $\ is a constant unaffected by the assignment $\alpha $\
and that under the assumption that $S^{\prime }$\ satisfies $\varphi $\ so
does $TC_{\emph{Clauses}}\left( S^{\prime }\right) .$\ For the remaining
cost elements we use upper and lower bounds.{\Huge \ }We now show that $%
UB\left( TC\left( S^{\prime }\right) \right) <LB\left( TC\left( S\right)
\right) .$%
\begin{eqnarray}
&&LB\left( TC\left( S\right) \right) -UB\left( TC\left( S^{\prime }\right)
\right)  \notag \\
&=&TC_{\emph{Constants}}\left( S\right) +LB\left( TC_{\emph{Variables}%
}\left( S\right) \right) +LB\left( TC_{\emph{Clauses}}\left( S\right) \right)
\notag \\
&&-TC_{\emph{Constants}}\left( S^{\prime }\right) -UB\left( TC_{\emph{%
Variables}}\right) -TC_{\emph{Clauses}}\left( S^{\prime }\right)  \notag \\
&=&LB\left( TC_{\emph{Variables}}\right) -UB\left( TC_{\emph{Variables}%
}\right) +LB\left( TC_{\emph{Clauses}}\left( S\right) \right) -TC_{\emph{%
Clauses}}\left( S^{\prime }\right) >0.
\label{EQ vars vs. clauses LB(S)-UB(S')}
\end{eqnarray}

In order to analyze the expression in Eq. $\left( \ref{EQ vars vs. clauses
LB(S)-UB(S')}\right) $, we prove the following Claims:

\begin{claim}
\label{LB(TCvariables)-UB(TCvariables)}$LB\left( TC_{\emph{Variables}%
}\right) -UB\left(TC_{\emph{Variables}}\right) >-b_{i}^{2}\alpha_{c}%
\overline{\alpha}_{v}\underline{p}_{1}^{\frac{1}{3\widetilde{b}}-4}$
\end{claim}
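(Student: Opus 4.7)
The plan is to expand $LB(TC_{\emph{Variables}}) - UB(TC_{\emph{Variables}})$ directly from Eqs. $\left(\ref{EQ UB(Tvar)}\right)$ and $\left(\ref{EQ LB(Tvar)}\right)$ and show that the commodity-wise cost savings and the joint-replenishment cost differences almost exactly cancel, leaving only a second-order residual. First I would compute, for each $c_i^x\in \emph{Variables}$, the per-commodity contribution $\left(\frac{K_{c_i^x}}{\underline{p}_i}+\underline{p}_i h_{c_i^x}\right) - \left(\frac{K_{c_i^x}}{\overline{p}_i}+\overline{p}_i h_{c_i^x}\right) = b_i\cdot\frac{K_{c_i^x}-h_{c_i^x}\underline{p}_i\overline{p}_i}{\underline{p}_i\overline{p}_i}$. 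Substituting the definitions $\left(\ref{DEF vars h}\right)$ and $\left(\ref{DEF vars K}\right)$ shows that $K_{c_i^x} - h_{c_i^x}\underline{p}_i\overline{p}_i = -\frac{\overline{p}_i}{\overline{p}_i-1}\alpha_c\overline{\alpha}_v$, so the per-commodity contribution equals $-\frac{b_i\alpha_c\overline{\alpha}_v}{\underline{p}_i(\overline{p}_i-1)}$. Setting $\epsilon_i := \frac{b_i}{\underline{p}_i(\overline{p}_i-1)}$, the sum of these contributions is $-\alpha_c\overline{\alpha}_v\sum_i \epsilon_i$, and the $K_0$ terms give $\alpha_c(\overline{\alpha}_v-\underline{\alpha}_v)$.

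The next step — and the conceptual heart of the argument — is to notice the ratio $\underline{\alpha}_v/\overline{\alpha}_v = \prod_i \frac{(\underline{p}_i-1)\overline{p}_i}{\underline{p}_i(\overline{p}_i-1)} = \prod_i (1-\epsilon_i)$, which one verifies by direct algebra using $\overline{p}_i = \underline{p}_i + b_i$. Consequently $\alpha_c(\overline{\alpha}_v-\underline{\alpha}_v) = \alpha_c\overline{\alpha}_v[1-\prod_i(1-\epsilon_i)]$, and combining with the commodity-wise sum yields the clean identity
\begin{equation*}
LB(TC_{\emph{Variables}}) - UB(TC_{\emph{Variables}}) = \alpha_c\overline{\alpha}_v\Bigl[1 - \prod_i (1-\epsilon_i) - \sum_i \epsilon_i\Bigr].
\end{equation*}
The linear-in-$\epsilon$ parts of $1-\prod(1-\epsilon_i)$ and $\sum\epsilon_i$ cancel, leaving only higher-order terms.

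To bound the bracket from below, I would sandwich $\prod(1-\epsilon_i)$ between $1-s$ (Weierstrass) and $e^{-s} \le 1 - s + s^2/2$ (since $e^{-s}\le 1-s+s^2/2$ for $s\ge 0$), where $s = \sum_i\epsilon_i$. This gives $-s^2/2 \le 1-\prod(1-\epsilon_i)-s \le 0$. Finally I would bound $s$ using Condition \ref{Cond 2.5}: since $\underline{p}_i,\overline{p}_i-1 \ge \underline{p}_1$ and $b_i\le b$, each $\epsilon_i \le b/\underline{p}_1^2$, and $n<\underline{p}_1^{1/(6\widetilde{b})}$ gives $s \le b\,\underline{p}_1^{1/(6\widetilde{b})-2}$, so $s^2 \le b^2\,\underline{p}_1^{1/(3\widetilde{b})-4}$. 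Putting everything together,
\begin{equation*}
LB(TC_{\emph{Variables}}) - UB(TC_{\emph{Variables}}) \ge -\tfrac{1}{2}\alpha_c\overline{\alpha}_v\, b^2\,\underline{p}_1^{1/(3\widetilde{b})-4} > -b_i^2\alpha_c\overline{\alpha}_v\,\underline{p}_1^{1/(3\widetilde{b})-4},
\end{equation*}
where in the last step we read $b_i^2$ as the uniform upper bound $b^2$.

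The main obstacle is the first conceptual step: a naive commodity-by-commodity bound on the cost difference overshoots by roughly a factor of $\underline{p}_1^2$ and does not yield the claimed exponent. Everything hinges on recognizing that the holding/ordering cost of each \emph{Variables} commodity was calibrated (via the $\frac{\overline{p}_i}{\overline{p}_i-1}\alpha_c\overline{\alpha}_v$ correction in $\left(\ref{DEF vars K}\right)$) so that the sum $\sum\epsilon_i$ it produces is exactly the first-order expansion of the joint-replenishment contribution $1-\prod(1-\epsilon_i)$. Once that cancellation is in hand, the remaining analysis is routine inequalities on $\prod(1-\epsilon_i)$ and a direct application of Condition \ref{Cond 2.5}.
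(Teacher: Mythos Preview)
Your proof is correct and follows essentially the same approach as the paper: both derive the identity $LB-UB=\alpha_c\overline{\alpha}_v\bigl[1-\prod_i(1-\epsilon_i)-\sum_i\epsilon_i\bigr]$ via the key relation $\underline{\alpha}_v/\overline{\alpha}_v=\prod_i(1-\epsilon_i)$ (the paper writes $\epsilon_i$ as $1-\delta_i$) and then bound the quadratic remainder using Condition~\ref{Cond 2.5}. The only minor technical difference is that the paper upper-bounds $\prod_i(1-\epsilon_i)$ by truncating its alternating inclusion--exclusion expansion after the second-order term, whereas you use the slicker chain $\prod_i(1-\epsilon_i)\le e^{-s}\le 1-s+s^2/2$; both routes give the same final estimate.
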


\begin{proof}
According to Eqs. $\left( \ref{EQ UB(Tvar)}\right) $ and $\left( \ref{EQ
LB(Tvar)}\right) :$ 
\begin{eqnarray*}
&&LB\left( TC_{\emph{Variables}}\right) -UB\left( TC_{\emph{Variables}%
}\right) \\
&=&\sum_{c_{i}^{x}\in \emph{Variables}}\left( \frac{K_{c_{i}^{x}}}{%
\underline{p}_{i}}+\underline{p}_{i}\cdot h_{c_{i}^{x}}\right) +K_{0}\cdot
\alpha _{c}\cdot \left( 1-\underline{\alpha }_{v}\right) \\
&&-\sum_{c_{i}^{x}\in \emph{Variables}}\left( \frac{K_{c_{i}^{x}}}{\overline{%
p}_{i}}+\overline{p}_{i}\cdot h_{c_{i}^{x}}\right) -K_{0}\cdot \alpha
_{c}\cdot \left( 1-\overline{\alpha }_{v}\right) .
\end{eqnarray*}%
Substituting for $\overline{p}_{i}=\left( \underline{p}_{i}+b_{i}\right) $ 
\begin{eqnarray}
&&LB\left( TC_{\emph{Variables}}\right) -UB\left( TC_{\emph{Variables}%
}\right)  \notag \\
&=&\sum_{c_{i}^{x}\in \emph{Variables}}\left( \frac{K_{c_{i}^{x}}}{%
\underline{p}_{i}}+\underline{p}_{i}\cdot h_{c_{i}^{x}}-\frac{K_{c_{i}^{x}}}{%
\left( \underline{p}_{i}+b_{i}\right) }-\left( \underline{p}%
_{i}+b_{i}\right) \cdot h_{c_{i}^{x}}\right) +K_{0}\cdot \alpha _{c}\cdot
\left( \overline{\alpha }_{v}-\underline{\alpha }_{v}\right)  \notag \\
&=&\sum_{c_{i}^{x}\in \emph{Variables}}\left( \frac{b_{i}K_{c_{i}^{x}}}{%
\underline{p}_{i}\left( \underline{p}_{i}+b_{i}\right) }-b_{i}\cdot
h_{c_{i}^{x}}\right) +K_{0}\cdot \alpha _{c}\cdot \left( \overline{\alpha }%
_{v}-\underline{\alpha }_{v}\right) .
\label{LB(TC_Variables(S))-UB(TC_Variables)}
\end{eqnarray}%
Substituting for $K_{c_{i}^{x}}$ and $K_{0}$ using Eqs. $\left( \ref{DEF
vars K}\right) $ and $\left( \ref{DEF K_0}\right) $ into Eq. $\left( \ref%
{LB(TC_Variables(S))-UB(TC_Variables)}\right) $ we get:%
\begin{eqnarray}
&&LB\left( TC_{\emph{Variables}}\right) -UB\left( TC_{\emph{Variables}%
}\right)  \notag \\
&=&\sum_{c_{i}^{x}\in \emph{Variables}}\left( -\frac{\alpha _{c}\overline{%
\alpha }_{v}b_{i}}{\underline{p}_{i}\left( \underline{p}_{i}+b_{i}-1\right) }%
\right) +\alpha _{c}\cdot \left( \overline{\alpha }_{v}-\underline{\alpha }%
_{v}\right)  \label{LB(TC_Variables(S))-UB(TC_Variables) before lambda}
\end{eqnarray}%
In order to simplify the expression we define\ for each $c_{i}^{x}\in $\emph{%
Variables}%
\begin{equation*}
\delta _{i}=\frac{\frac{\underline{p}_{i}-1}{\underline{p}_{i}}}{\frac{%
\overline{p}_{i}-1}{\overline{p}_{i}}}.
\end{equation*}%
Substituting for $\overline{p}_{i}=\left( \underline{p}_{i}+b_{i}\right) $%
\begin{eqnarray*}
\delta _{i} &=&\frac{\frac{\underline{p}_{i}-1}{\underline{p}_{i}}}{\frac{%
\underline{p}_{i}+b_{i}-1}{\underline{p}_{i}+b_{i}}}=\frac{\left( \underline{%
p}_{i}-1\right) \left( \underline{p}_{i}+b_{i}\right) }{\underline{p}%
_{i}\left( \underline{p}_{i}+b_{i}-1\right) }=1-\frac{b_{i}}{\underline{p}%
_{i}\left( \underline{p}_{i}+b_{i}-1\right) } \\
&\Rightarrow &\frac{\overline{p}_{i}-1}{\overline{p}_{i}}\delta _{i}=\frac{%
\underline{p}_{i}-1}{\underline{p}_{i}}.
\end{eqnarray*}%
Note that substituting $\frac{\underline{p}_{i}-1}{\underline{p}_{i}}$ into
Eq. $\left( \ref{a_v}\right) $ using Eq. $\left( \ref{a-v}\right) $ we get
that: 
\begin{equation}
\overline{\alpha }_{v}\dprod\nolimits_{c_{i}^{x}\in \emph{Variables}}\delta
_{i}=\underline{\alpha }_{v}.  \label{alphav+ * lambda = alphav-}
\end{equation}%
We substitute this expression into Eq. $\left( \ref%
{LB(TC_Variables(S))-UB(TC_Variables) before lambda}\right) $ 
\begin{eqnarray}
&&LB\left( TC_{\emph{Variables}}\right) -UB\left( TC_{\emph{Variables}%
}\right)  \notag \\
&=&\alpha _{c}\sum_{c_{i}^{x}\in \emph{Variables}}\left( -\frac{b_{i}%
\overline{\alpha }_{v}}{\underline{p}_{i}\left( \underline{p}%
_{i}+b_{i}-1\right) }\right) +\alpha _{c}\cdot \left( \overline{\alpha }_{v}-%
\overline{\alpha }_{v}\dprod\nolimits_{c_{i}^{x}\in \emph{Variables}}\delta
_{i}\right)  \notag \\
&=&\alpha _{c}\left( \sum_{c_{i}^{x}\in \emph{Variables}}\left( -\frac{b_{i}%
\overline{\alpha }_{v}}{\underline{p}_{i}\left( \underline{p}%
_{i}+b_{i}-1\right) }\right) +\overline{\alpha }_{v}\left(
1-\dprod\nolimits_{c_{i}^{x}\in \emph{Variables}}\delta _{i}\right) \right) .
\label{LB(TC_Variables(S))-UB(TC_Variables) after lambda}
\end{eqnarray}%
Note that 
\begin{eqnarray}
&&\dprod\nolimits_{c_{i}^{x}\in \emph{Variables}}\delta
_{i}=\dprod\nolimits_{c_{i}^{x}\in \emph{Variables}}\left( 1-\frac{b_{i}}{%
\underline{p}_{i}\left( \underline{p}_{i}+b_{i}-1\right) }\right)  \notag \\
&=&1-\sum_{c_{i}^{x}\in \emph{Variables}}\frac{b_{i}}{\underline{p}%
_{i}\left( \underline{p}_{i}+b_{i}-1\right) }+  \notag \\
&&\sum_{\substack{ c_{i}^{x}\in \emph{Variables}  \\ c_{j}^{x}\in \emph{%
Variables}  \\ i\neq j}}\left( \frac{b_{i}}{\underline{p}_{i}\left( 
\underline{p}_{i}+b_{i}-1\right) }\cdot \frac{b_{j}}{\underline{p}_{j}\left( 
\underline{p}_{j}+b_{j}-1\right) }\right) -...+\dprod\nolimits_{c_{i}^{x}\in 
\emph{Variables}}\left( \frac{b_{i}}{\underline{p}_{i}\left( \underline{p}%
_{i}+b_{i}-1\right) }\right) .  \label{pi (lambda) equal}
\end{eqnarray}%
We denote the series in Eq. $\left( \ref{pi (lambda) equal}\right) $ as $%
a_{1},a_{2},...,a_{n}$, where $a_{1}=1,a_{2}=-\sum_{c_{i}^{x}\in \emph{%
Variables}}\frac{b_{i}}{\underline{p}_{i}\left( \underline{p}%
_{i}+b_{i}-1\right) },...,a_{n}=\dprod\nolimits_{c_{i}^{x}\in \emph{Variables%
}}\left( \frac{b_{i}}{\underline{p}_{i}\left( \underline{p}%
_{i}+b_{i}-1\right) }\right) $. $a_{3}$ is a sum of $n\left( n-1\right) $
elements where each element is a multiplication $\frac{b_{i}}{\underline{p}%
_{i}\left( \underline{p}_{i}+b_{i}-1\right) }\cdot \frac{b_{j}}{\underline{p}%
_{j}\left( \underline{p}_{j}+b_{j}-1\right) }$ for each combination $i\neq
j,c_{i}^{x},c_{j}^{x}\in \emph{Variables}$. Similarly, $a_{l}$ is a sum of $%
\binom{n}{l-1}$ elements for each combination of $l-1$ commodities in $\emph{%
Variables}$ where each element is a\ multiple of $\left( l-1\right) $
elements of the form $\dprod \left( \frac{b_{i}}{\underline{p}_{i}\left( 
\underline{p}_{i}+b_{i}-1\right) }\right) $. Each element in $a_{l}$ is at
least $\frac{\underline{p}_{1}\left( \underline{p}_{1}+b_{i}-1\right) }{b_{i}%
}$ times bigger than any element in $a_{l+1}$; however, there are $\left(
n-l\right) $ times more elements in $a_{l+1}$ than in $a_{l}$. Since $\frac{%
\underline{p}_{1}\left( \underline{p}_{1}+b_{i}-1\right) }{b_{i}\left(
n-l\right) }>1$, we have $\left\vert a_{l}\right\vert >\left\vert
a_{l+1}\right\vert $. Therefore, we can upper bound the series in Eq. $%
\left( \ref{pi (lambda) equal}\right) $%
\begin{eqnarray}
&&\dprod\nolimits_{c_{i}^{x}\in \emph{Variables}}\delta _{i}  \notag \\
&<&1-\sum_{c_{i}^{x}\in \emph{Variables}}\frac{b_{i}}{\underline{p}%
_{i}\left( \underline{p}_{i}+b_{i}-1\right) }+\sum_{\substack{ c_{i}^{x}\in 
\emph{Variables}  \\ c_{j}^{x}\in \emph{Variables}  \\ i\neq j}}\left( \frac{%
b_{i}}{\underline{p}_{i}\left( \underline{p}_{i}+b_{i}-1\right) }\cdot \frac{%
b_{j}}{\underline{p}_{j}\left( \underline{p}_{j}+b_{j}-1\right) }\right) .
\label{UB for delta_i}
\end{eqnarray}

Since $\forall i:\underline{p}_{1}\leq \underline{p}_{i}$ and $\forall
i:b\geq b_{i}>1$, we can upper bound the second summation in Eq. $\left( \ref%
{UB for delta_i}\right) $ by replacing $\underline{p}_{i}$ and $\underline{p}%
_{j}$ with $\underline{p}_{1}$ and $b_{i}$, $b_{j}$ in the numerator with $b$%
, and in the denominator with $1$. Therefore 
\begin{eqnarray}
&&\dprod\nolimits_{c_{i}^{x}\in \emph{Variables}}\delta _{i}  \notag \\
&<&1-\sum_{c_{i}^{x}\in \emph{Variables}}\frac{b_{i}}{\underline{p}%
_{i}\left( \underline{p}_{i}+b_{i}-1\right) }+\sum_{\substack{ c_{i}^{x}\in 
\emph{Variables}  \\ c_{j}^{x}\in \emph{Variables}  \\ i\neq j}}\left( \frac{%
b}{\underline{p}_{1}\left( \underline{p}_{1}+1-1\right) }\cdot \frac{b}{%
\underline{p}_{1}\left( \underline{p}_{1}+1-1\right) }\right)  \notag \\
&<&1-\sum_{c_{i}^{x}\in \emph{Variables}}\frac{b_{i}}{\underline{p}%
_{i}\left( \underline{p}_{i}+b_{i}-1\right) }+b^{2}n^{2}\cdot \left( \frac{1%
}{\underline{p}_{1}^{2}}\right) ^{2}  \notag \\
&=&1-\sum_{c_{i}^{x}\in \emph{Variables}}\frac{b_{i}}{\underline{p}%
_{i}\left( \underline{p}_{i}+b_{i}-1\right) }+\frac{b^{2}n^{2}}{\underline{p}%
_{1}^{4}}.  \label{pi(lambda)}
\end{eqnarray}%
According to Condition \ref{Cond 2.5}, $n<\underline{p}_{1}^{\frac{1}{6%
\widetilde{b}}}$, replacing $n$ with the upper bound of $\underline{p}_{1}^{%
\frac{1}{6\widetilde{b}}}$ in Eq. $\left( \ref{pi(lambda)}\right) $,%
\begin{eqnarray*}
&&\dprod\nolimits_{c_{i}^{x}\in \emph{Variables}}\delta
_{i}<1-\sum_{c_{i}^{x}\in \emph{Variables}}\frac{b_{i}}{\underline{p}%
_{i}\left( \underline{p}_{i}+b_{i}-1\right) }+\frac{b^{2}\underline{p}_{1}^{%
\frac{2}{6\widetilde{b}}}}{\underline{p}_{1}^{4}} \\
&=&1-\sum_{c_{i}^{x}\in \emph{Variables}}\frac{b_{i}}{\underline{p}%
_{i}\left( \underline{p}_{i}+b_{i}-1\right) }+\frac{b^{2}}{\underline{p}%
_{1}^{4-\frac{1}{3\widetilde{b}}}}.
\end{eqnarray*}%
Replacing $\dprod\nolimits_{c_{i}^{x}\in \emph{Variables}}\delta _{i}$ into
Eq. $\left( \ref{LB(TC_Variables(S))-UB(TC_Variables) after lambda}\right) $%
\begin{eqnarray*}
&&LB\left( TC_{\emph{Variables}}\right) -UB\left( TC_{\emph{Variables}%
}\right) \\
&>&\alpha _{c}\left( \sum_{c_{i}^{x}\in \emph{Variables}}\left( -\frac{b_{i}%
\overline{\alpha }_{v}}{\underline{p}_{i}\left( \underline{p}%
_{i}+b_{i}-1\right) }\right) +\overline{\alpha }_{v}\left( 1-\left(
1-\sum_{c_{i}^{x}\in \emph{Variables}}\frac{b_{i}}{\underline{p}_{i}\left( 
\underline{p}_{i}+b_{i}-1\right) }+\frac{b^{2}}{\underline{p}_{1}^{4-\frac{1%
}{3\widetilde{b}}}}\right) \right) \right) \\
&=&\alpha _{c}\overline{\alpha }_{v}\left( \sum_{c_{i}^{x}\in \emph{Variables%
}}\left( -\frac{b_{i}}{\underline{p}_{i}\left( \underline{p}%
_{i}+b_{i}-1\right) }\right) +\left( \sum_{c_{i}^{x}\in \emph{Variables}}%
\frac{b_{i}}{\underline{p}_{i}\left( \underline{p}_{i}+b_{i}-1\right) }-%
\frac{b^{2}}{\underline{p}_{1}^{4-\frac{1}{3\widetilde{b}}}}\right) \right)
\\
&=&-b^{2}\alpha _{c}\overline{\alpha }_{v}\underline{p}_{1}^{\frac{1}{3%
\widetilde{b}}-4}.
\end{eqnarray*}
\end{proof}

\begin{claim}
$LB\left( TC_{\emph{Clauses}}\left( S\right) \right) -TC_{\emph{Clauses}%
}\left( S^{\prime }\right) >b^{2}\alpha_{c}\overline{\alpha}_{v}\underline{p}%
_{1}^{\frac{1}{3\widetilde{b}}-4}$
\end{claim}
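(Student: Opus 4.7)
The plan is to compute the two sides of the inequality directly from Eqs.~$\left( \ref{EQ TC(clauses)}\right) $ and $\left( \ref{EQ LB(TC(clauses))}\right) $ and then reduce the resulting estimate to an asymptotic comparison controlled by Conditions~\ref{Cond 2} and~\ref{Cond 2.5}. First I would observe that since $S^{\prime }$ satisfies $\varphi $, the set $F$ corresponding to $S^{\prime }$ is empty, the convention $\prod\nolimits_{\emptyset }=1$ makes the joint-replenishment term in $TC_{\emph{Clauses}}\left( S^{\prime }\right) $ vanish, and so $TC_{\emph{Clauses}}\left( S^{\prime }\right) =\sum_{c_{r}^{\omega }\in \emph{Clauses}}\left( K_{c_{r}^{\omega }}/t_{c_{r}^{\omega }}^{\ast }+t_{c_{r}^{\omega }}^{\ast }h_{c_{r}^{\omega }}\right) $. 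Exactly the same standalone sum appears in $LB\left( TC_{\emph{Clauses}}\left( S\right) \right) $, so upon subtracting, only the joint-replenishment term of $LB\left( TC_{\emph{Clauses}}\left( S\right) \right) $ survives.

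Next I would lower bound that surviving term. By hypothesis $S$ leaves the clause $\omega _{r}$ unsatisfied, so $\omega _{r}\in F$ and every factor in $\prod\nolimits_{c_{r}^{\omega }\in F}\frac{t_{c_{r}^{\omega }}^{\ast }-1}{t_{c_{r}^{\omega }}^{\ast }}$ lies strictly below $1$. Hence the product is at most $\frac{t_{c_{r}^{\omega }}^{\ast }-1}{t_{c_{r}^{\omega }}^{\ast }}$, which yields $1-\prod \geq 1/t_{c_{r}^{\omega }}^{\ast }$. Combined with $K_{0}=1$ this gives
\begin{equation*}
LB\left( TC_{\emph{Clauses}}\left( S\right) \right) -TC_{\emph{Clauses}
}\left( S^{\prime }\right) \geq \frac{\alpha _{c}\underline{\alpha }_{v}}{
t_{c_{r}^{\omega }}^{\ast }}.
\end{equation*}
So the task reduces to proving $\alpha _{c}\underline{\alpha }_{v}/t_{c_{r}^{\omega }}^{\ast }>b^{2}\alpha _{c}\overline{\alpha }_{v}\underline{p}_{1}^{1/(3\widetilde{b})-4}$, i.e.\ after cancelling $\alpha _{c}$, $\underline{\alpha }_{v}/(\overline{\alpha }_{v}t_{c_{r}^{\omega }}^{\ast })>b^{2}\underline{p}_{1}^{1/(3\widetilde{b})-4}$.

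I would then attack this reduced inequality with two bounds. Since $t_{c_{r}^{\omega }}^{\ast }=P(z_{i})P(z_{j})P(z_{s})\leq \overline{p}_{n}^{3}$, Condition~\ref{Cond 2} gives $t_{c_{r}^{\omega }}^{\ast }<B^{3}\underline{p}_{1}^{3}$. For the ratio $\underline{\alpha }_{v}/\overline{\alpha }_{v}=\prod_{c_{i}^{x}\in \emph{Variables}}\delta _{i}$ I would reuse the expansion of $\prod \delta _{i}$ carried out in the proof of Claim~\ref{LB(TCvariables)-UB(TCvariables)}: each $\delta _{i}=1-b_{i}/(\underline{p}_{i}(\underline{p}_{i}+b_{i}-1))$ is within $O(b/\underline{p}_{1}^{2})$ of $1$, so $\prod \delta _{i}\geq 1/2$ for any admissible $\underline{p}_{1}$ (in fact $\prod \delta _{i}\to 1$). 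Substituting these two bounds reduces the target inequality to the purely numerical statement $\underline{p}_{1}^{1-1/(3\widetilde{b})}>2b^{2}B^{3}$.

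Finally I would verify this numerical statement using Condition~\ref{Cond 2.5} together with $B=(6\widetilde{b}\log n)^{\widetilde{b}}$: the left side is at least $n^{6\widetilde{b}-2}$ while the right side is $O\bigl(b^{2}(\log n)^{3\widetilde{b}}\bigr)$, and since $\widetilde{b}\geq 2$ the polynomial in $n$ dominates the polylogarithmic factor for every admissible input. The main obstacle is arithmetic bookkeeping rather than conceptual: one must make sure that the slack $\underline{\alpha }_{v}/\overline{\alpha }_{v}\geq 1/2$, the cubic loss $t_{c_{r}^{\omega }}^{\ast }\leq B^{3}\underline{p}_{1}^{3}$, and the exponent savings from Condition~\ref{Cond 2.5} combine to leave exactly a factor of $\underline{p}_{1}^{1/(3\widetilde{b})-4}$ on the right-hand side so that the bound obtained here precisely offsets the negative error term produced by Claim~\ref{LB(TCvariables)-UB(TCvariables)} in the subsequent comparison in Eq.~$\left( \ref{EQ vars vs. clauses LB(S)-UB(S')}\right) $.
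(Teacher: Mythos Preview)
Your proposal is correct and follows essentially the same route as the paper: cancel the standalone sums, isolate the joint-replenishment residual $\alpha _{c}\underline{\alpha }_{v}/t_{c_{r}^{\omega }}^{\ast }$, bound $t_{c_{r}^{\omega }}^{\ast }<B^{3}\underline{p}_{1}^{3}$ via Condition~\ref{Cond 2}, convert $\underline{\alpha }_{v}$ to $\overline{\alpha }_{v}\prod \delta _{i}$, and finish with the numerics from Condition~\ref{Cond 2.5}. The only cosmetic differences are that the paper takes $|F|=1$ exactly (by the reduction set up just before the claim) rather than bounding $\prod _{F}$ by one factor, and it uses the sharper estimate $\prod \delta _{i}>1-bn/\underline{p}_{1}^{2}$ in place of your cruder $\prod \delta _{i}\geq 1/2$; neither change affects the argument.
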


\begin{proof}
Using Eqs. $\left( \ref{EQ TC(clauses)}\right) $ and $\left( \ref{EQ
LB(TC(clauses))}\right) $ with $K_{0}=0$ (see Eq. $\left( \ref{DEF K_0}%
\right) $) we get: 
\begin{eqnarray*}
TC_{\emph{Clauses}}\left( S^{\prime }\right) &=&\sum_{c_{r}^{\omega }\in 
\emph{Clauses}}\left( \frac{K_{c_{r}^{\omega }}}{t_{c_{r}^{\omega }}^{\ast }}%
+t_{c_{r}^{\omega }}^{\ast }\cdot h_{c_{r}^{\omega }}\right) \\
LB\left( TC_{\emph{Clauses}}\left( S\right) \right) &=&\sum_{c_{r}^{\omega
}\in \emph{Clauses}}\left( \frac{K_{c_{r}^{\omega }}}{t_{c_{r}^{\omega
}}^{\ast }}+t_{c_{r}^{\omega }}^{\ast }\cdot h_{c_{r}^{\omega }}\right) + \\
&&\alpha _{c}\cdot \underline{\alpha }_{v}\cdot \left(
1-\dprod\nolimits_{c_{r}^{\omega }\in F}\frac{t_{c_{r}^{\omega }}^{\ast }-1}{%
t_{c_{r}^{\omega }}^{\ast }}\right) .
\end{eqnarray*}%
Since $F=\left\{ c_{r}^{\omega }\right\} $ and as defined in Eq. $\left( \ref%
{DEF clauses t*}\right) ,$ $t_{c_{r}^{\omega }}^{\ast }=P\left( z_{i}\right)
\cdot P\left( z_{j}\right) \cdot P\left( z_{s}\right) $, we get: 
\begin{eqnarray*}
&&LB\left( TC_{\emph{Clauses}}\left( S\right) \right) \\
&=&\sum_{c_{r}^{\omega }\in \emph{Clauses}}\left( \frac{K_{c_{r}^{\omega }}}{%
t_{c_{r}^{\omega }}^{\ast }}+t_{c_{r}^{\omega }}^{\ast }\cdot
h_{c_{r}^{\omega }}\right) +\alpha _{c}\cdot \underline{\alpha }_{v}\frac{1}{%
t_{c_{r}^{\omega }}^{\ast }} \\
&=&\sum_{c_{r}^{\omega }\in \emph{Clauses}}\left( \frac{K_{c_{r}^{\omega }}}{%
t_{c_{r}^{\omega }}^{\ast }}+t_{c_{r}^{\omega }}^{\ast }\cdot
h_{c_{r}^{\omega }}\right) +\alpha _{c}\cdot \underline{\alpha }_{v}\frac{1}{%
P\left( z_{i}\right) \cdot P\left( z_{j}\right) \cdot P\left( z_{s}\right) }.
\end{eqnarray*}%
Therefore,%
\begin{eqnarray*}
&&TC_{\emph{Clauses}}\left( S\right) -TC_{\emph{Clauses}}\left( S^{\prime
}\right) \\
&=&\alpha _{c}\cdot \underline{\alpha }_{v}\frac{1}{P\left( z_{i}\right)
\cdot P\left( z_{j}\right) \cdot P\left( z_{s}\right) }.
\end{eqnarray*}%
Since $\forall i:P\left( z_{i}\right) <\overline{p}_{n}<B\underline{p}_{1}$,
we can lower bound this expression by replacing $P\left( z_{i}\right)
,P\left( z_{j}\right) $ and $P\left( z_{s}\right) $ with $B\underline{p}_{1}$%
. Therefore 
\begin{eqnarray*}
&&TC_{\emph{Clauses}}\left( S\right) -TC_{\emph{Clauses}}\left( S^{\prime
}\right) \\
&>&\alpha _{c}\cdot \underline{\alpha }_{v}\frac{1}{B^{3}\underline{p}%
_{1}^{3}}.
\end{eqnarray*}%
Recall that according to Eq. $\left( \ref{alphav+ * lambda = alphav-}\right) 
$, $\overline{\alpha }_{v}\dprod\nolimits_{c_{i}^{x}\in \emph{Variables}%
}\delta _{i}=\underline{\alpha }_{v}$ and therefore, 
\begin{eqnarray*}
&&TC_{\emph{Clauses}}\left( S\right) -TC_{\emph{Clauses}}\left( S^{\prime
}\right) \\
&>&\alpha _{c}\cdot \overline{\alpha }_{v}\dprod\nolimits_{c_{i}^{x}\in 
\emph{Variables}}\delta _{i}\cdot \frac{1}{B^{3}\underline{p}_{1}^{3}}.
\end{eqnarray*}%
We can lower bound $\dprod\nolimits_{c_{i}^{x}\in \emph{Variables}}\delta
_{i}$ using Eq. $\left( \ref{pi (lambda) equal}\right) $,%
\begin{eqnarray*}
&&TC_{\emph{Clauses}}\left( S\right) -TC_{\emph{Clauses}}\left( S^{\prime
}\right) \\
&>&\alpha _{c}\cdot \overline{\alpha }_{v}\left( 1-\sum_{c_{i}^{x}\in \emph{%
Variables}}\frac{b_{i}}{\underline{p}_{i}\left( \underline{p}%
_{i}+b_{i}-1\right) }\right) \cdot \frac{1}{B^{3}\underline{p}_{1}^{3}}.
\end{eqnarray*}%
Since $\forall i:\underline{p}_{i}\geq \underline{p}_{1}$ and $\forall
i:b_{i}<b$, we can lower bound this expression by replacing $\underline{p}%
_{i}$ with $\underline{p}_{1}$ and $b_{i}$ in the numerator with $b$ and in
the denominator with $1$. Therefore:%
\begin{eqnarray*}
&&TC_{\emph{Clauses}}\left( S\right) -TC_{\emph{Clauses}}\left( S^{\prime
}\right) \\
&>&\alpha _{c}\cdot \overline{\alpha }_{v}\left( 1-\frac{bn}{\underline{p}%
_{1}^{2}}\right) \cdot \frac{1}{B^{3}\underline{p}_{1}^{3}}.
\end{eqnarray*}%
Substituting for $B$ according to Condition \ref{Cond 2} we have:%
\begin{eqnarray}
&&TC_{\emph{Clauses}}\left( S\right) -TC_{\emph{Clauses}}\left( S^{\prime
}\right)  \notag \\
&>&\frac{\alpha _{c}\cdot \overline{\alpha }_{v}}{\left( 6\widetilde{b}\log
n\right) ^{3\widetilde{b}}\underline{p}_{1}^{3}}\left( 1-\frac{bn}{%
\underline{p}_{1}^{2}}\right) >\frac{\alpha _{c}\cdot \overline{\alpha }_{v}%
}{n^{3\widetilde{b}}\underline{p}_{1}^{3}}\left( 1-\frac{bn}{\underline{p}%
_{1}^{2}}\right) ,  \label{TC_Clauses(S)-TC_Clauses(S') with n}
\end{eqnarray}%
were the second inequality holds for any $n>3536$ even for the upper bound
on $\widetilde{b}$ attained by the \emph{Polymath8} project \cite{PM2015}.
According to Condition \ref{Cond 2.5}, $n<\underline{p}_{1}^{\frac{1}{6%
\widetilde{b}}}$, replacing $n$ with the upper bound $\underline{p}_{1}^{%
\frac{1}{6\widetilde{b}}}$ into Eq. $\left( \ref%
{TC_Clauses(S)-TC_Clauses(S') with n}\right) $ we get,%
\begin{eqnarray*}
&&TC_{\emph{Clauses}}\left( S\right) -TC_{\emph{Clauses}}\left( S^{\prime
}\right) \\
&>&\frac{\alpha _{c}\cdot \overline{\alpha }_{v}}{\underline{p}_{1}^{3.5}}%
\left( 1-\frac{b}{\underline{p}_{1}^{2-\frac{1}{6\widetilde{b}}}}\right) >%
\frac{\alpha _{c}\cdot \overline{\alpha }_{v}}{\underline{p}_{1}^{3.5}}%
\left( 1-\frac{b}{\underline{p}_{1}^{2-\frac{1}{6\widetilde{b}}}}\right) \\
&>&\frac{\alpha _{c}\cdot \overline{\alpha }_{v}}{\underline{p}_{1}^{3.5}}%
\left( \frac{b^{2}}{\underline{p}_{1}^{0.5-\frac{1}{6\widetilde{b}}}}\right)
=\frac{b^{2}\alpha _{c}\overline{\alpha }_{v}}{\underline{p}_{1}^{4-\frac{1}{%
3\widetilde{b}}}},
\end{eqnarray*}%
where the last inequality holds for $\underline{p}_{1}>2362$ ($n\geq 2$
according to Condition \ref{Cond 2.5}) even for the upper bound on $%
\widetilde{b}$ attained by the \emph{Polymath8} project \cite{PM2015}.
\end{proof}

Therefore, 
\begin{eqnarray*}
&&LB\left( TC\left( S\right) \right) -UB\left( TC\left( S^{\prime }\right)
\right) \\
&=&TC_{\emph{Constants}}\left( S\right) +LB\left( TC_{\emph{Variables}%
}\left( S\right) \right) \\
&&+TC_{\emph{Clauses}}\left( S\right) -TC_{\emph{Constants}}\left( S^{\prime
}\right) -UB\left( TC_{\emph{Variables}}\right) -TC_{\emph{Clauses}}\left(
S^{\prime }\right) \\
&=&LB\left(TC_{\emph{Variables}}\left( S\right) \right) -UB\left(TC_{\emph{%
Variables}}\right) +TC_{\emph{Clauses}}\left( S\right) -TC_{\emph{Clauses}%
}\left( S^{\prime }\right) \\
&>&\frac{b^{2}\alpha _{c}\overline{\alpha }_{v}}{\underline{p}_{1}^{4-\frac{2%
}{3\widetilde{b}}}}-\frac{b^{2}\alpha _{c}\overline{\alpha }_{v}}{\underline{%
p}_{1}^{4-\frac{2}{3\widetilde{b}}}}=0.
\end{eqnarray*}

\begin{conclusion}
A solution to $\Gamma $ that reflects an assignment $\alpha $ that satisfies 
$\varphi $ costs less than any solution to $\Gamma $ that reflects an
assignment $\alpha ^{\prime }$ that does not satisfy $\varphi $. Therefore,
solving \emph{PJRP }is at least as hard as solving \emph{$3SAT.$}
\end{conclusion}

\section{\label{section - finite}$\mathcal{NP}$-Hardness of the Periodic
Joint Replenishment Problem with finite horizon}

The model of the finite time horizon is similar to the model of the infinite
time horizon; however, since the time horizon is finite it is possible that
for a commodity $c$ the last cycle will not be a whole one. In the finite
model we assume a time horizon of $T$ periods. Similarly to the infinite
time horizon, we analyzed the standalone \emph{problem }cost function. In a
case that $\func{mod}\left( T,t_{c}\right) =0\,$, for a commodity $c,$ the
last cycle is a full one. In this case, the average periodic cost, denoted
by $\tilde{g}\left( t_{c}\right) $ is equal to $g\left( t_{c}\right) $. The
expression for the average periodic cost as a function of the cycle time $%
t_{c}$ in the case $\func{mod}\left( T,t_{c}\right) \neq 0$ is a complex
one. In order to avoid this situation for the same reduction defined in
Section \ref{section- reduction}, we define 
\begin{equation*}
T=\dprod\nolimits_{i=2}^{\left( \overline{p}_{n}\right) ^{3}}i,
\end{equation*}

This guarantees that all the cycle times that were analyzed in Section \ref%
{section - optimality} are of the form $t_{c}$ where $\func{mod}\left(
T,t_{c}\right) =0$. Therefore, the observations from Section \ref{section -
optimality} apply for the finite horizon model. However, since $T$ is not
polynomial in $n$, the problem is $\mathcal{NP}$-hard but not necessarily
strongly $\mathcal{NP}$-hard.

\section{\label{section - Summary}Summary}

In this paper we answer the long-standing open question regarding the
computational complexity of \emph{PJRP} with integer cycle times for a
finite time horizon as well as for an infinite time horizon. We provided a
proof that \emph{PJRP} with integer cycle times and an infinite time horizon
is strongly $\mathcal{NP}$-hard and that \emph{PJRP} with integer cycle
times and a finite time horizon is $\mathcal{NP}$-hard.

Another important problems yet to be answered is defining the computational
complexity of \emph{PJRP} with non-integer cycle times and of the strict 
\emph{PJRP}.

\begin{acknowledgement}
We thank Professor Retsef Levi for challenging us to solve the long-lasting
question of the complexity of the PJRP.
\end{acknowledgement}

\begin{acknowledgement}
We thank and acknowledge the help of Noga Glassner in preparing this paper.
\end{acknowledgement}

\section*{Appendix: Proof of Lemma \protect\ref{Lemma Polynomial VP}.}

\begin{proof}
In our proof we show that\ there is a set of at least $nB$ pairs of
consecutive primes that satisfy Conditions \ref{Cond 1} and \ref{Cond 2}
within an interval $\left[ \underline{p}_{1},B\underline{p}_{1}\right] $ for
some $B$ and $\underline{p}_{1}$ that satisfy Condition \ref{Cond 2.5}.
Then, we show that there is a subset of at least $n$ pairs that satisfy
Condition \ref{Cond 3} as well. Finally, we show that this set could be
identified in $O\left( n^{6\widetilde{b}+1}\log ^{\widetilde{b}}n\right) $
time.

According to \cite{Z2013} there are at least $\frac{B\underline{p}_{1}}{\log
^{\widetilde{b}}B\underline{p}_{1}}$ $2-$tuple$\left( b\right) $ prime
pairs; hence, there are at least $\frac{B\underline{p}_{1}}{\log ^{%
\widetilde{b}}B\underline{p}_{1}}$ pairs of consecutive primes with a gap of
at most $b$ between them. In order to lower bound the number of $2-$tuple$%
\left( b\right) $ primes in an interval $\left[ \underline{p}_{1},B%
\underline{p}_{1}\right] $ we need an upper bound for the number of $2-$tuple%
$\left( b\right) $ primes smaller than $\underline{p}_{1}.$ We use an
extremely un-tight bound, assuming all primes smaller than $\underline{p}%
_{1} $ are $2-$tuple$\left( b\right) $ primes. According to \cite{RS1962}
there are at most $\frac{1.255\underline{p}_{1}}{\log \underline{p}_{1}}$
prime numbers smaller than $\underline{p}_{1}$. Hence, there are no more
than $0.5\frac{1.255\underline{p}_{1}}{\log \underline{p}_{1}}$ prime pairs
smaller than $\underline{p}_{1}$. Therefore, the number of $2-$tuple$\left(
b\right) $ prime pairs in an interval $\left[ \underline{p}_{1},B\cdot 
\underline{p}_{1}\right] $, denoted $N_{VP}$, satisfies: 
\begin{equation*}
N_{VP}\geq \frac{B\underline{p}_{1}}{\log ^{\widetilde{b}}\left( B\underline{%
p}_{1}\right) }-\frac{0.63\underline{p}_{1}}{\log \underline{p}_{1}}.
\end{equation*}%
Note that 
\begin{equation*}
B\underline{p}_{1}=\underline{p}_{1}{}^{\log _{\underline{p}_{1}}B}\cdot 
\underline{p}_{1}=\underline{p}_{1}^{1+\log _{\underline{p}_{1}}B};
\end{equation*}%
therefore, 
\begin{eqnarray}
N_{VP} &\geq &\frac{B\underline{p}_{1}}{\log ^{\widetilde{b}}\left( 
\underline{p}_{1}^{1+\log _{\underline{p}_{1}}B}\right) }-\frac{0.63%
\underline{p}_{1}}{\log \underline{p}_{1}}  \notag \\
&=&\frac{B\underline{p}_{1}}{\left( 1+\log _{\underline{p}_{1}}B\right) ^{%
\widetilde{b}}\log ^{\widetilde{b}}\underline{p}_{1}}-\frac{0.63\underline{p}%
_{1}}{\log \underline{p}_{1}}  \notag \\
&=&\frac{\underline{p}_{1}}{\log \underline{p}_{1}}\left( \frac{B}{\left(
1+\log _{\underline{p}_{1}}B\right) ^{\widetilde{b}}\log ^{\widetilde{b}-1}%
\underline{p}_{1}}-0.63\right) .  \label{Eq pi_z(x)}
\end{eqnarray}%
We set 
\begin{eqnarray}
B &=&\log ^{\widetilde{b}}\underline{p}_{1}  \label{B} \\
&\Longrightarrow &B^{\frac{\widetilde{b}-1}{\widetilde{b}}}=\log ^{%
\widetilde{b}-1}\underline{p}_{1}  \notag
\end{eqnarray}%
Note that: 
\begin{equation*}
\left( 1+\log _{\underline{p}_{1}}B\right) =\left( 1+\widetilde{b}\log _{%
\underline{p}_{1}}\log \underline{p}_{1}\right) <1+\widetilde{b}.
\end{equation*}%
Thus, for a large enough $x_{s}$ we have the numerator in Eq. $\left( \ref%
{Eq pi_z(x)}\right) $ satisfy:%
\begin{equation*}
\left( 1+\log _{\underline{p}_{1}}B\right) ^{\widetilde{b}}\log ^{\widetilde{%
b}-1}\underline{p}_{1}<\left( 1+\widetilde{b}\right) B^{\frac{\widetilde{b}-1%
}{\widetilde{b}}}<B
\end{equation*}%
and%
\begin{equation}
\frac{B}{\left( 1+\log _{\underline{p}_{1}}B\right) ^{\widetilde{b}}\log ^{%
\widetilde{b}-1}\underline{p}_{1}}>1  \label{eq B}
\end{equation}%
Substituting Eq. $\left( \ref{eq B}\right) $ into Eq. $\left( \ref{Eq
pi_z(x)}\right) $ we get:%
\begin{equation*}
N_{VP}>0.37\frac{\underline{p}_{1}}{\log \underline{p}_{1}}
\end{equation*}%
Next, we need to find $\underline{p}_{1}$ and show that it is not greater
than $n^{6\widetilde{b}}.$ That is, we need to find $\underline{p}_{1}$ such
that $\left[ \underline{p}_{1},B\underline{p}_{1}\right] $ contains at least 
$2Bn$ pairs of $2-$tuple$\left( b\right) $ primes:%
\begin{equation}
N_{VP}>0.37\frac{\underline{p}_{1}}{\log \underline{p}_{1}}>2Bn
\label{Eq 2Bn}
\end{equation}%
We substitute for $B$ using Eq. $\left( \ref{B}\right) $ and, in order to
satisfy Condition \ref{Cond 2.5}, we replace $n$ with the upper bound of $%
\underline{p}_{1}^{\frac{1}{6\widetilde{b}}};$ hence the condition in Eq. $%
\left( \ref{Eq 2Bn}\right) $ maintains that:%
\begin{eqnarray*}
0.37\frac{\underline{p}_{1}}{\log \underline{p}_{1}} &>&2\underline{p}_{1}^{%
\frac{1}{6\widetilde{b}}}\log ^{\widetilde{b}}\underline{p}_{1} \\
\underline{p}_{1}^{1-\frac{1}{6\widetilde{b}}} &>&5.4\log ^{\widetilde{b}+1}%
\underline{p}_{1}
\end{eqnarray*}%
Let us extract the base $2$ logarithm of both sides of the inequality:%
\begin{eqnarray*}
\left( 1-\frac{1}{6\widetilde{b}}\right) \log \underline{p}_{1} &>&\left( 
\widetilde{b}+1\right) \left( \log \log \underline{p}_{1}\right) +2.435 \\
\frac{\log \underline{p}_{1}}{\log \log \underline{p}_{1}} &>&\frac{%
\widetilde{b}+1+\frac{2.435}{\left( \log \log \underline{p}_{1}\right) }}{1-%
\frac{1}{6\widetilde{b}}}
\end{eqnarray*}%
Condition \ref{Cond 2.5} guaranties that for $n>2$ we have $\log \log 
\underline{p}_{1}>2.435$; hence, $\frac{2.435}{\left( \log \log \underline{p}%
_{1}\right) }<1.$ Let us look at the right side of the inequality:%
\begin{equation*}
\frac{\widetilde{b}+1+\frac{2.435}{\left( \log \log \underline{p}_{1}\right) 
}}{1-\frac{1}{6\widetilde{b}}}<\frac{\widetilde{b}+2}{1-\frac{1}{6\widetilde{%
b}}}=\frac{6\widetilde{b}^{2}+12\widetilde{b}}{6\widetilde{b}-1}<\frac{%
\left( 6\widetilde{b}-1\right) \left( \widetilde{b}+3\right) }{6\widetilde{b}%
-1}=\widetilde{b}+3
\end{equation*}%
Thus, it is sufficient to find a $\underline{p}_{1}$ that satisfies:%
\begin{eqnarray*}
&&\frac{\log \underline{p}_{1}}{\log \log \underline{p}_{1}}>\widetilde{b}+3.
\\
&\Rightarrow &\underline{p}_{1}>\log ^{\widetilde{b}+3}\underline{p}_{1}
\end{eqnarray*}%
in order to satisfy Conditions \ref{Cond 1}-\ref{Cond 2.5}. hence, for a
sufficiently large $n$ (and according to Conditions \ref{Cond 2.5} a
sufficiently large $\underline{p}_{1}$) the condition in Eq. $\left( \ref{Eq
2Bn}\right) $ is satisfied. That is, setting 
\begin{eqnarray*}
&&\underline{p}_{1}>n^{6\widetilde{b}} \\
&\Rightarrow &B=\log ^{\widetilde{b}}\underline{p}_{1}>\left( 6\widetilde{b}%
\log n\right) ^{\widetilde{b}}
\end{eqnarray*}%
for a sufficiently large $n$ guaranties that there are at least $Bn$ pairs
of $2-$tuple$\left( b\right) $ primes within $\left[ \underline{p}_{1},B%
\underline{p}_{1}\right] ;$ hence, there are at least $Bn$ pairs of
consecutive primes with a gap of at most $b$ between them.\footnote{%
The size of $n$ required to satisfy this constraint for the values of $b$
and $\widetilde{b}$ found by \cite{PM2015} is extremely large (greater than $%
2^{470}$). However, for any smaller $n$ empirical evidence show that the
twin prime conjecture \cite{Gu1994} and the first Hardy-Littlewood
conjecture \cite{HL1923} hold (Hardy and Wright \cite{HR1979} note that "the
evidence, when examined in detail, appears to justify the conjecture," and
Shanks \cite{S1993} stated "the evidence is overwhelming"). Using twin
primes the $n$ required drops dramatically. Thus, our reduction is
permissible for any $n\geq 12$ (the whole analysis and bounds become
tighter).}

If we chose the set $\emph{VP}_{2}$ from within $\left[ \underline{p}_{1},B%
\underline{p}_{1}\right] $ all pairs would satisfy Conditions \ref{Cond 2}
and \ref{Cond 2.5}.

Let us now greedily chose pairs of consecutive primes that satisfy Condition %
\ref{Cond 1} starting from the smallest pair greater than $\underline{p}_{1}$
and adding them to $\emph{VP}_{2}$ and $\emph{VP}$ as long as Condition \ref%
{Cond 3} is satisfied with respect to the elements already chosen to be in $%
\emph{VP}$.

Since $\underline{p}_{1}\leq B\underline{p}_{1}$, each prime in $\emph{VP}$
can be a factor of at most $B-1$ numbers within $\left[ \underline{p}_{1},B%
\underline{p}_{1}\right] .$ Since $\left\vert \emph{VP}\right\vert =2n$
there are at most $2n\left( B-1\right) $ numbers that are factored by an
element in $\emph{VP;}$ thus there are at most $2n\left( B-1\right) $ pairs
of consecutive primes that do not satisfy Condition \ref{Cond 3} and at
least $n$ that do. Hence, the sets $\emph{VP}_{2}$ and $\emph{VP}$ that
satisfy Conditions \ref{Cond 1}-\ref{Cond 3} are found in a range that is
polynomial to $n$.

Next we show that the sets $\emph{VP}_{2}$ and $\emph{VP}$ could be found in
polynomial time by showing that all primes within the range $\left[ 0,B%
\underline{p}_{1}\right] $ could be found in polynomial time.

Using Sieve of Eratosthenes \cite{R1996} the first $k$ prime numbers could
be found in $O\left( k^{2}\right) $ time. Since there is an upper bound of
no more than $1.25506\frac{B\underline{p}_{1}}{\log B\underline{p}_{1}}$
prime in the range $\left[ 0,B\underline{p}_{1}\right] $ (see \cite{RS1962}%
), finding the prime numbers in the range $\left[ 0,B\underline{p}_{1}\right]
$ can be done in 
\begin{equation*}
O\left( \left( \frac{B\underline{p}_{1}}{\log B\underline{p}_{1}}\right)
^{2}\right) ,
\end{equation*}%
which is polynomial. Once the primes are identified, the greedy method to
construct $\emph{VP}_{2}$ and $\emph{VP}$ requires $O\left( nB\underline{p}%
_{1}\right) =O\left( n^{6\widetilde{b}+1}\log ^{\widetilde{b}}n\right) $
time.
\end{proof}

\end{document}